\documentclass[peerreview,onecolumn,12pt,draftclsnofoot]{IEEEtran}
\usepackage{amsmath,amsthm}
\interdisplaylinepenalty=2500
\usepackage{amssymb,graphicx,url}
\usepackage{cite}
\usepackage{mathtools}
\usepackage{bm}
\usepackage{float}
\usepackage[capitalize]{cleveref}
\DeclareTextSymbolDefault{\DH}{T1}
\DeclareTextSymbolDefault{\DJ}{T1}
\usepackage{enumitem}
\setlist[enumerate]{leftmargin=.5in}
\setlist[itemize]{leftmargin=.5in}
\DeclarePairedDelimiter\bra{\langle}{\rvert}
\DeclarePairedDelimiter\ket{\lvert}{\rangle}
\newcommand{\kett}[1]{\bm{#1}}
\newcommand{\braa}[1]{\adj{\bm{#1}}}

\newcommand{\norm}[1]{\left\lVert {#1} \right\rVert}
\newcommand{\bbR}{\mathbb{R}}
\newcommand{\bbC}{\mathbb{C}}
\newcommand{\cP}{\mathcal{P}}
\newcommand{\cX}{\bm{\mathcal{X}}}
\newcommand{\cH}{\mathcal{H}}

\newcommand{\tar}{\sigma_1^2(X)+\sigma_2^2(X)}
\newcommand{\half}{\frac{1}{2}}

\newcommand{\iu}{\mathrm{i}\mkern1mu}
\newcommand{\trace}[1]{\mathrm{Tr}(#1)}
\usepackage{dsfont}
\newcommand{\I}{\mathds{I}}

\newcommand{\adj}[1]{#1^{\dagger}}
\newcommand{\trans}[1]{#1^{\intercal}}
\newcommand{\Real}[1]{\mathrm{Re}(#1)}
\newcommand{\Ima}[1]{\mathrm{Im}(#1)}
\newcommand{\etal}{{\textit{et al.}}}
\usepackage{xspace}
\newcommand{\setS}{\mathcal{S}}%

\newtheorem{definition}{Definition}
\newtheorem{proposition}[definition]{Proposition}
\newtheorem{lemma}[definition]{Lemma}

\newtheorem{theorem}[definition]{Theorem}
\newtheorem{corollary}[definition]{Corollary}
\newtheorem{conjecture}[definition]{Conjecture}

\hyphenation{op-tical net-works semi-conduc-tor}

\begin{document}

\title{A Matrix Inequality Related to the Entanglement Distillation Problem}
\author{Lilong Qian, Lin Chen, Delin Chu, Yi Shen
\thanks{This work was
  supported by the NUS Research Grant R-146-000-281-114, the NNSF of China (Grant No. 11871089), and the Fundamental
  Research Funds for the Central Universities (Grant Nos.  KG12080401 and ZG216S1902).
  {\sl (Corresponding author: Lin Chen)}}%
\thanks{L. Qian and D. Chu are with the Department of Mathematics, National University of Singapore, Singapore,
  119076, Singapore (e-mail: qian.lilong@u.nus.edu; matchudl@nus.edu.sg).}%
\thanks{L. Chen is with the School of
  Mathematics and Systems Science, Beihang University, Beijing, 100191, China, and also with the International
  Research Institute for Multidisciplinary Science, Beihang University, Beijing 100191, China
  (e-mail: linchen@buaa.edu.cn). }%
\thanks{Y. Shen is with the School of Mathematics and Systems Science,  Beihang University, Beijing, 100191, China
  (e-mail: yishen@buaa.edu.cn).}%
}

 \maketitle

\begin{abstract}
  The pure entangled state is of vital importance in the field of quantum information. The process of asymptotically
  extracting pure entangled states from many copies of mixed states via local operations and classical communication is
  called entanglement distillation. The entanglement distillability problem, which is a long-standing open problem, asks
  whether such process exists. The 2-copy undistillability of $4\times4$
  undistillable Werner states  has been reduced to the validness of the a matrix inequality, that is,
  the sum of the squares of the largest two singular values of matrix $A\otimes \I + \I \otimes B$ does not exceed $(3d-4)/d^2$
  with
  $A,B$ traceless $d\times d$ matrices and $\norm{A}_F^2+\norm{B}_F^2=1/d$ when $d=4$.
  The latest progress, made by {\L}.~Pankowski~\etal~[IEEE Trans. Inform. Theory, 56, 4085
    (2010)], shows that this conjecture holds when both matrices $A$ and $B$ are normal. In this paper, we prove that the conjecture
  holds when one of matrices $A$ and $B$ is normal and the other one is arbitrary. Our work makes solid progress towards
  this conjecture and thus the distillability problem.
\end{abstract}
\begin{IEEEkeywords}
Bound  entanglement,  entanglement  distillation, matrix inequality,  quantum information theory.
\end{IEEEkeywords}
\IEEEpubid{}
\IEEEpeerreviewmaketitle

\section{Introduction}
\IEEEPARstart{T}{he} entanglement is a fundamental resource in the field of quantum information \cite{Nielsen}.
It is of great importance
for superdense coding~\cite{Harrow_2004}, teleportation~\cite{Bouwmeester_1997}, quantum computing~\cite{Jozsa_2003},
and cryptography~\cite{Ekert_1991,Gisin_2002}. Although some mixed states can be used directly~\cite{Murao_2001}, pure
entangled states play an essential role in most quantum-information tasks \cite{Bennett_1993,Briegel_2001}. Obviously,
there is no pure state in nature due to the inevitable decoherence between the state and environment. Therefore,
asymptotically converting initially bipartite entangled mixed states into bipartite pure entangled states under local
operations and classical communications (LOCC) is a key step in quantum information processing. The above-mentioned
conversion is also known as entanglement distillation. It is natural to ask whether all entangled states can be
distilled. It is also famously known as the distillability problem. A bound entangled state of a bipartite system is one
which cannot be distilled. The phenomenon of bound entanglement lies in the centre of entanglement theory. Therefore,
the distillability problem has been a main open problem in entanglement theory for a long time.

In order to describe the distillability problem explicitly, we first introduce the basic mathematical preliminaries for
quantum information theory. Mathematically, any quantum state can be described by a positive semidefinite Hermitian
matrix of trace one \footnote{The condition of trace one is required for explaining quantum states by the hypothesis of
  quantum physics. For conveniently treating mathematical problems in quantum information such as the distillability
  problem, we may omit the condition unless stated otherwise.} , namely the {\em density matrix} or {\em density
  operator}~\cite{Nielsen}.  If the rank of density matrix is one, then we refer to it as a {\em pure state}. Otherwise,
we call it a {\em mixed state}.  For the composite system, the $N$-partite Hilbert space is described by the tensor
space
\begin{equation}
  \cH_1\otimes \cdots\otimes \cH_N
  = \mathrm{span}\{ \ket{x_1,\ldots,x_N},\ket{x_i}\in\cH_i\},
\end{equation}
where each $\cH_i$ is a Hilbert space corresponding to the $i$-th system.  The $N$-partite quantum state $\rho$ is a
positive semidefinite Hermitian operator acting on the space $\cH_1\otimes \nobreak\cH_2\otimes\cdots\otimes
\cH_N$. Specifically, $\rho$ is said to be {\em separable} if it admits the following decomposition:
\begin{equation}
  \rho = \sum_{i=1}^L p_i\ket{x_{i}^{(1)},x_{i}^{(2)},\ldots,x_{i}^{(N)}}\bra{x_{i}^{(1)},x_{i}^{(2)},\ldots,x_{i}^{(N)}},
\end{equation}
where $p_i> 0$ and $L>0$ Otherwise, the state is said to be {\em entangled}. It is NP-hard to to check whether a given
state is entangled despite of progress in the past decades~\cite{Hor09,Qian_sppt,Qian_CS,Qian_CS2}.

The {\em maximally entangled state} is a bipartite pure state which can be brought by a local change of basis to the
state
\begin{equation}
  \ket{\Phi_{\max}}
  =\frac{1}{\sqrt{d}}\sum_{i=1}^d\ket{i,i}\in\bbC^d\otimes \bbC^d.
\end{equation}
 Here the local change of basis
corresponds to the invertible local operator $A_1\otimes A_2$.  The maximally entangled state can be used for
transmitting qubits by means of teleportation~\cite{Ekert_1991}.  However, there do not exist pure and maximally
entangled states naturally.  Therefore, the idea of asymptotically or explicitly converting mixed entangled states into
maximally entangled states by using LOCC has been introduced by
Bennett~\etal~\cite{Bennett_1996,Bennett_1996_concentrating}, Deutsch~\etal~\cite{Deutsch_1996_privacy}, and
Gisin~\cite{Gisin_1996_hiden}. It has been a central topic in quantum information theory so
far~\cite{Bombin_2006,Fang_2019}.  It is known that the {\em entanglement distillation} is equivalent to extracting
maximally entangled states from mixed entangled states.  Now we can present the formal description of distillability as
follows:
\begin{definition}
  $\rho$ is said to be $K$-distillable or $K$-copy distillable if 
  $K$ copies of $\rho$ can be transformed arbitrarily close to $\ket{\Phi_{\max}}$
  via LOOC, that is
  \begin{equation}
    \underbrace{\rho\otimes \rho\otimes \cdots\otimes \rho}_{K \text{copies}}\xrightarrow{LOOC} \ket{\Phi_{\max}}\bra{\Phi_{\max}}.
  \end{equation}
  Otherwise, it is said to be $K$-undistillable or $K$-copy undistillable.  $\rho$ is said to be distillable if it is
  $K$-distillable for some number $K$. Otherwise, it is said to be undistillable or bound entangled.
\end{definition}

Based on the above definition, in order to determine whether a given state is distillable, we need to consider all
possible kinds of LOCC. This is a hopeless task in realistic situation. Fortunately, Horodecki
\etal~\cite{Horodecki_1998_Is} has constructed an equivalent definition of distillability. It turns the distillation
problem to a precisely stated mathematical result.
\begin{theorem}\label{thm:dis}
 Given a bipartite state $\rho$ acting on  $\cH_1\otimes\cH_2$, it  is called  $K$-distillable if and only if there exists a
  Schmidt-rank-two bipartite pure state $\ket{\psi}\in(\cH_1^{\otimes n})\otimes \nobreak(\cH_2^{\otimes n})$ such that
  \begin{equation}
    \langle \bm \psi, ({\rho^{\otimes n}})^\Gamma \bm \psi\rangle <0.
  \end{equation}
\end{theorem}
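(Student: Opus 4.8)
The plan is to prove both implications by reducing $K$-distillability to the existence of a local two-qubit filtering that produces a state with negative partial transpose (NPT), and then translating that NPT condition into the stated Schmidt-rank-two inequality. Throughout I identify $n=K$ and write $\Gamma$ for the partial transpose on the second subsystem. Two external facts will be taken as known: first, every LOCC map is a separable operation, so it may be written as $\Lambda(X)=\sum_k (A_k\otimes B_k)\,X\,(A_k\otimes B_k)^{\dagger}$; and second, a two-qubit state is distillable if and only if it is entangled, which by the Peres--Horodecki criterion is equivalent to being NPT.

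For the necessity direction ($\Rightarrow$), suppose $\rho$ is $K$-distillable. Then for $n=K$ there is an LOCC map carrying $\rho^{\otimes n}$ arbitrarily close to $\ket{\Phi_{\max}}\bra{\Phi_{\max}}$ on a $\bbC^2\otimes\bbC^2$ output. Writing this map in separable form and using that the partial transpose is linear while a sum of PPT operators is again PPT, I would argue that, since the output is close to the NPT two-qubit maximally entangled state and the NPT property is an open condition, the output is itself NPT; hence at least one branch $\sigma_k=(A_k\otimes B_k)\,\rho^{\otimes n}\,(A_k\otimes B_k)^{\dagger}$ must be NPT. Because $A_k,B_k$ map into $\bbC^2$, there is a two-qubit vector $\ket{\phi}$ with $\langle\phi,\sigma_k^{\Gamma}\,\phi\rangle<0$.

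The bridge between the two formulations is the identity
\[
\langle\phi,\bigl[(A\otimes B)\,X\,(A\otimes B)^{\dagger}\bigr]^{\Gamma}\phi\rangle
= \langle\psi, X^{\Gamma}\,\psi\rangle,
\qquad \ket{\psi}=(A^{\dagger}\otimes \conj{B}^{\dagger})\ket{\phi},
\]
which I would verify by pushing the partial transpose through the local operators, the second factor picking up a complex conjugation. Since $A^{\dagger}$ and $\conj{B}^{\dagger}$ each have rank at most two, $\ket{\psi}$ has Schmidt rank at most two; applying this with $X=\rho^{\otimes n}$ and the branch $\sigma_k$ above produces the required Schmidt-rank-two vector with $\langle\psi,(\rho^{\otimes n})^{\Gamma}\psi\rangle<0$. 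The sufficiency direction ($\Leftarrow$) runs the same identity in reverse: given $\ket{\psi}=\sum_{i=1}^{2} c_i\ket{a_i}\ket{b_i}$ with $\{\ket{a_i}\},\{\ket{b_i}\}$ orthonormal, I would define the local filters $A=\sum_i\ket{i}\bra{a_i}$ and $B=\sum_i\ket{i}\bra{b_i}$, obtain the two-qubit state $\sigma=(A\otimes B)\,\rho^{\otimes n}\,(A\otimes B)^{\dagger}$ that is NPT by the identity, and then invoke distillability of two-qubit entangled states together with the fact that local filtering is LOCC to conclude that $\rho^{\otimes n}$, and hence $\rho$, is distillable.

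I expect the main obstacle to be the necessity direction, specifically the step that converts an arbitrary successful LOCC distillation protocol into a single local two-qubit filtering: one must argue carefully that an approximate, multi-round, two-way protocol can be coarse-grained into the separable form above and that the NPT property survives the approximation by a continuity argument. The two-qubit distillability result, though itself deep and underlying the whole reduction, enters only as a cited black box.
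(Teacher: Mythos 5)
The paper offers no proof of this statement: it is imported verbatim from Horodecki, Horodecki and Horodecki \cite{Horodecki_1998_Is} and thereafter used as the working \emph{definition} of $K$-distillability, so there is nothing in the text to compare your argument against line by line. Judged on its own terms, your reconstruction is the standard proof from that reference and is essentially sound. The central identity is correct: pushing $\Gamma$ through a local Kraus operator gives $\bigl[(A\otimes B)X(A\otimes B)^{\dagger}\bigr]^{\Gamma}=(A\otimes\conj{B})\,X^{\Gamma}\,(A\otimes\conj{B})^{\dagger}$, so the pulled-back vector is $(\adj{A}\otimes\trans{B})\ket{\phi}$, whose Schmidt rank is bounded by $\mathrm{rank}(\adj{A})\leqslant 2$; and in the necessity direction the combination of $\mathrm{LOCC}\subset\mathrm{SEP}$, closedness of PPT under positive sums, and openness of the NPT condition near $\ket{\Phi_{\max}}\bra{\Phi_{\max}}$ does isolate a single NPT branch, exactly as you say. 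Two small points to tighten: in the sufficiency direction your filter $B=\sum_i\ket{i}\bra{b_i}$ is off by the complex conjugation you yourself flag (you need the filter built from $\conj{\ket{b_i}}$ so that the identity reproduces $\ket{\psi}$ exactly), and you should note that the filtered two-qubit operator has strictly positive trace (otherwise the negative expectation value would be impossible), so normalizing it is legitimate. The deep inputs --- two-qubit NPT states are distillable \cite{hhh97} and the separable form of LOCC --- are correctly identified as black boxes, which is exactly how the cited literature treats them.
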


To understand the theorem, we explain the Schmidt rank and notation $\Gamma$, respectively.

First, for any pure
state $\ket{\phi}\in\cH_1\otimes \cH_2$, there  exist orthonormal sets
$\{\ket{ u_1},\ket{ u_2},\ldots,\ket{ u_R}\}\subset \cH_1$ and
$\{\ket{ v_1},\ket{ v_2},\ldots,\ket{ v_R}\}\subset \cH_2$ such that
\begin{equation}
  \ket\phi  = \sum_{i=1}^Rp_i \ket{ u_i,v_i},\quad p_i>0.
\end{equation}
This decomposition is called the {\em Schmidt decomposition} of $\ket\phi$ and $R$ is called the {\em Schmidt rank} of
$\ket{\phi}$. Next we assume that $\dim\cH_1=M$ and $\dim\cH_2=N$. Denote by $E_{ij}$ the $M\times M$ matrix whose
elements are all zero, except that the $(i,j)$-th entry is one. Hence, any given state $\rho$ can be represented by
\( \rho = \sum_{i,j} E_{ij}\otimes \rho_{ij}\), where $\rho_{ij}$ are operators acting on the Hilbert space $\cH_2$.
The partial transpose of $\rho$ is defined by \( \rho^\Gamma= \sum_{ij}E_{ji}\otimes \rho_{ij}\).

From now on, we use Theorem~\ref{thm:dis} as the definition of distillability.  The state $\rho$ is called PPT (Positive
Partial Transposed) if $\rho^\Gamma$ is positive semidefinite, i.e., $\rho^\Gamma \geqslant 0$.  Otherwise, $\rho$ is
called NPT (Non-positive Partial Transposed).  According to the Theorem~\ref{thm:dis}, all the PPT entangled states are
undistillable, i.e. the bound entangled states. The existence of bound entanglement is striking, since it implies
irreversibility: to create them by LOCC one needs pure entanglement, but no pure entanglement can be obtained back from
them \cite{pphh2010}.  The famous Horodecki-Peres criterion \cite{hhh96,peres1996} tells that all NPT states are
entangled.  The question ``Are all the NPT states are distillable'' remains open due to the importance of pure states
and bound entanglement. The question is equivalent to distillability problem. Despite many efforts devoted to the
distillability problem over the past decades \cite{dss00,dcl00,pphh2010,vd06,br03,cc08,hhh97,rains1999,cd11jpa,cd16pra},
it is still an open problem.  However, some partial solutions have been found. For example, entangled states of ranks
$2$ and $3$~\cite{Horodecki_2003,cc08,hhh97}, $2\otimes N$ NPT states~\cite{dcl00}, and $M\otimes N$ NPT states of rank
at most $\max\{M,N\}$~\cite{Horodecki_2003,cd11jpa} have been proven to be distillable. In Ref.~\cite{cd16pra}, the
authors proved that rank-four and two-qutrit NPT bipartite states are distillable.  Moreover, it has been shown in
Ref.~\cite{dss00} that all NPT bipartite states can be locally converted into the NPT Werner states. Hence, it suffices
to consider the distillability problem of Werner states on $\bbC^d\otimes \bbC^d$. The state is defined as
\begin{equation}\rho_W(\alpha)=\frac{\I + \alpha \sum_{i,j=1}^d E_{ij}\otimes
    E_{ji}}{d^2+\alpha d},\end{equation} where $\alpha \in [-1,1]$.  The following results divide the Werner states into three
different cases.
\begin{proposition}[D. P. DiVincenzo \etal~\cite{hh1999, Rungta, dss00,dcl00}]
  The Werner states $\rho_W(\alpha)$ are
  \begin{enumerate}[label=(\alph*)]
    \item separable for $-1\leqslant \alpha\leqslant \frac{1}{d}$;
    \item NPT and one-distillable for $\half < \alpha\leqslant 1$;
    \item NPT and one-undistillable for $\frac{1}{d}<\alpha\leqslant \frac{1}{2}$.
  \end{enumerate}
\end{proposition}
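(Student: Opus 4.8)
The plan is to diagonalize everything through the flip operator $F=\sum_{i,j}E_{ij}\otimes E_{ji}$, which satisfies $F^2=\I$ and hence carries only the eigenvalues $\pm1$, the $+1$-eigenspace being the symmetric subspace of $\bbC^d\otimes\bbC^d$ (dimension $d(d+1)/2$) and the $-1$-eigenspace the antisymmetric subspace (dimension $d(d-1)/2$). Since $\rho_W(\alpha)$ is a function of $F$ alone, its spectrum and positivity are immediate, and the whole classification reduces to understanding two operators: $F$ itself (for separability) and its partial transpose (for the NPT and distillability statements). The first computation I would carry out is $F^\Gamma=d\,\proj{\Phi_{\max}}$: relabelling indices in $F^\Gamma=\sum_{i,j}E_{ji}\otimes E_{ji}$ exhibits a rank-one operator proportional to the maximally entangled projector. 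Consequently $\rho_W(\alpha)^\Gamma$ is a positive multiple of $\I$ perturbed by a single rank-one term along $\ket{\Phi_{\max}}$, and the sign of the resulting distinguished eigenvalue is exactly what separates the PPT region from the NPT region; solving for it produces the threshold $\alpha=1/d$.

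For part (a) I would use the defining $U\otimes U$-invariance of the Werner family. By Werner's symmetry (twirling) argument, every such invariant state is a convex combination of the normalized projectors $\tfrac{2}{d(d\pm1)}\cdot\tfrac{\I\pm F}{2}$, and it is separable if and only if $\trace{F\rho_W(\alpha)}\geqslant 0$. Using $\trace{F}=d$ and $\trace{F^2}=\trace{\I}=d^2$, this flip expectation is a single ratio of linear functions of $\alpha$ whose sign changes precisely at $\alpha=1/d$, matching the stated boundary. The ``only if'' half is the positivity of the partial transpose computed above (PPT is necessary for separability), while the ``if'' half I would establish by writing down an explicit product-state decomposition on the separable side --- equivalently, by invoking the fact that the high symmetry of the Werner family forces PPT and separability to coincide, so that no PPT-entangled Werner state exists.

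For parts (b) and (c) I would apply \Cref{thm:dis} with a single copy ($n=1$), so that one-distillability is equivalent to the existence of a Schmidt-rank-two $\ket\psi\in\bbC^d\otimes\bbC^d$ with $\bra\psi\rho_W(\alpha)^\Gamma\ket\psi<0$. Because $\rho_W(\alpha)^\Gamma$ differs from a scalar multiple of $\I$ only by the rank-one term along $\ket{\Phi_{\max}}$, this expectation reduces to a condition purely on the overlap $\abs{\langle\Phi_{\max}\vert\psi\rangle}^2$: the state is one-distillable exactly when some Schmidt-rank-two $\ket\psi$ makes this overlap large enough to flip the sign. The crux is therefore the variational problem of maximizing $\abs{\langle\Phi_{\max}\vert\psi\rangle}^2$ over normalized Schmidt-rank-two vectors. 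Since $\ket{\Phi_{\max}}$ has all Schmidt coefficients equal to $1/\sqrt d$, truncating to the best two of them (e.g. $\ket\psi=\tfrac{1}{\sqrt2}(\ket{11}+\ket{22})$) yields the maximal value $2/d$, and no rank-two vector can do better. Feeding $2/d$ into the sign condition produces the single extra threshold $\alpha=\half$: above it the truncated vector certifies one-distillability (part (b)), while at or below it the bound $\abs{\langle\Phi_{\max}\vert\psi\rangle}^2\leqslant 2/d$ forces $\bra\psi\rho_W(\alpha)^\Gamma\ket\psi\geqslant0$ for every admissible $\ket\psi$, i.e. one-undistillability (part (c)).

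I expect the main obstacle to be the two tightness arguments rather than the threshold arithmetic. On the separable side, producing an honest product-state decomposition (or cleanly justifying that symmetry collapses PPT onto separability) is the delicate point, since positivity of the partial transpose is in general only necessary for separability. On the distillability side, the real content is the sharp identity $\max\abs{\langle\Phi_{\max}\vert\psi\rangle}^2=2/d$ over Schmidt-rank-two states: this is a small majorization/truncation statement that must be shown to be \emph{attained} and not merely estimated, because the exact constant $2/d$ is what pins the boundary at $\alpha=\half$ and thereby separates the one-distillable from the one-undistillable regime.
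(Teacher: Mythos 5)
The paper does not prove this proposition at all --- it is imported verbatim from the cited references (Werner's twirling argument for (a), and DiVincenzo \etal\ / D\"ur \etal\ for (b),(c)) --- so there is no in-paper proof to compare against. Your reconstruction is essentially the standard argument from those references and its skeleton is sound: $F^\Gamma=\sum_{ij}E_{ji}\otimes E_{ji}=d\proj{\Phi_{\max}}$ is correct under the paper's convention $\rho^\Gamma=\sum_{ij}E_{ji}\otimes\rho_{ij}$, so $\rho_W^\Gamma$ is indeed a scalar plus a rank-one perturbation along $\ket{\Phi_{\max}}$ and the PPT/NPT boundary is governed by one eigenvalue; the separability criterion $\trace{F\rho_W}\geqslant 0$ together with Werner's explicit separable decomposition on that side is the right way to get (a); and for (b),(c) the reduction of $\langle\bm\psi,\rho_W^\Gamma\bm\psi\rangle<0$ to the overlap $\abs{\langle\Phi_{\max}|\psi\rangle}^2$, combined with the sharp bound $\abs{\langle\Phi_{\max}|\psi\rangle}^2\leqslant 2/d$ for Schmidt-rank-two states (i.e.\ $\abs{\trace{M}}\leqslant\sigma_1(M)+\sigma_2(M)\leqslant\sqrt{2}\,\norm{M}_F$ for the rank-two coefficient matrix, with equality at $\tfrac{1}{\sqrt2}(\ket{11}+\ket{22})$), is exactly what pins the threshold at $\alpha=\half$. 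You correctly identify the two genuinely nontrivial points (the separable decomposition on the PPT side, and attainment of $2/d$).

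One caveat you should make explicit: if you run your computation literally on the paper's displayed definition $\rho_W(\alpha)\propto\I+\alpha F$, you get $\rho_W^\Gamma\propto\I+\alpha d\proj{\Phi_{\max}}$ and $\trace{F\rho_W}\propto 1+\alpha d$, which places the separable/NPT boundary at $\alpha=-1/d$ and the distillability boundary at $\alpha=-\half$, i.e.\ the mirror image of the proposition. The stated thresholds match the convention $\rho_W(\alpha)\propto\I-\alpha F$ used in \cite{dss00,dcl00}; the paper's definition carries a sign slip relative to its own proposition. Your sketch silently lands on the correct (literature) thresholds, so you should either adopt the $\I-\alpha F$ normalization from the outset or track the sign through the two eigenvalue computations --- as written, the step ``solving for it produces the threshold $\alpha=1/d$'' does not follow from the formula you started from.
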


\noindent Hence to investigate the distillability problem, it suffices to consider case (c). We need investigate whether
$\rho_w(\alpha)$ is $K$-distillable for some $K\geqslant 2$. It is proved that if $\rho_W(\half)$ is
$K$-undistillable, then the states $\rho_W(\alpha),\;\frac{1}{d}<\alpha\leqslant \half,$ are $K$-undistillable
\cite{dss00}.  So we only consider the distillability of $\rho_w(\half)$. Nevertheless it is widely believed that
$\rho_w(\half)$ is not distillable~\cite{dss00,dcl00,Clarisse06}. Some equivalent formulations and evidence for the
validity of the distillability problem are provided in Ref.~\cite{okovi2016}.

In this paper we investigate the $2$-distillability of Werner states in $\bbC^4\otimes \bbC^4$. It is proved in
Ref.~\cite{pphh2010} that these states are two-undistillable if and only if the following conjecture holds.

\begin{conjecture}%
  \label{cj:main-4}
   Let $\sigma_i(X)$ be the $i$-th largest singular value of $X$. Then
  \begin{equation}
    \begin{aligned}
      \label{eq3}
    \sup_{X\in\cX_4}  \tar \leqslant \frac{1}{2}.
    \end{aligned}
  \end{equation}
\end{conjecture}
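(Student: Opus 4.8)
The plan is to attack the full supremum over $\cX_4$ by recasting it as a constrained spectral optimization and analyzing its maximizers. Since $A,B$ range over the traceless matrices on the sphere $\norm{A}_F^2+\norm{B}_F^2=\tfrac14$, the feasible set is compact and the supremum is attained, so I would work with an actual maximizer $X=A\otimes\I+\I\otimes B$. A useful preliminary normalization: because $A,B$ are traceless, the cross terms in $\trace{\adj X X}$ vanish and $\norm{X}_F^2=d\bigl(\norm{A}_F^2+\norm{B}_F^2\bigr)=1$. Hence $\sum_{i=1}^{16}\sigma_i^2(X)=1$, and \cref{cj:main-4} is exactly the assertion that the two largest squared singular values carry at most half of the total Frobenius energy. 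This energy-splitting viewpoint is what I would exploit: the tensor structure should force the spectrum of $X$ to be spread out enough that no two singular directions can absorb more than half.

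Next I would linearize through vectorization. Identifying $\bbC^4\otimes\bbC^4$ with the $4\times 4$ matrices via $\ket{i,j}\mapsto E_{ij}$, the operator $X$ acts as the Sylvester operator $M\mapsto AM+M\trans B$; since $\trans B$ is again traceless with the same Frobenius norm (and normal iff $B$ is), I rename $\trans B$ as $B$ and study $\mathcal S(M)=AM+MB$. By Ky Fan's maximum principle,
\begin{equation}
  \tar=\max\Bigl\{\textstyle\sum_{k=1}^{2}\norm{AM_k+M_kB}_F^2:\;\trace{\adj{M_j}M_k}=\delta_{jk}\Bigr\}.
\end{equation}
Expanding the summand and introducing the positive semidefinite matrices $G=\sum_k M_k\adj{M_k}$ and $H=\sum_k \adj{M_k}M_k$, each of trace $2$, gives
\begin{equation}
  \tar=\trace{\adj A A\,G}+\trace{B\adj B\,H}+2\,\Real{\trace{\textstyle\adj A\sum_k M_kB\adj{M_k}}}.
\end{equation}
The first two terms are controlled by $\norm{A}_F,\norm{B}_F$ together with the trace constraints on $G,H$; the entire difficulty is concentrated in the cross term.

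I would then impose first-order optimality. Writing Lagrange multipliers for the norm budget and for the tracelessness of $A,B$, stationarity in $A$ forces $\sum_k(AM_k+M_kB)\adj{M_k}$ to be a linear combination of $A$ and $\I$, stationarity in $B$ forces $\sum_k\adj{M_k}(AM_k+M_kB)$ to be a linear combination of $B$ and $\I$, and the optimal $M_1,M_2$ must span the top eigenspace of $\adj{\mathcal S}\mathcal S$. The aim of this step is structural: using the $U\otimes V$ invariance of singular values, to show that at any maximizer the data can be conjugated into a canonical form supported on a small subspace, reducing the $4\times 4$ problem either to a block-$2\times 2$ instance or to a configuration amenable to the spectral analysis that succeeds when a factor is normal.

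The main obstacle, and the reason the step from one normal factor to two arbitrary factors is genuinely hard, is the fully non-normal regime. When $A$ (or $B$) is far from normal -- the extreme case being a nilpotent Jordan-type matrix -- there is no eigenbasis in which $X$ decouples, the singular values of $X$ bear no simple relation to the eigenvalue sums $\{a_i+b_j\}$ that drove the normal analysis, and the cross term $2\,\Real{\trace{\adj A\sum_k M_kB\adj{M_k}}}$ can be large and of either sign. Controlling this term -- equivalently, showing that the interaction between the non-normal parts of $A$ and $B$ cannot funnel more than half of the unit Frobenius energy into the top two singular directions -- is precisely where the argument must do new work, and is the crux I expect to be the central difficulty in closing \cref{cj:main-4}.
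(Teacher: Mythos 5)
There is a genuine gap, and it is one you yourself flag: your text is a research plan, not a proof. The reformulation steps are all correct --- the normalization $\norm{X}_F^2=d\left(\norm{A}_F^2+\norm{B}_F^2\right)=1$ using tracelessness, the vectorization of $X$ into the Sylvester map $M\mapsto AM+M\trans{B}$, the Ky Fan variational characterization, and the expansion into $\trace{\adj{A}A\,G}+\trace{B\adj{B}\,H}+2\,\Real{\trace{\adj{A}\sum_k M_kB\adj{M_k}}}$ with $\trace{G}=\trace{H}=2$ --- but the argument stops exactly at the crux. The pivotal step, ``using the $U\otimes V$ invariance\ldots to show that at any maximizer the data can be conjugated into a canonical form supported on a small subspace,'' is asserted with no mechanism: the stationarity conditions you write down (that $\sum_k(AM_k+M_kB)\adj{M_k}$ lies in $\mathrm{span}\{A,\I\}$, etc.) do not by themselves produce any such canonical form, and no bound on the cross term is ever derived. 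Since you explicitly concede that controlling this term ``is precisely where the argument must do new work,'' the proposal does not establish \cref{cj:main-4}.

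For calibration: \cref{cj:main-4} is stated in the paper as a conjecture and remains open; the paper proves only the case where one of $A,B$ is normal (\cref{main-result1}), and even that partial case requires substantial machinery. The paper's route is quite different from yours: it unitarily diagonalizes the normal factor so that $X=\oplus_{i=1}^d(a_i\I+B)$, splits into cases according to whether the top two singular values come from one block or two, handles the one-block case by reducing to the normal-normal result of Pankowski~\etal{} (\cref{lem:normal}, via \cref{lem:normal2} applied to $A$ and $\mathrm{diag}(B)$), and handles the two-block case through a chain of constrained optimizations resolved by two bespoke homogeneity/KKT lemmas (\cref{lem:y:square,lem:y:same}), finally lifting from real to complex matrices by splitting $X=X_1+\iu X_2$ and applying the real case to each part. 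Your compactness-plus-Lagrange-multiplier instinct is in the same spirit as the paper's optimization lemmas, but the paper only gets traction because normality of one factor kills the very cross-coupling your plan leaves uncontrolled; in the fully non-normal regime you target, no reduction comparable to the block decomposition is available, and that is where your sketch, like the literature, currently ends.
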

Here   $\cX_d$ denotes the set of matrices $ X=A\otimes \I+\I\otimes\nobreak B$ which satisfies the conditions
$\trace A=\trace B=0$, and  $\norm{A}_F^2+\norm{B}_F^2=\frac{1}{d}$.

One can show that Conjecture \ref{cj:main-4} is a special case of the following general conjecture.

\begin{conjecture} \label{cj:high_var}
  \begin{equation}
    \label{complex}
    \sup_{X\in\cX_d}\sigma_1^2(X) + \sigma_2^2(X) \leqslant \frac{3d-4}{d^2},\quad d\geqslant 4.
  \end{equation}
\end{conjecture}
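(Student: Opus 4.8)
The plan is to exploit the ``sum with identity'' structure of $X$ together with the freedom to conjugate $A$ and $B$ by independent unitaries. First I would fix the scale: because $\trace A=\trace B=0$, the cross terms in $\norm{X}_F^2=\trace{(\adj A\otimes\I+\I\otimes \adj B)(A\otimes\I+\I\otimes B)}$ vanish (for instance $\trace{\adj A\otimes B}=\trace{\adj A}\trace B=0$, and similarly for the other), leaving $\norm{X}_F^2=d(\norm{A}_F^2+\norm{B}_F^2)=1$. Hence $\sum_k\sigma_k^2(X)=1$, and the conjecture says that the two largest singular values can carry at most a $\frac{3d-4}{d^2}$ fraction of this total (for $d=4$, exactly $\half$).

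Next I would normalize the configuration. Conjugating $X$ by $U\otimes V$ sends $A\mapsto \adj U A U$ and $B\mapsto \adj V B V$ and leaves the singular values unchanged, while the swap operator shows $A\otimes\I+\I\otimes B$ is unitarily equivalent to $B\otimes\I+\I\otimes A$; so without loss of generality the normal matrix is $A$, and I may diagonalize it as $A=\mathrm{diag}(a_1,\dots,a_d)$ with $\sum_i a_i=0$, leaving $B$ arbitrary (in Schur form if convenient) with $\trace B=0$. With $A$ diagonal the operator is block diagonal, $X=\bigoplus_{i=1}^d(a_i\I+B)$, so the singular values of $X$ are exactly the union of those of the blocks $M_i:=a_i\I+B$, and $\tar$ is the sum of the two largest elements of the multiset $\{\sigma_k^2(M_i)\}_{i,k}$.

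The core is then a case analysis according to whether the two dominant singular values lie in the same block or in two distinct blocks. Using $\sigma_1^2(M_i)+\sigma_2^2(M_i)=\max_{\mathrm{rank}\,P=2}\trace{P\adj{M_i}M_i}$ and the expansion $\adj{M_i}M_i=\abs{a_i}^2\I+\bar a_i B+a_i\adj B+\adj B B$, each block objective splits into a shift part $2\abs{a_i}^2$, a cross part $2\Real{\bar a_i\trace{PB}}$, and a curvature part $\trace{P\adj B B}$; the latter is bounded by $\sigma_1^2(B)+\sigma_2^2(B)$ and the former by $\sigma_1(B)+\sigma_2(B)$ (the trace of a two-dimensional compression of $B$), both via Ky Fan. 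Feeding these into the two cases and invoking $\sum_i\abs{a_i}^2+\norm{B}_F^2=\frac1d$, $\sum_i a_i=0$, $\trace B=0$, I would reduce the whole problem to a finite-dimensional optimization in the scalars $\abs{a_i}^2$ and the leading singular values of $B$, and verify that its maximum equals $\frac{3d-4}{d^2}$, matching the extremal normal configuration.

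The hard part is precisely the non-normality of $B$. When $B$ is normal the blocks $M_i=a_i\I+B$ are simultaneously diagonalized and have the explicit singular values $\abs{a_i+b_j}$, which is what collapses the Pankowski~\etal~argument into a clean scalar optimization; for non-normal $B$ no such formula is available, and the rank-two subspace maximizing $\trace{P\adj B B}$ need not maximize the cross term $\Real{\bar a_i\trace{PB}}$, so the three contributions resist being optimized separately without forfeiting the sharp constant. Keeping the estimate tight thus demands a simultaneous treatment of the shift direction $a_i$ and the singular structure of $B$, and genuine use of both trace-zero constraints to beat the naive triangle-inequality bound, which overshoots $\frac{3d-4}{d^2}$. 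I expect the fully general situation, where neither $A$ nor $B$ is normal and the block-diagonalization is lost, to lie beyond this method.
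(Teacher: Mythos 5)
There is a fundamental mismatch here: the statement you were asked to prove is Conjecture~\ref{cj:high_var}, which is \emph{open} --- the paper does not prove it either. The paper's contribution (Theorem~\ref{main-result1}) is only the special case in which one of $A,B$ is normal; the fully general case, which you correctly identify at the end of your proposal as lying ``beyond this method,'' is precisely what remains unresolved. So your proposal cannot be judged correct as a proof of the stated result, and you were right to flag this yourself. Your preliminary reductions are sound and coincide with the paper's: the computation $\norm{X}_F^2=d(\norm{A}_F^2+\norm{B}_F^2)=1$ using tracelessness, the local-unitary freedom and the $A\leftrightarrow B$ swap (item~4 of Theorem~\ref{le:cj}), the diagonalization of the normal factor giving $X=\oplus_i(a_i\I+B)$, and the split into the case where the two dominant singular values come from one block versus two blocks --- this is exactly the skeleton of the paper's Lemmas~\ref{lem:anyB} and~\ref{lem:from_two}.

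The genuine gap, even restricted to the normal-$A$ case, is in your plan to bound the three contributions $2\abs{a_i}^2$, $2\Real{\bar a_i\trace{PB}}$, and $\trace{P\adj BB}$ separately via Ky Fan: as you concede, decoupling them forfeits the sharp constant and overshoots $\frac{3d-4}{d^2}$, and you offer no mechanism for the required simultaneous treatment. The paper supplies that mechanism, and it is substantial. For the same-block case it does \emph{not} bound the compression of $\adj BB$ by Ky Fan; instead it rotates the top two eigenvectors to $\kett{e}_1,\kett{e}_2$, writes the objective exactly as $(a_1+b_{11})^2+(a_1+b_{22})^2+\sum_{i\le 2}\sum_{j\ne i}b_{ij}^2$, applies the Pankowski~\etal{} normal-normal bound (Corollary~\ref{lem:normal2}) to the pair $(A,\mathrm{diag}(B))$, and absorbs the off-diagonal terms using the norm constraint with a favorable sign. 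For the two-block case it sets up explicit constrained optimizations and proves two structural lemmas (Lemmas~\ref{lem:y:square} and~\ref{lem:y:same}) for degree-two homogeneous objectives under a single quadratic constraint, which force most off-diagonal entries of $B$ to vanish or force the value down to $2/d\le\frac{3d-4}{d^2}$; the surviving configuration is handled by a KKT analysis in a change of variables and ultimately reduced back to the normal-normal case. None of this is recoverable from separate Ky Fan estimates, so your outline, while pointed in the right direction, stops exactly where the real work begins.
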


Existing results on Conjecture's \ref{cj:main-4} and \ref{cj:high_var} can be summarized in the following three
theorems.

\begin{theorem}[Y. Shen \& L. Chen \cite{ysdis2018}] \label{le:cj} The following five statements are equivalent:
  \begin{enumerate}
    \item Conjecture  \ref{cj:main-4} (also Conjecture \ref{cj:high_var}) holds.
    \item Conjecture  \ref{cj:main-4} (also Conjecture \ref{cj:high_var}) holds when $X$ is replaced by $X^{{}^\intercal}$, $\overline{X}$ or $\adj X$.
    \item Conjecture  \ref{cj:main-4} (also Conjecture \ref{cj:high_var}) holds when $X$ is replaced by any matrix locally unitarily similar to $X$.
    \item Conjecture  \ref{cj:main-4} (also Conjecture \ref{cj:high_var}) holds when $X$ is replaced by $\I\otimes  A+B\otimes \I$.
    \item Conjecture  \ref{cj:main-4} (also Conjecture \ref{cj:high_var}) holds when $X$ is replaced by $e^{\iu \theta}X$ for some $\theta\in[0,2\pi]$.
  \end{enumerate}
\end{theorem}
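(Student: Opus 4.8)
The plan is to establish each of (2)--(5) as equivalent to (1) by a single mechanism. The key observation is that the quantity being bounded, $\sigma_1^2(X)+\sigma_2^2(X)$, together with the feasible set $\cX_d$, is invariant under every operation appearing in the theorem. Concretely, if $T$ is one of these maps and I can verify (i) $\sigma_i(T(X))=\sigma_i(X)$ for all $i$ and (ii) $T(\cX_d)=\cX_d$, then
\[
  \sup_{X\in\cX_d}\bigl[\sigma_1^2(T(X))+\sigma_2^2(T(X))\bigr]
  =\sup_{X\in\cX_d}\bigl[\sigma_1^2(X)+\sigma_2^2(X)\bigr],
\]
so the transformed inequality has the same left-hand side and the same threshold as (1), and one holds if and only if the other does. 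In fact (i) alone already forces the two suprema to coincide, since it makes the objective pointwise invariant, but recording (ii) makes clear why each operation is a legitimate ``without loss of generality'' reduction. The whole theorem therefore reduces to a short catalogue of invariance checks.

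For (2), (3) and (5) I would verify (i) first. Using that $\sigma_i(X)^2$ is the $i$-th eigenvalue of $\adj{X}X$, the transpose case follows from $(X^\intercal)^\dagger X^\intercal=\overline{X\adj{X}}$, whose spectrum is the (real, hence conjugation-invariant) spectrum of $X\adj{X}$ and thus equals $\{\sigma_i^2(X)\}$; the conjugation case is analogous, and the adjoint case is their composition. Invariance under a local unitary similarity $X\mapsto(U_1\otimes U_2)X(U_1\otimes U_2)^\dagger$ and under $X\mapsto e^{\iu\theta}X$ is immediate, as singular values are unchanged by unitary multiplication and by scaling with a unit-modulus scalar. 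For (ii), writing $X=A\otimes\I+\I\otimes B$ gives $X^\intercal=A^\intercal\otimes\I+\I\otimes B^\intercal$, $\overline X=\overline A\otimes\I+\I\otimes\overline B$, $(U_1\otimes U_2)X(U_1\otimes U_2)^\dagger=(U_1AU_1^\dagger)\otimes\I+\I\otimes(U_2BU_2^\dagger)$, and $e^{\iu\theta}X=(e^{\iu\theta}A)\otimes\I+\I\otimes(e^{\iu\theta}B)$; in each case the tensor-sum form persists, the blocks stay traceless (trace survives transpose, conjugation, unitary conjugation, and scaling of a traceless matrix), and $\norm{\cdot}_F$ is preserved, so the constraint $\norm{A}_F^2+\norm{B}_F^2=\frac1d$ is intact and $T$ is a bijection of $\cX_d$. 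This yields $(1)\Leftrightarrow(2)$, $(1)\Leftrightarrow(3)$ and $(1)\Leftrightarrow(5)$; for (5) I would also note that, since the objective is independent of $\theta$, the existential ``for some $\theta$'' collapses to the plain conjecture.

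I expect statement (4) to be the one needing genuine (if still modest) care, and hence the main obstacle, because the factor exchange $A\otimes\I+\I\otimes B\mapsto\I\otimes A+B\otimes\I$ is \emph{not} induced by any local unitary and so is not subsumed by (3). I would resolve it with the swap operator $S$ defined by $S\ket{u,v}=\ket{v,u}$: from $S(P\otimes Q)S=Q\otimes P$ one gets $S(A\otimes\I+\I\otimes B)S=\I\otimes A+B\otimes\I$, and since $S$ is unitary (indeed $S=\adj S=S^{-1}$) the singular values are preserved, giving (i). For (ii) it suffices to note that the two families $\{\,\I\otimes A+B\otimes\I\,\}$ and $\cX_d=\{\,A\otimes\I+\I\otimes B\,\}$ coincide as sets after relabelling $A\leftrightarrow B$, the constraint being symmetric in $A$ and $B$. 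Hence $(1)\Leftrightarrow(4)$, and together with the previous paragraph all five statements are equivalent.
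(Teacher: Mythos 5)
Your proof is correct. The paper does not actually prove this theorem---it is imported verbatim from Shen and Chen \cite{ysdis2018}---so there is no in-text argument to compare against, but your invariance catalogue (singular values and the feasible set $\cX_d$ both preserved under transpose, conjugation, adjoint, local unitary similarity, unit-modulus scaling, and, for the genuinely non-local case of statement (4), conjugation by the unitary swap operator $S$) is the standard and expected route, and every step checks out.
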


\bigskip

\begin{theorem} Conjecture \ref{cj:main-4} holds when either of the following two conditions holds.
  \begin{enumerate}
    \item Both  matrices $A$ and $B$ are normal or unitarily similar to
      \begin{equation}
        \begin{bmatrix}
          0&b_1&0&0\\
          b_2&0&0&0\\
          0&0&0&b_3\\
          0&0&b_4&0
        \end{bmatrix}.
      \end{equation}
    \item  One of matrices $A$ and $B$ is normal and the other one is unitarily similar to
      \begin{equation}
      \begin{bmatrix}
        0 & 0 & 0 & b_2 e^{\iu \theta} \\
        b_1 & 0 & 0 & 0 \\
        0 & b_2 & 0 & 0 \\
        0 & 0 & b_1 & 0
      \end{bmatrix}.
      \end{equation}
  \end{enumerate}
\end{theorem}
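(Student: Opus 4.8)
The plan is to exploit the equivalences in Theorem~\ref{le:cj} to normalize the problem and then to block-decompose $X$ into pieces small enough that all singular values can be written down explicitly. The case in which \emph{both} $A$ and $B$ are normal is exactly the established result of Pankowski \etal, so the new content lies in the cases where one factor has the form $M$ or $N$. By statements~3 and~4 of Theorem~\ref{le:cj} I may apply a local unitary to each tensor factor and may interchange the roles of the two factors; hence, whenever at least one of the two matrices is normal, I assume without loss of generality that the normal matrix sits in the first factor and is diagonal, say $A=\mathrm{diag}(a_1,a_2,a_3,a_4)$ with $\sum_k a_k=0$. With $A$ diagonal, $X=A\otimes\I+\I\otimes B=\bigoplus_{k=1}^{4}\bigl(a_k\I_4+B\bigr)$ is block diagonal, so the singular values of $X$ are exactly the union of the singular values of the four blocks $a_k\I_4+B$. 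Throughout I use that $\norm{X}_F^2=1$ (the cross term vanishes because $\trace{A}=\trace{B}=0$), so that $\tar\le\half$ asserts that the two largest squared singular values carry at most half of the total Frobenius mass.

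For condition~1 with $B=M$: since $M=M_1\oplus M_2$ is a direct sum of two $2\times2$ anti-diagonal blocks, each block $a_k\I_4+M$ splits further into the two $2\times2$ matrices $\left[\begin{smallmatrix}a_k & b_1\\ b_2 & a_k\end{smallmatrix}\right]$ and $\left[\begin{smallmatrix}a_k & b_3\\ b_4 & a_k\end{smallmatrix}\right]$, so $X$ is unitarily equivalent to a direct sum of eight $2\times2$ matrices. For a $2\times2$ matrix the two squared singular values are determined by its Frobenius norm and determinant, namely $\sigma_1^2,\sigma_2^2=\half\bigl(\|\cdot\|_F^2\pm\sqrt{\|\cdot\|_F^4-4|\det|^2}\bigr)$, giving closed forms in $a_k,b_i$. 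The subcase in which \emph{both} $A$ and $B$ have the form $M$ is handled in the same spirit: writing $M=M_1\oplus M_2$ on each factor decomposes $X$ into the four $4\times4$ blocks $M_i\otimes\I_2+\I_2\otimes M_j$, each of which is, after a fixed permutation of the computational basis, block anti-diagonal and hence again reduces to two $2\times2$ matrices, for a total of eight $2\times2$ blocks. For condition~2, $B=N$, the blocks $a_k\I_4+N$ are genuinely $4\times4$; here I use the weighted-cyclic structure of $N$ (in particular $N^4=b_1^2b_2^2e^{\iu\theta}\I$ and $N^\dagger N=\mathrm{diag}(|b_1|^2,|b_2|^2,|b_1|^2,|b_2|^2)$) to reduce $(a_k\I+N)^\dagger(a_k\I+N)$ to an explicitly diagonalizable Hermitian cyclic matrix and extract the four singular values of each block in closed form.

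With all singular values in hand, $\tar$ equals the sum of the two largest squared singular values taken over all blocks, and I split into two cases. If the top two come from a single $2\times2$ block, their sum is exactly that block's squared Frobenius norm, e.g.\ $2|a_k|^2+|b_1|^2+|b_2|^2$; using $\sum_k a_k=0$ (which forces $|a_k|^2\le\tfrac34\sum_j|a_j|^2$) together with $\sum_j|a_j|^2+\norm{B}_F^2=\frac14$ bounds this comfortably below $\half$. If instead the top two come from two different blocks, I must bound $\sigma_1^2(\text{block})+\sigma_1^2(\text{block}')$; here the explicit formulas, the tracelessness constraint, and a Cauchy--Schwarz estimate of the type $(|a|+|b|)^2\le 3\bigl(|a|^2+\tfrac12|b|^2\bigr)$ drive the maximization, and a Lagrange-multiplier analysis confirms that the extremum does not exceed $\half$.

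The hard part will be this last optimization in the two-block case, and especially its cyclic instance. For $B=N$ the largest squared singular value of a $4\times4$ block has no simple closed form, so I expect the main difficulty to lie in (i) reducing $a_k\I+N$ far enough---either to $2\times2$ sub-blocks via the $2$-periodicity of its weights or to an explicit quartic---that $\sigma_1^2$ becomes a tractable function of $|a_k|,|b_1|,|b_2|,\theta$, and (ii) the bookkeeping of which blocks realise the top two singular values, since the maximizing pair may lie in one block or in two distinct blocks with different $a_k$. Once the objective is written in these finitely many real parameters, I expect $\tar\le\half$ to follow from a constrained extremum computation, with the tracelessness and unit-Frobenius constraints doing the essential work.
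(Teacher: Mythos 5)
First, a point of reference: the paper does not actually prove this theorem. It appears in the list of ``existing results'' and is imported from Shen and Chen \cite{ysdis2018} (the same source as Theorem~\ref{le:cj}), so there is no in-paper argument to compare yours against; your proposal has to be judged on its own terms.

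On those terms, your structural reductions are sound: using Theorem~\ref{le:cj} to diagonalize the normal factor, splitting $X$ into the blocks $a_k\I_4+B$, the further splitting of $a_k\I_4+M$ into $2\times 2$ pieces, the observation that $M_i\otimes\I_2+\I_2\otimes M_j$ becomes block anti-diagonal after regrouping the basis as $\{1,4\},\{2,3\}$, and the identities $N^4=b_1^2b_2^2e^{\iu\theta}\I$ and $\adj{N}N=\mathrm{diag}(|b_1|^2,|b_2|^2,|b_1|^2,|b_2|^2)$ are all correct, as is the easy single-block estimate via $|a_k|^2\leqslant\tfrac34\sum_j|a_j|^2$. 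But the proposal stops where the theorem actually lives, and two gaps are decisive. (i) For condition~2 you assert that $\adj{(a_k\I+N)}(a_k\I+N)$ is ``explicitly diagonalizable'' via the $2$-periodicity of the weights; this matrix is $D+\conj{a_k}N+a_k\adj{N}$ with $D$ diagonal, a periodic Jacobi matrix whose shift-by-two symmetry is broken by the phase $e^{\iu\theta}$ (and whose entries involve the generally complex $a_k$), so its eigenvalues satisfy an honest quartic and no closed form is exhibited. (ii) The two-block case --- bounding $\sigma_1^2$ of one block plus $\sigma_1^2$ of a different block, which is where essentially all of the difficulty sits once the single-$2\times2$-block case is dispatched by the Frobenius-norm identity --- is handled only by the sentence that ``a Lagrange-multiplier analysis confirms that the extremum does not exceed $\half$.'' That sentence is the theorem, not a proof of it. You candidly flag both points as the hard part, so what you have is a plausible plan of attack rather than a proof; to complete it you would need to actually carry out the quartic extraction and the constrained maximization, or replace them with structural bounds in the spirit of the paper's Lemmas~\ref{lem:y:square} and~\ref{lem:y:same}, which were designed precisely to tame this kind of two-block optimization without closed-form singular values.
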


\bigskip

\begin{theorem}[\L{} Pankowski, M. Piani, M.  Horodecki, \& P. Horodecki~\cite{pphh2010}] \label{lem:normal} Conjecture
  \ref{cj:high_var} holds when both matrices $A$ and $B$ are normal.
\end{theorem}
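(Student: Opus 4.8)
The plan is to exploit that, with both $A$ and $B$ normal, $X=A\otimes\I+\I\otimes B$ is itself normal. First I would diagonalize $A=U\,\mathrm{diag}(a_1,\dots,a_d)\,U^{\dagger}$ and $B=V\,\mathrm{diag}(b_1,\dots,b_d)\,V^{\dagger}$, so that $X=(U\otimes V)\bigl[\mathrm{diag}(a_i)\otimes\I+\I\otimes\mathrm{diag}(b_j)\bigr](U\otimes V)^{\dagger}$ is unitarily similar to the diagonal matrix with entries $a_i+b_j$. Hence the singular values of $X$ are exactly the numbers $\abs{a_i+b_j}$, the constraints become $\sum_i a_i=\sum_j b_j=0$ together with $\sum_i\abs{a_i}^2+\sum_j\abs{b_j}^2=\half\cdot\tfrac2d$, and $\sigma_1^2(X)+\sigma_2^2(X)$ is the sum of the two largest values of $\abs{a_i+b_j}^2$. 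It therefore suffices to prove that for every choice of two distinct positions $(i_1,j_1)\ne(i_2,j_2)$ one has $\abs{a_{i_1}+b_{j_1}}^2+\abs{a_{i_2}+b_{j_2}}^2\le\frac{3d-4}{d^2}$, since the two largest entries realize this maximum.

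Two distinct grid entries either share a row ($i_1=i_2$), share a column ($j_1=j_2$), or are disjoint; by the $A\leftrightarrow B$ symmetry of Theorem~\ref{le:cj}(4) the same-column case is identical to the same-row case, so I would treat only the same-row and disjoint cases. In both, the eigenvalues not occurring in the chosen pair enter only through the two constraints, and Cauchy--Schwarz gives the tail bound $\sum_{i\ge 3}\abs{a_i}^2\ge\abs{a_1+a_2}^2/(d-2)$ (and similarly for the $b$'s). Substituting these lower bounds converts the Frobenius equality into a single inequality that is valid for \emph{every} configuration and points in the direction needed for an upper bound, namely $\tfrac{d}{d-1}\abs{a_1}^2+\abs{b_1}^2+\abs{b_2}^2+\tfrac{\abs{b_1+b_2}^2}{d-2}\le\tfrac1d$ in the same-row case, and the analogous inequality with both tails spread in the disjoint case.

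For the same-row case, writing the shared eigenvalue as $a_1$ and setting $P=b_1+b_2$, $Q=b_1-b_2$, the objective becomes $\half\abs{2a_1+P}^2+\half\abs{Q}^2$ while the constraint reads $\tfrac{d}{d-1}\abs{a_1}^2+\tfrac{d}{2(d-2)}\abs{P}^2+\half\abs{Q}^2\le\tfrac1d$. The phases enter only through $\Real{\conj{a_1}P}$, which is maximized by aligning $a_1$ and $P$, so the problem reduces to nonnegative reals; one application of Cauchy--Schwarz bounds $\half\abs{2a_1+P}^2$ by $\tfrac{3d-4}{d}$ times the $(a_1,P)$ portion of the budget, and since $\tfrac{3d-4}{d}\ge 1$ the optimum discards $Q$ and lands exactly on $\tfrac{3d-4}{d^2}$ (the extremal configuration being $a_1=\tfrac{2(d-1)}{d-2}\beta$, $b_1=b_2=\beta$). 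For the disjoint case I would set $S_a=a_1+a_2$, $D_a=a_1-a_2$ and likewise $S_b,D_b$; the objective then equals $\half(\abs{S_a+S_b}^2+\abs{D_a+D_b}^2)$, and bounding each cross term by $2(\abs{S_a}^2+\abs{S_b}^2)$, respectively $2(\abs{D_a}^2+\abs{D_b}^2)$, reduces everything to a two-variable linear program whose optimum is $2/d$, which is $\le\tfrac{3d-4}{d^2}$ precisely because $(3d-4)/d^2-2/d=(d-4)/d^2\ge0$ for $d\ge 4$.

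The main obstacle is the complex, non-convex character of the ``two largest singular values'' objective: a priori one must optimize over all complex eigenvalue configurations, and it is not evident that the extremum occurs at a real, symmetric configuration. The crux is thus the choice of the sum/difference coordinates, which simultaneously linearize the Frobenius budget and isolate the single inner product carrying all the phase dependence; once this is in place, elementary alignment and a single Cauchy--Schwarz step close every case. The remaining care is purely bookkeeping: checking that the three-way case split is exhaustive and that the tail inequality is used in the sense that produces an upper rather than a lower bound.
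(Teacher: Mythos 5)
Your proof is correct, but there is nothing in the paper to compare it against: Theorem~\ref{lem:normal} is imported from Pankowski \emph{et al.}~\cite{pphh2010} and the paper gives no proof of it, only the citation (the authors then \emph{use} it, via Corollary~\ref{lem:normal2}, inside the proofs of Lemmas~\ref{lem:anyB} and~\ref{lem:c3}). What you have written is therefore a self-contained, elementary proof of the cited result. I checked the key steps and they hold: after simultaneous diagonalization the singular values are $\abs{a_i+b_j}$, and your strengthened claim (the bound for \emph{every} pair of distinct grid positions) does imply the theorem. In the same-row case the identity $\abs{a_1+b_1}^2+\abs{a_1+b_2}^2=\tfrac12\abs{2a_1+P}^2+\tfrac12\abs{Q}^2$ with $P=b_1+b_2$, $Q=b_1-b_2$ is right; the tail bounds $\sum_{i\geqslant 2}\abs{a_i}^2\geqslant\abs{a_1}^2/(d-1)$ and $\sum_{j\geqslant 3}\abs{b_j}^2\geqslant\abs{P}^2/(d-2)$ follow from Cauchy--Schwarz applied to the trace constraints and go in the needed direction; the weighted Cauchy--Schwarz coefficient is $4\cdot\tfrac{d-1}{d}+\tfrac{2(d-2)}{d}=\tfrac{6d-8}{d}$, giving exactly $\tfrac{3d-4}{d}$ after the factor $\tfrac12$, and since $\tfrac{3d-4}{d}\geqslant 1$ the $Q$-term is absorbed, yielding $\tfrac{3d-4}{d^2}$ with equality at the paper's extremal configuration (where $w_1a_1/2=w_2P$ makes Cauchy--Schwarz tight). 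In the disjoint case your sum/difference identity gives $\tfrac12\abs{S_a+S_b}^2+\tfrac12\abs{D_a+D_b}^2\leqslant 2(\abs{a_1}^2+\abs{a_2}^2+\abs{b_1}^2+\abs{b_2}^2)\leqslant 2/d\leqslant\tfrac{3d-4}{d^2}$ for $d\geqslant 4$, which is essentially the same estimate the paper uses in Lemma~\ref{lem:c0}. The only cosmetic slip is the stated extremal point $a_1=\tfrac{2(d-1)}{d-2}\beta$, $b_1=b_2=\beta$, which does not match the paper's normalization, but this does not affect the argument.
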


We claim that Theorem \ref{lem:normal} is equivalent to the following corollary.

\begin{corollary}\label{lem:normal2}
  Let $A,B\in\bbC^{d\times d} (d\geqslant4)$, $\trace A=\trace B=0$ and both $A$ and $B$ are normal. Denote
  \begin{equation}
    X=A\otimes \I+\I\otimes B.
  \end{equation} Then
  \begin{equation} \sigma_1^2(X) + \sigma_2^2(X) \leqslant \frac{3d-4}{d}\left(\norm{A}_F^2+ \norm{B}_F^2\right).
  \end{equation}
\end{corollary}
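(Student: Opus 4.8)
The plan is to prove the corollary by a pure homogeneity argument, exploiting the fact that it is merely a rescaled restatement of Theorem~\ref{lem:normal}. Both sides of the asserted inequality are homogeneous of degree two under the simultaneous scaling $(A,B)\mapsto(\lambda A,\lambda B)$: the right-hand side scales as $\lambda^2$ because $\norm{\lambda A}_F^2+\norm{\lambda B}_F^2 = \lambda^2(\norm{A}_F^2+\norm{B}_F^2)$, while the left-hand side scales as $\lambda^2$ because $X=A\otimes\I+\I\otimes B$ is linear in $(A,B)$, so $\lambda X$ has singular values $\sigma_i(\lambda X)=\lambda\,\sigma_i(X)$ for $\lambda>0$. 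Consequently, to establish the inequality for arbitrary scale it suffices to establish it under the single normalization $\norm{A}_F^2+\norm{B}_F^2=\tfrac{1}{d}$, which is exactly the hypothesis defining $\cX_d$ and therefore exactly the content of Theorem~\ref{lem:normal}.

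First I would dispose of the degenerate case $A=B=0$: then $X=0$, both sides vanish, and the inequality holds trivially. Next, assuming $(A,B)\neq(0,0)$, I would set
\begin{equation}
  \lambda=\frac{1}{\sqrt{d\bigl(\norm{A}_F^2+\norm{B}_F^2\bigr)}},\qquad A'=\lambda A,\qquad B'=\lambda B,
\end{equation}
and verify that $A'$ and $B'$ inherit the required structure. Tracelessness is immediate since $\trace{A'}=\lambda\trace{A}=0$ and likewise for $B'$; normality is preserved under scalar multiplication because $(\lambda A)(\lambda A)^{\dagger}=|\lambda|^2 AA^{\dagger}=|\lambda|^2 A^{\dagger}A=(\lambda A)^{\dagger}(\lambda A)$. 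A direct computation then gives $\norm{A'}_F^2+\norm{B'}_F^2=\lambda^2(\norm{A}_F^2+\norm{B}_F^2)=\tfrac{1}{d}$, so that $X'=A'\otimes\I+\I\otimes B'=\lambda X$ belongs to $\cX_d$ and satisfies the normal-matrix hypothesis.

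Applying Theorem~\ref{lem:normal} to $X'\in\cX_d$ yields $\sigma_1^2(X')+\sigma_2^2(X')\leqslant\tfrac{3d-4}{d^2}$. Substituting $\sigma_i(X')=\lambda\,\sigma_i(X)$ and dividing through by $\lambda^2$ produces
\begin{equation}
  \sigma_1^2(X)+\sigma_2^2(X)\leqslant\frac{1}{\lambda^2}\cdot\frac{3d-4}{d^2}
  = d\bigl(\norm{A}_F^2+\norm{B}_F^2\bigr)\cdot\frac{3d-4}{d^2}
  = \frac{3d-4}{d}\bigl(\norm{A}_F^2+\norm{B}_F^2\bigr),
\end{equation}
which is precisely the claimed bound. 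There is no genuine analytic obstacle here; the entire content is the scale-invariance bookkeeping, and the only points demanding any care are confirming that normality and tracelessness survive the rescaling and separating out the trivial $A=B=0$ case where $\lambda$ is undefined. I would also remark that the reverse implication holds by specialization—imposing $\norm{A}_F^2+\norm{B}_F^2=\tfrac{1}{d}$ in the corollary recovers Theorem~\ref{lem:normal}—so the two statements are indeed equivalent, justifying the word ``corollary.''
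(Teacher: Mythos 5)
Your proof is correct and matches the paper's intended argument: the paper states this corollary as an immediate rescaling consequence of Theorem~\ref{lem:normal} without writing out the details, and the normalization $\lambda=[d(\norm{A}_F^2+\norm{B}_F^2)]^{-1/2}$ you use is exactly the scaling the authors employ elsewhere (e.g.\ the factors $s_1,s_2$ in the proof of Theorem~\ref{main-result1}). Your handling of the degenerate case $A=B=0$ and the check that normality and tracelessness survive rescaling are sound.
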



Although extensively numerical tests have demonstrated the validness of Conjecture \ref{cj:main-4} and
\ref{cj:high_var} (see \cref{fig}), they have been open problems since the last progress was made in 2010 \cite{pphh2010}.  We
investigate Conjecture \ref{cj:high_var} in this paper.  Our main result is as follows, and will be proven in
Sections~\ref{sec3}.
\begin{figure}[!htb]
  \centering
  \includegraphics[width=\linewidth]{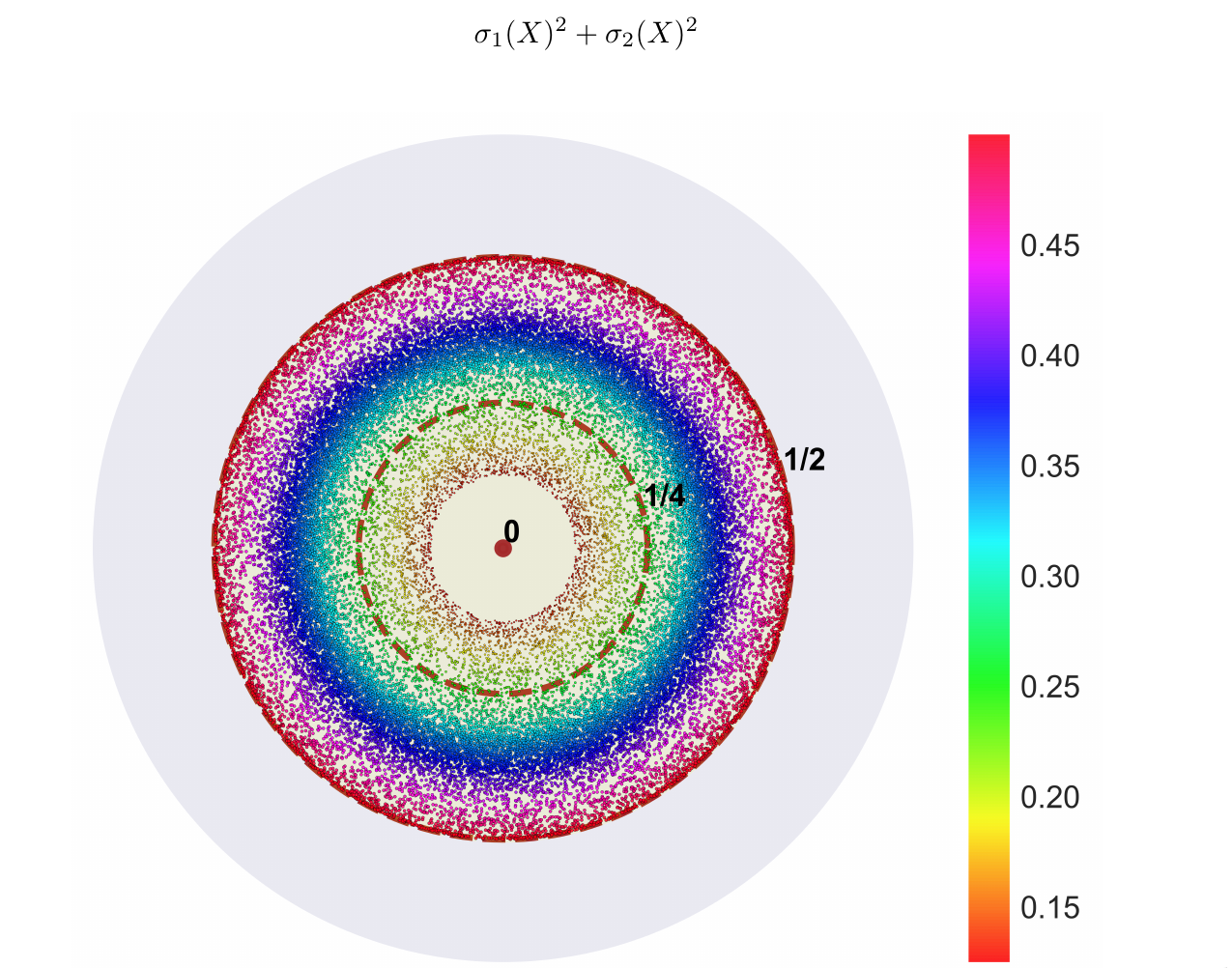}
  \caption{Numerical evidence of the validity of \cref{cj:main-4} where the samples $A, B$ are generated randomly (including normal and non-normal) with uniform distribution,
    and the distance
    from the origin (or colors) in the polar graph represents the value of $\tar$.}
  \label{fig}
\end{figure}
\begin{theorem}\label{main-result1}
  Conjecture \ref{cj:high_var} holds when one of matrices $A, B\in\bbC^{d\times d}$ is normal and the other one is
  arbitrary.
\end{theorem}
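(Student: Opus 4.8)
The plan is to first strip away all the ``gauge'' freedom using \cref{le:cj}. Since one of $A,B$ is normal, \cref{le:cj} lets me assume it is $A$ (the two tensor factors are interchangeable, by the equivalence with $\I\otimes A+B\otimes\I$), and a local unitary similarity $U\otimes\I$ diagonalizes the normal matrix while leaving the constraints $\trace A=\trace B=0$ and $\norm{A}_F^2+\norm{B}_F^2=\frac1d$ intact. Thus I reduce to $A=\operatorname{diag}(a_1,\dots,a_d)$ with $\sum_i a_i=0$ and $B$ completely arbitrary. With $A$ diagonal the Kronecker sum becomes block diagonal: writing a vector of $\bbC^d\otimes\bbC^d$ as $\sum_i e_i\otimes w_i$, one checks that $X=A\otimes\I+\I\otimes B$ sends it to $\sum_i e_i\otimes (B+a_i\I)w_i$, so that $\adj{X}X=\bigoplus_{i=1}^d (B+a_i\I)^\dagger(B+a_i\I)$. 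Hence the multiset of singular values of $X$ is the union over $i$ of the singular values of the shifted blocks $B+a_i\I$, and the two largest singular values of $X$ come from at most two of these blocks.

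This splits the problem into two regimes. Either $\sigma_1(X),\sigma_2(X)$ are the two largest singular values of a single block, in which case I must bound $\sigma_1^2(B+a_i\I)+\sigma_2^2(B+a_i\I)$; or they are the largest singular values $\sigma_1(B+a_i\I)$ and $\sigma_1(B+a_j\I)$ of two distinct blocks $i\neq j$. In both regimes I would expand with unit test vectors: for a unit vector $v$,
\begin{equation}
\norm{(B+c\I)v}^2=\norm{Bv}^2+2\Real{\conj c\,\langle v,Bv\rangle}+\abs{c}^2 ,
\end{equation}
so that the target quantity becomes a sum of two such expressions, involving the diagonal numerical-range values $\langle v,Bv\rangle$, the norms $\norm{Bv}^2$, and the shifts $a_i$. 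The goal is to show every such sum is at most $\frac{3d-4}{d^2}$.

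The heart of the argument is then a constrained optimization. I would bound $\norm{Bv}^2\le\sigma_1^2(B)\le\norm{B}_F^2$ and $\abs{\langle v,Bv\rangle}\le\sigma_1(B)$, treat $\norm{A}_F^2$ and $\norm{B}_F^2$ as budget variables tied by $\norm{A}_F^2+\norm{B}_F^2=\frac1d$, and use $\sum_i a_i=0$ (which controls $\abs{a_i}^2+\abs{a_j}^2$ in terms of $\norm{A}_F^2$) together with Cauchy--Schwarz to collapse everything to a low-dimensional inequality in the scalars $\sigma_1(B),\norm{B}_F,\abs{a_i},\abs{a_j}$. The constant $3d-4$ should emerge exactly as the Cauchy--Schwarz constant $(d-1)+\frac{d-2}{2}=\frac{3d-4}{2}$ obtained by splitting the Frobenius budget between one dominant eigenvalue of $A$ and two equal dominant entries of $B$; this is precisely the equality configuration of \cref{lem:normal2}, so the bound is tight and the extremal case is (asymptotically) normal.

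The main obstacle is that $B$ is not normal, so the quantities $\norm{Bv}^2$, $\langle v,Bv\rangle$ and $\sigma_1(B+c\I)$ cannot all be simultaneously extremal. The naive triangle bound $\sigma_1(B+c\I)\le\sigma_1(B)+\abs c$ is attainable only when the relevant singular vector of $B$ is an eigenvector of modulus $\sigma_1(B)$, and inserting it for two different shifts at once already overshoots $\frac{3d-4}{d^2}$; so a crude operator-norm estimate is too lossy. The delicate step is therefore a sharp \emph{joint} estimate of $\norm{Bv}^2+2\Real{\conj{a_i}\langle v,Bv\rangle}$ across the two test vectors, quantifying how much the cross terms must be sacrificed when the numerical range of $B$ is pulled in two directions at once. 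I expect this to be the crux, and I would attack it by choosing the test vectors adapted to the Schur form of $B$ and treating the off-diagonal Frobenius mass as a strictly positive deficit, thereby reducing the non-normal case back to the already-established normal bound of \cref{lem:normal2}.
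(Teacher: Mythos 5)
Your opening reduction is exactly the paper's: use \cref{le:cj} to put the normal factor in the first slot, diagonalize it by a local unitary, observe $\adj{X}X=\oplus_i\adj{(a_i\I+B)}(a_i\I+B)$, and split into the case where $\sigma_1,\sigma_2$ come from one block versus two. From that point on, however, the proposal is a strategy statement rather than a proof, and the step you yourself flag as ``the crux'' is precisely the content that is missing.

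Concretely, three gaps. First, the bounds you propose to feed into the optimization are too lossy even in the easy case: estimating $\norm{Bv}^2\leqslant\norm{B}_F^2$ for \emph{each} of two orthonormal test vectors double-counts the Frobenius budget. The paper instead aligns the two top eigenvectors of $a_1(B+\trans B)+B\trans B$ with $\kett{e}_1,\kett{e}_2$ by an orthogonal similarity and gets the exact identity $\sigma_1^2(X)+\sigma_2^2(X)=(a_1+b_{11})^2+(a_1+b_{22})^2+\sum_{i=1}^2\sum_{j\neq i}b_{ij}^2$; only the first two rows of $B$ enter, the diagonal part is handled by the normal-case result (\cref{lem:normal2}), and the off-diagonal mass appears with a strictly negative coefficient $-\frac{2d-4}{d}$. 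Your ``off-diagonal Frobenius mass as a positive deficit'' idea is the right instinct, but it has to be extracted from such an identity, not from $\norm{Bv}\leqslant\sigma_1(B)$ and Cauchy--Schwarz. Second, the two-block case is where essentially all the work lies: the paper needs two nontrivial optimization lemmas (\cref{lem:y:square,lem:y:same}) to first zero out all off-diagonal entries of $B$ outside a single $2\times2$ corner, then force $a_3=\cdots=a_d$ and $b_{33}=\cdots=b_{dd}$, and finally a change of variables plus a KKT analysis (including the sub-case split on $pq=0$) to land back on the normal case or on the bound $\frac{2}{d}\leqslant\frac{3d-4}{d^2}$. Nothing in the proposal substitutes for this. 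Third, you work with complex $B$ throughout, but the paper's optimization machinery is real; the complex case is recovered by a separate argument showing the top two eigenvectors of $X\adj{X}$ can be made real and orthogonal after a unitary change of basis, whence $\tar\leqslant\sum_{k=1}^2\bigl(\sigma_1^2(X_k)+\sigma_2^2(X_k)\bigr)$ for $X=X_1+\iu X_2$, and the real theorem is applied to each part. If you intend to avoid that reduction you must explain how your Schur-form/test-vector computation closes over $\bbC$; as written it does not.
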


It is obvious that our work leads to significant progress on Conjecture \ref{cj:high_var} (so, also Conjecture
\ref{cj:main-4}), as well as towards the distillability problem. If we relax the condition on the square sum of the
norms of $A$ and $B$, Theorem \ref{main-result1} reduces to the following result.

\

\begin{corollary}\label{coro1}
  Let $A,B\in\bbC^{d\times d} (d\geqslant4)$, $\trace A=\trace B=0$ and one of $A$ and $B$ is normal and the other one
  is arbitrary. Denote
  \begin{equation} X=A\otimes \I+\I\otimes B, \end{equation} and let $\sigma_i(X)$ be the $i$-th largest singular value of $X$. Then
  \begin{equation} \sigma_1^2(X) + \sigma_2^2(X) \leqslant \frac{3d-4}{d}\left(\norm{A}_F^2+ \norm{B}_F^2\right).
  \end{equation}
\end{corollary}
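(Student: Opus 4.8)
The plan is to exploit the unitary freedom recorded in Theorem~\ref{le:cj} to put the problem into a block-diagonal normal form, and then to reduce the non-normal case to the already-established normal case of Corollary~\ref{lem:normal2}. By parts (3) and (4) of Theorem~\ref{le:cj} I may assume that the normal matrix is $B$, and, since a normal matrix is unitarily diagonalizable, that $B=\Lambda=\mathrm{diag}(\lambda_1,\dots,\lambda_d)$ with $\sum_{j}\lambda_j=\trace\Lambda=0$. The unitary acting on the first tensor factor is still free, so by a Schur decomposition I would further assume $A$ is upper triangular with diagonal entries $a_1,\dots,a_d$ satisfying $\sum_i a_i=0$.

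The key structural observation is that, with $B$ diagonal, each subspace $\bbC^d\otimes e_j$ is invariant under $X=A\otimes\I+\I\otimes\Lambda$, on which $X$ acts as $M_j:=A+\lambda_j\I$. Hence $X$ is unitarily equivalent to $\bigoplus_{j=1}^d M_j$, so the singular values of $X$ are exactly the union of the singular values of the blocks $M_j=A+\lambda_j\I$. Using $\trace A=0$ one computes $\norm{M_j}_F^2=\norm{A}_F^2+d\abs{\lambda_j}^2$ and therefore $\norm{X}_F^2=d\bigl(\norm{A}_F^2+\norm{B}_F^2\bigr)=:dS$; so the target inequality is precisely the statement that the two largest squared singular values carry at most a $\tfrac{3d-4}{d^2}$ fraction of $\norm{X}_F^2$.

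Since $XX^\dagger=\bigoplus_j M_jM_j^\dagger$ is block diagonal, its top two eigenvectors live inside the blocks, and I would split into two cases: (i) both of $\sigma_1,\sigma_2$ come from a single block $M_j$, reducing the bound to $\sigma_1^2(M_j)+\sigma_2^2(M_j)\le\frac{3d-4}{d}S$; and (ii) they come from two distinct blocks $M_j,M_k$, reducing it to $\sigma_1^2(M_j)+\sigma_1^2(M_k)\le\frac{3d-4}{d}S$. The heart of the argument is then a sharp single-block estimate controlling $\sigma_1^2(A+\mu\I)$ (and $\sigma_1^2+\sigma_2^2$ of $A+\mu\I$) for a traceless $A$ and a scalar $\mu$, which I would combine across the (at most two) participating blocks using $\sum_j\lambda_j=0$ and $\sum_j\abs{\lambda_j}^2=\norm{B}_F^2$. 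The cleanest way to finish would be to show that, for fixed $\Lambda$ and fixed $\norm{A}_F$, the maximum of $\sigma_1^2(X)+\sigma_2^2(X)$ over traceless $A$ is attained at a normal (indeed diagonal) $A$, whereupon Corollary~\ref{lem:normal2} closes the proof.

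The main obstacle is exactly the non-normality of $A$. A traceless matrix can have $\norm{A}_{\mathrm{op}}=\norm{A}_F$ (for instance a single nilpotent Jordan block), so bounding each block crudely by $\sigma_1(M_j)\le\norm{A}_F+\abs{\lambda_j}$ is far too lossy; in case (ii) such a bound is already insufficient at $d=4$. The genuine difficulty is therefore to show that routing singular-value ``mass'' through the strictly upper triangular (nilpotent) part of $A$ cannot beat the normal extremizer: one must establish the sharp single-block inequality that simultaneously uses $\trace A=0$ and the orthogonality and normalization of the participating singular vectors, or equivalently argue that the extremal non-normal configurations degenerate to the companion-type forms already identified as extremal and are dominated by normal ones. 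I expect this extremality/single-block step, rather than the reductions or the block decomposition, to absorb essentially all of the work.
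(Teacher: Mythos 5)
Your proposal has a genuine gap: it is a plan rather than a proof. The reductions you describe (diagonalize the normal factor, decompose $X\cong\bigoplus_j(A+\lambda_j\I)$, compute $\norm{X}_F^2=d(\norm{A}_F^2+\norm{B}_F^2)$, and split into the one-block and two-block cases) are all correct and in fact mirror the skeleton of the paper's argument for Theorem~\ref{main-result1}. But the entire analytic content of the result lives in the step you defer --- the sharp estimate on $\sigma_1^2+\sigma_2^2$ of the participating blocks when $A$ is not normal --- and you explicitly acknowledge that you have not supplied it (``I expect this extremality/single-block step \ldots to absorb essentially all of the work''). The specific finishing move you suggest, namely that for fixed $\Lambda$ and fixed $\norm{A}_F$ the maximum over traceless $A$ is attained at a diagonal $A$, is neither proved nor obviously true as stated, and it is not what the paper establishes: the paper never shows the extremizer is normal for fixed data. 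Instead it bounds the objective directly, via Corollary~\ref{lem:normal2} applied to $A$ and $\mathrm{diag}(B)$ plus an absorption of the off-diagonal mass in the one-block case (Lemma~\ref{lem:anyB}), and via a lengthy constrained-optimization argument (Lemmas~\ref{lem:y:square}, \ref{lem:y:same}, KKT conditions, reduction to $B\in\cP$) in the two-block case. You also omit the reduction from complex to real matrices ($X=X_1+\iu X_2$ with the relaxed-norm bound applied to each part), which the paper needs precisely because the unnormalized form of the inequality is additive under this splitting.

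Separately, you have made the problem harder than it needs to be. As stated, Corollary~\ref{coro1} follows from Theorem~\ref{main-result1} by pure degree-$2$ homogeneity: set $s=[d(\norm{A}_F^2+\norm{B}_F^2)]^{-1/2}$, note that $sA,sB$ satisfy the normalization $\norm{sA}_F^2+\norm{sB}_F^2=1/d$ with tracelessness and normality preserved, apply Theorem~\ref{main-result1} to $sX$, and divide by $s^2$. This one-line rescaling is exactly how the paper obtains the corollary (and is the same trick used internally with $s_1,s_2$ in the proof of Theorem~\ref{main-result1}). If you intend your write-up as a self-contained proof not invoking Theorem~\ref{main-result1}, then the missing single-block/two-block estimates must be supplied in full; if you are willing to cite Theorem~\ref{main-result1}, then everything after your normalization observation is unnecessary.
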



Actually, the upper bound in Theorem \ref{main-result1} is attainable. For example, Let
\begin{equation} A= \mathrm{diag}(2(d-1)\beta,-2\beta,\ldots,-2\beta),\end{equation} and
\begin{equation}
  B= \mathrm{diag}( (d-2)\beta,(d-2)\beta,-2\beta,\ldots,-2\beta), \end{equation} where
\begin{equation}
  \beta = \left(d\sqrt{6d-8}\right)^{-1}.
\end{equation}
Then $\trace A=\trace B=0$, $\norm{A}_F^2+\norm{B}_F^2 =\frac{1}{d}$ and
\begin{equation}
  \sigma_1^2(X) + \sigma_2^2(X) = \frac{3d-4}{d^2}.
\end{equation}

Hence, we have the following results.
\begin{theorem}\label{main-result11}
  Denote by $\cX^{(1)}_d$ the subset of $\cX$ where one of $A$ and $B$ is normal. Then
  \begin{equation} \max_{X\in \cX^{(1)}_d} \sigma_1^2(X) + \sigma_2^2(X) = \frac{3d-4}{d^2}.
  \end{equation}
\end{theorem}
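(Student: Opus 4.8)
The plan is to prove the stated equality by matching an upper bound with an explicit attaining point, so that the supremum over $\cX^{(1)}_d$ is in fact a maximum equal to $\frac{3d-4}{d^2}$. Concretely, I would show first that $\sigma_1^2(X)+\sigma_2^2(X)\leqslant \frac{3d-4}{d^2}$ for every $X\in\cX^{(1)}_d$, and then that this value is achieved at a specific pair $(A,B)$ lying in $\cX^{(1)}_d$.

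For the upper bound I would simply invoke Theorem \ref{main-result1}. By construction every $X\in\cX^{(1)}_d$ is of the form $A\otimes\I+\I\otimes B$ with $\trace A=\trace B=0$, $\norm{A}_F^2+\norm{B}_F^2=\frac1d$, and at least one of $A,B$ normal. This is exactly the hypothesis of Theorem \ref{main-result1} (equivalently, Conjecture \ref{cj:high_var} in this regime), which yields $\sigma_1^2(X)+\sigma_2^2(X)\leqslant\frac{3d-4}{d^2}$ uniformly over $\cX^{(1)}_d$.

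For attainability I would use the diagonal witnesses displayed just above the statement. Taking $A=\mathrm{diag}(2(d-1)\beta,-2\beta,\ldots,-2\beta)$ and $B=\mathrm{diag}((d-2)\beta,(d-2)\beta,-2\beta,\ldots,-2\beta)$ with $\beta=(d\sqrt{6d-8})^{-1}$, both matrices are diagonal, hence normal, so the pair indeed lies in $\cX^{(1)}_d$. Since $X=A\otimes\I+\I\otimes B$ is then diagonal with entries $a_i+b_j$, I would verify that $\trace A=\trace B=0$ and $\norm{A}_F^2+\norm{B}_F^2=\frac1d$, and that the two entries of largest modulus are each $(3d-4)\beta$, giving $\sigma_1^2(X)+\sigma_2^2(X)=2(3d-4)^2\beta^2=\frac{3d-4}{d^2}$.

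There is no genuine obstacle here: the substantive mathematics, namely the upper bound, is delivered entirely by Theorem \ref{main-result1}, and the present statement merely repackages that bound together with its sharpness example. The only point requiring care is the routine check that, for $d\geqslant4$, the value $(3d-4)\beta$ dominates the remaining moduli $(2d-4)\beta$, $\abs{d-4}\beta$, and $4\beta$ occurring among the diagonal entries; each comparison reduces to an elementary inequality valid for $d\geqslant4$. Combining the uniform upper bound with attainability at this explicit point promotes the supremum to a maximum and establishes the equality.
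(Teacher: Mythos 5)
Your proposal is correct and matches the paper's own route: the paper likewise obtains Theorem \ref{main-result11} by combining the uniform upper bound of Theorem \ref{main-result1} with the explicit diagonal pair $A=\mathrm{diag}(2(d-1)\beta,-2\beta,\ldots,-2\beta)$, $B=\mathrm{diag}((d-2)\beta,(d-2)\beta,-2\beta,\ldots,-2\beta)$, $\beta=(d\sqrt{6d-8})^{-1}$, which attains $\frac{3d-4}{d^2}$. Your verification of the traces, the norm constraint, and the dominance of the two entries $(3d-4)\beta$ over $(2d-4)\beta$, $\abs{d-4}\beta$, and $4\beta$ for $d\geqslant 4$ is exactly the routine check the paper leaves implicit.
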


\begin{corollary}\label{main-result22}
  Let $\cX^{(2)}_d$ be the set of matrices $A\otimes \I+\I\otimes B$ with $A,B$ traceless and   one of $A$ and $B$  normal. Then
  \begin{equation} \max_{X\in \cX^{(2)}_d} \sigma_1^2(X) + \sigma_2^2(X)= \dfrac{3d-4}{d}\left(\norm{A}_F^2+ \norm{B}_F^2\right).
  \end{equation}
\end{corollary}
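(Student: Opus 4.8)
The plan is to obtain Corollary~\ref{main-result22} directly from the material already assembled, namely the sharp upper bound of Corollary~\ref{coro1} together with the explicit extremal pair displayed just before Theorem~\ref{main-result11}. A preliminary remark is in order about how to read the identity: since $\cX^{(2)}_d$ imposes no normalization on $A$ and $B$, the quantity $\sigma_1^2(X)+\sigma_2^2(X)$ is homogeneous of degree two under the scaling $(A,B)\mapsto(cA,cB)$ (because $\sigma_i(cX)=\abs{c}\,\sigma_i(X)$ and $\norm{cA}_F^2=c^2\norm{A}_F^2$), so the statement is to be understood as the assertion that $\tfrac{3d-4}{d}$ is the smallest constant $C$ for which $\sigma_1^2(X)+\sigma_2^2(X)\leqslant C\bigl(\norm{A}_F^2+\norm{B}_F^2\bigr)$ holds throughout $\cX^{(2)}_d$; equivalently, that this sharp constant is attained.

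For the upper bound I would simply invoke Corollary~\ref{coro1}. Every $X=A\otimes\I+\I\otimes B\in\cX^{(2)}_d$ satisfies $\trace A=\trace B=0$ with one of $A,B$ normal, which are exactly the hypotheses of that corollary, so $\sigma_1^2(X)+\sigma_2^2(X)\leqslant\tfrac{3d-4}{d}\bigl(\norm{A}_F^2+\norm{B}_F^2\bigr)$. This already gives the ``$\leqslant$'' half and shows that $C=\tfrac{3d-4}{d}$ is admissible.

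For the reverse direction I would show the constant cannot be lowered by exhibiting a saturating member of $\cX^{(2)}_d$. The natural candidate is the diagonal pair $(A,B)$ introduced before Theorem~\ref{main-result11}; being diagonal, both matrices are normal, so the pair lies in $\cX^{(2)}_d$. Since $A,B$ are diagonal, $X$ is diagonal with entries $a_i+b_j$, whence its singular values are the numbers $\abs{a_i+b_j}$. The computation to carry out is to check that the largest value $(3d-4)\beta$ of $\abs{a_i+b_j}$ is attained, and to determine its multiplicity, so that $\sigma_1(X)=\sigma_2(X)=(3d-4)\beta$ and hence $\sigma_1^2(X)+\sigma_2^2(X)=2(3d-4)^2\beta^2=\tfrac{3d-4}{d^2}$, which equals $\tfrac{3d-4}{d}\bigl(\norm{A}_F^2+\norm{B}_F^2\bigr)$ precisely because $\norm{A}_F^2+\norm{B}_F^2=\tfrac1d$ for this pair. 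Rescaling by an arbitrary $c>0$ multiplies both sides by $c^2$, so equality persists for every value of the norm; this removes the normalization and establishes the identity over all of $\cX^{(2)}_d$.

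The only place requiring genuine care—and the sole obstacle—is the singular-value bookkeeping for the extremal pair: one must confirm that the maximum of $\abs{a_i+b_j}$ is achieved with multiplicity exactly two, coming from $a_1+b_1$ and $a_1+b_2$, rather than just once. If the maximum were simple, only $\sigma_1$ would equal $(3d-4)\beta$ and the sum of the two largest squares would fall short; it is precisely the degeneracy $b_1=b_2=(d-2)\beta$ in the construction that forces both $\sigma_1$ and $\sigma_2$ to equal $(3d-4)\beta$ and makes the bound tight. Everything else follows immediately from Corollary~\ref{coro1} and the degree-two homogeneity.
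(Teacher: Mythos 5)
Your proposal is correct and follows essentially the same route as the paper: the upper bound is Corollary~\ref{coro1}, and sharpness comes from the explicit diagonal pair exhibited before Theorem~\ref{main-result11}, whose degeneracy $b_1=b_2=(d-2)\beta$ indeed makes $\abs{a_1+b_1}=\abs{a_1+b_2}=(3d-4)\beta$ the two largest singular values, giving $2(3d-4)^2\beta^2=\tfrac{3d-4}{d^2}=\tfrac{3d-4}{d}\bigl(\norm{A}_F^2+\norm{B}_F^2\bigr)$. Your reading of the equality as sharpness of the constant under degree-two homogeneity is also the intended one.
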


The rest of the paper is organized as follows.  In Section~\ref{sec:pre},  we will
introduce some  necessary mathematical notations and the
background of optimization theory. In particular, we provide two useful lemmas
which will be used in the proof of our main result repeatedly.
 In Section~\ref{sec3}, the proof of our main result is provided, that is, prove
the validness of \cref{cj:high_var} when  $A$ is normal.
Finally, some concluding remarks are given in Section~\ref{sec:concluding}.
\section{Preliminaries}
\label{sec:pre}
\subsection{Mathematical Notations}
Let us first introduce some mathematical notations. We refer to $\bbC^{d\times d}$, $\bbR^{d\times d}$, and
$\mathbb{H}^{d\times d}$ as the $d\times d$ complex matrices, real matrices, and Hermitian matrices, respectively.

To be clearly, hereafter in this paper, we use the capital letters to indicate the matrices, for example,
$A\in \bbC^{d\times d}$.  On the other hand, we use the bold lowercase letters to indicate the vectors, for example,
$\bm{a}\in \bbC^d$.  In particular, throughout this paper, we use $\kett{e}_i,i=1,\ldots,d$ to indicate the natural basis in
$\bbC^d(\text{ or } \bbR^d)$ , that is, $\kett{e}_i$ is the vector whose $i$-th entry is one and all the others are
zeros.

By convention, $a_i$ is denoted as the $i$-th entry of the vector $\kett{a}$ and $a_{ij}$ the $(i,j)$-th entry of the
matrix $A$.

For any matrix $A=(a_{ij})\in\bbC^{d\times d}$, we use $\norm{A}_F$ to denote the {\em Frobenius Norm}\,:
\begin{equation}
  \norm{A}_F = \sqrt{\sum_{ij=1}^d|a_{ij}|^2}.
\end{equation}
Another important operator for matrices is the {\em trace operator}\,:
\begin{equation}
  \trace{A} = \sum_{i=1}^da_{ii}.
\end{equation}
In addition, we have the relation:
\begin{equation}
  \trace{A\adj A} = \trace{\adj A A} = \norm{A}_F^2,
\end{equation}
where $\adj A$ indicates the conjugate transpose of $A$. Moreover, $A^\intercal$ indicates the transpose of the matrix
$A$.  A matrix $A$ with $M$ rows and $N$ columns is called an $M\times N$ matrix, while $M$ and $N$ are called its
dimensions.  By convention, we use $\mathrm{dim}(A)$ to indicate the dimension if $A$ is square, i.e.,
$\mathrm{dim}(A)=N$. Morevoer, we say a matric $A$ is {\em traceless} when $\trace{A}=0$.

For a vector $\kett a\in\bbC^{d}$, we use $\norm{\bm a}$ to denote the {\em Euclidean norm}, that is,
\begin{equation}
  \norm{\bm a} = \sqrt{\sum_{i=1}^d|a_i|^2}
  =\sqrt{\trace{\kett{a}\braa{a}}} =\sqrt{\langle \kett{a}, \kett{a} \rangle }.
\end{equation}


Throughout this paper, we use $\I$ to denote the {\em identity operator}. In particular, $\I_d$ indicates the identity
operator of order $d$. If no other specified, the order of $\I$ should match the operations in context.

Given two matrices $A=(a_{ij})$ and $B=(b_{ij})$ in $\bbC^{d\times d}$, the {\em tensor product} (Kronecker
product~\cite{Horn_1991}) is defined as
\begin{equation}
  A\otimes B =
  \begin{bmatrix}
    a_{11}B &\cdots & a_{1d}B\\
    \vdots & \ddots & \vdots \\
    a_{d1}B& \cdots & a_{dd}B
  \end{bmatrix}.
\end{equation}
For any two Hilbert spaces $\cH_1$ and $\cH_2$, the {\em tensor space} is defied by
\begin{equation} \cH_1\otimes \cH_2 = \{\ket u\otimes \ket v: \ket u\in \cH_1,\ket v\in\cH_2\}. 
\end{equation}

In addition, the {\em direct sum} $A\oplus B$ is defined as
\begin{equation}
  A\oplus B =
  \begin{bmatrix}
    A&0\\
    0&B
  \end{bmatrix}
\end{equation}
for any two given square matrices.

For any matrix $A$, denote by $\lambda_i(A)$ and $\sigma_i(A)$ the $i$-th largest eigenvalue and singular value of $A$,
respectively. For any two matrices $X$ and $Y$, we say $X$ is {\em unitarily similar} to $Y$, i.e., $X\sim Y$, when
there exists a unitary operator $U$ such that $X = UY\adj U $. In particular, in the composite system
$\cH_1\otimes \cH_2$, $X$ is said to be {\em locally unitarily similar} to $Y$ if there exist unitary operators $U$ and
$V$ acting on $\cH_1$ and $\cH_2$, respectively, such that $X = (U\otimes V)Y\adj{(U\otimes V)}$, where $X$ and $Y$ are
the operators acting on $\cH_1\otimes \cH_2$.

Let $\cP$ denotes the set of all matrices which is locally unitarily similar to
\begin{equation}
  \oplus_{i=1}^lB_i,
\end{equation}
where each $B_i$ is either an $1\times 1$ or $2\times 2$ matrix and $\sum_{i=1}^l\mathrm{dim}(B_i)=d$.

As for the singular values, there exists a well-known Weyl's inequality.
\begin{lemma}[Weyl's inequality~\cite{Weyl_1912}]
  Let $A$ and $B$ are two matrices in $\bbC^{d\times d}$. Then
  \begin{equation}
    \sigma_{i+j-1}(A+B)\leqslant \sigma_i(A)+\sigma_j(B),
  \end{equation}
  for all $1\leqslant i,j\leqslant d, i+j\leqslant d$.  In particular,
  \begin{equation}
    \sigma_1(A+B) \leqslant \sigma_1(A)+\sigma_1(B).
  \end{equation}
\end{lemma}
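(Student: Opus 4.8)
The plan is to prove Weyl's inequality for singular values from the variational (Courant--Fischer) characterization together with a dimension-counting argument. First I would recall that for any $M\in\bbC^{d\times d}$ the matrix $\adj M M$ is Hermitian and positive semidefinite with eigenvalues $\sigma_1^2(M)\geqslant\cdots\geqslant\sigma_d^2(M)$, so that $\norm{Mx}^2=\langle x,\adj M M\,x\rangle$. Applying the Courant--Fischer min--max theorem to $\adj M M$ and taking square roots yields the singular-value analogue
\begin{equation}
  \sigma_k(M)=\min_{\dim S=d-k+1}\ \max_{\substack{x\in S\\\norm{x}=1}}\norm{Mx},
\end{equation}
where the minimum ranges over all linear subspaces $S\subseteq\bbC^d$ of the indicated dimension. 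This characterization, which I would either cite or establish in one line, is the only structural input needed.

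Next I would fix indices $i,j$ with $i+j\leqslant d$ and choose optimal subspaces. Let $S_A$ attain the minimum for $\sigma_i(A)$, so $\dim S_A=d-i+1$ and $\norm{Ax}\leqslant\sigma_i(A)$ for every unit vector $x\in S_A$; likewise let $S_B$ attain the minimum for $\sigma_j(B)$, so $\dim S_B=d-j+1$ and $\norm{Bx}\leqslant\sigma_j(B)$ on unit vectors of $S_B$. The key observation is the dimension count
\begin{equation}
  \dim(S_A\cap S_B)\geqslant\dim S_A+\dim S_B-d=d-(i+j-1)+1,
\end{equation}
so the intersection contains a subspace of the exact dimension that the min--max formula for $\sigma_{i+j-1}(A+B)$ requires.

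For any unit vector $x$ lying in $S_A\cap S_B$ the triangle inequality of the Euclidean norm gives $\norm{(A+B)x}\leqslant\norm{Ax}+\norm{Bx}\leqslant\sigma_i(A)+\sigma_j(B)$. Feeding a subspace $S\subseteq S_A\cap S_B$ of dimension $d-(i+j-1)+1$ into the min--max expression for $\sigma_{i+j-1}(A+B)$ then bounds the outer minimum by the maximum over this particular $S$, which is at most $\sigma_i(A)+\sigma_j(B)$. This is precisely the claimed inequality $\sigma_{i+j-1}(A+B)\leqslant\sigma_i(A)+\sigma_j(B)$, and specializing to $i=j=1$ recovers $\sigma_1(A+B)\leqslant\sigma_1(A)+\sigma_1(B)$.

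The only delicate point, and the part I would check most carefully, is the index bookkeeping: one must verify that $i+j\leqslant d$ makes $i+j-1$ a legitimate index and keeps all three subspace dimensions in the valid range $1\leqslant d-k+1\leqslant d$, so that the intersection is guaranteed nonempty and the min--max formulas apply verbatim. Everything else reduces to the standard triangle inequality, so no genuine analytic obstacle arises.
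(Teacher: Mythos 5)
Your proposal is correct and complete. Note that the paper itself offers no proof of this lemma at all---it is stated as a known fact with a citation to Weyl---so there is no in-paper argument to compare against; what you have supplied is the standard textbook proof, and it checks out. The three ingredients are all sound: the reduction of singular values to eigenvalues of $\adj M M$ together with Courant--Fischer gives
\begin{equation}
  \sigma_k(M)=\min_{\dim S=d-k+1}\ \max_{x\in S,\ \norm{x}=1}\norm{Mx},
\end{equation}
the minimum being attained on the span of the eigenvectors of $\adj M M$ for its $d-k+1$ smallest eigenvalues; the dimension count $\dim(S_A\cap S_B)\geqslant (d-i+1)+(d-j+1)-d=d-(i+j-1)+1$ is right; and the triangle inequality on the intersection closes the argument. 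Your index bookkeeping is also fine, and in fact slightly conservative: the intersection has dimension at least $d-i-j+2\geqslant 1$ whenever $i+j\leqslant d+1$, so the inequality holds under that weaker hypothesis; the paper's condition $i+j\leqslant d$ is simply a safe (and standard) way of stating it. The only point worth making explicit if you wrote this out in full is the one-line verification that an optimal subspace exists (attainment of the minimum), which you implicitly use when you ``choose optimal subspaces''; since the minimum is over a compact family and is attained on an explicit eigenspace, this is immediate.
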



\subsection{Supporting Lemmas in Optimization Theory}

Our developments in the next section will be heavily relied on the optimization theory. Given a function $f$, the
maximization problem with linear and quadratic constraints can be formulated as:
\begin{equation}\label{eq:pre:max}
  \begin{array}{rl}
    \max & \quad f(\kett{x})\\
    \text{s.t.}& \quad
                 \begin{cases}
                   \langle \kett{c}_i,\kett{x}\rangle &=0, \quad i=1, \ldots, m, \\
                   \braa{x} W \kett{x} &=r,
                 \end{cases}
  \end{array}
\end{equation}
where $\kett{c}_i\in\bbR^N,\;\kett{x}\in\bbR^N$, $W\in\bbR^{N\times N}$, and $r\in\bbR$.  Here $f$ is usually called the
{\em objective function} which is desired to be maximized.

In particular, the quadratic constraints in this paper always appear as the following form:
\begin{equation}
  \sum_{i=1}^Nw_ix_i^2=r,\quad w_i>0,\quad r>0.
\end{equation}
That is, $W$ is a positive diagonal matrix.  Moreover, the objective function $f$ is always continuous differentiable
and degree-$2$ homogeneous.  Recall that, a function $f(\kett{x})$ with $\kett{x}\in\bbR^N$ is {\em degree-$k$
  homogeneous} $(k>0)$ if
\begin{equation}
  f(t\kett{x}) = t^k f(\kett{x}),\quad \forall\; t\in\bbR.
\end{equation}

In addition, we say that a vector $\kett{x}$ is a {\em feasible point } of the optimization problem~\eqref{eq:pre:max}
if it satisfies all the constraints of the optimization problem, i.e., the linear and quadratic constraints in
\eqref{eq:pre:max}.

The set $\setS$ is said to be {\em feasible set} of the optimization problem if it is the set of all feasible points.
In particular, the {\em optimal solution} of the optimization problem is the one at which the objective function
achieves its maximal value.

One should be noted that for a specific optimization problem, there may exist many different optimal
solutions. Moreover, the optimization problem~\eqref{eq:pre:max} can be written in the following form:
\begin{equation}
  \max_{\kett{x}\in\setS}f(\kett{x}).
\end{equation}

Denote by $\mathcal{L}$ the Lagrange function
\begin{equation}
  \label{eq:pre:L}
  \mathcal{L}(\kett{x},\kett{\mu},\nu)
  = f(\kett{x})  - \sum_{i=1}^m\mu_i \langle \kett{c}_i,\kett{x}\rangle -\nu (\braa{x}W\kett{x}-r),
\end{equation}
where $\kett{\mu} = (\mu_1,\mu_2,\ldots,\mu_m)^\intercal\in\bbR^m$ and $\nu\in\bbR$.  Then we have the well-known KKT
condition for the optimization problem.

\begin{lemma}[First order KKT condition~\cite{Karush_2013}]
  \label{lem:KKT}
  Suppose the optimization problem is defined by \eqref{eq:pre:max} with Lagrange function defined by
  \eqref{eq:pre:L}. If $\kett x^*$ is a optimized solution to the problem, then there exist $\mu_i^*\in\bbR$ and
  $\nu^*\in\bbR$ such that
  \begin{equation}
    \left. \nabla \mathcal{L}(\kett{x},\kett\mu,\nu) \right|_{(\kett{x}^*,\kett\mu^*,\nu^*)}=0,
  \end{equation}
  that is,
  \begin{subequations}
    \begin{align}
      \nabla f(\kett{x}^*) -\sum_{i=1}^m\mu_i^* \kett{c}_i -2\nu^* W\kett{x}^*& = 0,\label{KKT:a}\\[-8pt]
      \langle \kett{c}_i, \kett{x}^*\rangle & =0,\\[2pt]
      \langle \kett{x}^*,W\kett{x}^*\rangle  & =r.
    \end{align}
  \end{subequations}
\end{lemma}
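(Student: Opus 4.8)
The plan is to recognize Lemma~\ref{lem:KKT} as nothing more than the classical Lagrange multiplier theorem for equality-constrained optimization specialized to the present constraints, so that the real work reduces to three steps: (i) eliminate the linear constraints by passing to the subspace they cut out, (ii) verify the constraint qualification for the single remaining quadratic constraint, and (iii) lift the resulting stationarity relation back to all of $\bbR^N$. The two feasibility identities in the statement are immediate, since they merely assert that the given optimal $\kett{x}^*$ satisfies the constraints of~\eqref{eq:pre:max}; all the content is in~\eqref{KKT:a}.

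First I would set $V=\{\kett{x}\in\bbR^N:\langle \kett{c}_i,\kett{x}\rangle=0,\ i=1,\dots,m\}$, so that the feasible set is $\setS=\{\kett{x}\in V:\braa{x}W\kett{x}=r\}$. By hypothesis the optimal point $\kett{x}^*$ lies in $\setS$, hence $\kett{x}^*\in V$ and $\langle \kett{x}^*,W\kett{x}^*\rangle=r$. Restricting attention to $V$ recasts the problem as the maximization of the restriction $f|_V$ subject to the single smooth equality constraint $h(\kett{x}):=\braa{x}W\kett{x}-r=0$, and I will write $P_V$ for the orthogonal projection of $\bbR^N$ onto $V$, so that $V^\perp=\mathrm{span}\{\kett{c}_1,\dots,\kett{c}_m\}$.

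The crux is the constraint qualification: the gradient of $h|_V$ at $\kett{x}^*$ is $P_V(2W\kett{x}^*)$, and I must show it is nonzero. Suppose instead $P_V(W\kett{x}^*)=0$, i.e. $W\kett{x}^*\in V^\perp=\mathrm{span}\{\kett{c}_i\}$. Since $\kett{x}^*\in V$ is orthogonal to every $\kett{c}_i$, this would force $\langle \kett{x}^*,W\kett{x}^*\rangle=0$, contradicting $\langle \kett{x}^*,W\kett{x}^*\rangle=r>0$. Hence $P_V(W\kett{x}^*)\neq 0$, so $\kett{x}^*$ is a regular point of the level set $\{\kett{x}\in V:h(\kett{x})=0\}$. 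By the implicit function theorem this level set is, near $\kett{x}^*$, a smooth hypersurface of $V$ whose tangent space is $\{\kett{v}\in V:\langle W\kett{x}^*,\kett{v}\rangle=0\}$. Composing $f$ with any smooth curve lying in this surface and passing through $\kett{x}^*$ and differentiating at the maximum shows that $\langle \nabla f(\kett{x}^*),\kett{v}\rangle=0$ for every such tangent direction $\kett{v}$; equivalently $P_V\nabla f(\kett{x}^*)$ is parallel to $P_V(W\kett{x}^*)$. This produces a scalar $\nu^*\in\bbR$ with $P_V\big(\nabla f(\kett{x}^*)-2\nu^* W\kett{x}^*\big)=0$.

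Finally I would lift this identity back to $\bbR^N$: the previous relation says precisely that $\nabla f(\kett{x}^*)-2\nu^* W\kett{x}^*\in V^\perp=\mathrm{span}\{\kett{c}_i\}$, so there are scalars $\mu_1^*,\dots,\mu_m^*\in\bbR$ with $\nabla f(\kett{x}^*)-\sum_{i=1}^m\mu_i^*\kett{c}_i-2\nu^* W\kett{x}^*=0$, which is exactly~\eqref{KKT:a}. The only genuinely nontrivial ingredient is the constraint qualification, so I expect verifying that $P_V(W\kett{x}^*)\neq 0$ to be the main obstacle in a careful write-up; as shown above, however, it holds automatically in this setting because $W$ is positive definite and $r>0$, which together prevent $W\kett{x}^*$ from lying in the span of the linear-constraint normals.
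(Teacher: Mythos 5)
The paper does not prove this lemma at all: it is invoked as the classical first-order Lagrange/KKT condition with a citation to Karush, and no argument appears anywhere in the text or appendices. Your proposal is therefore not comparable line-by-line with anything in the paper; judged on its own, it is a correct and self-contained derivation by the standard regular-value route. Passing to $V=\{\kett{x}:\langle\kett{c}_i,\kett{x}\rangle=0\}$, noting $V^\perp=\mathrm{span}\{\kett{c}_1,\ldots,\kett{c}_m\}$, and observing that $W\kett{x}^*\in V^\perp$ would force $\langle\kett{x}^*,W\kett{x}^*\rangle=0\neq r$ is exactly the right way to obtain the constraint qualification, and the tangent-space orthogonality plus the lift of $\nabla f(\kett{x}^*)-2\nu^*W\kett{x}^*$ into $V^\perp$ delivers \eqref{KKT:a} with the multipliers $\mu_i^*$ (whose existence, not uniqueness, is all that is claimed, so linear dependence among the $\kett{c}_i$ is harmless). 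Your proof in fact buys something the paper's bare citation does not: as literally stated, with arbitrary $W\in\bbR^{N\times N}$ and $r\in\bbR$, the lemma fails without a constraint qualification (take $N=1$, no linear constraints, $W=1$, $r=0$, $f(x)=x$, so $\kett{x}^*=0$ but $\nabla f(0)=1\neq 2\nu\cdot 0$), and your argument makes explicit that regularity is automatic in the regime the paper actually uses, namely $W$ positive diagonal and $r>0$. Two small inaccuracies are worth fixing in a careful write-up: first, the identity $\nabla h(\kett{x})=2W\kett{x}$ tacitly uses symmetry of $W$ (true for the paper's diagonal $W$; for general $W$ one must replace $W$ by its symmetric part, and \eqref{KKT:a} itself is only natural for symmetric $W$); second, your closing sentence attributes the constraint qualification to positive definiteness of $W$ together with $r>0$, but your own contradiction uses only $r\neq 0$ and $\kett{x}^*\in V$ --- positivity of $W$ plays no role in that step.
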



The following two lemmas about the necessary conditions of the optimization problem are useful.

\begin{lemma}\label{lem:y:square}%
Suppose $f(\kett{x})$ is a degree-two homogeneous function where $\kett{x}\in\bbR^N$.  The optimization problem is
  defined as follows:
  \begin{equation}\label{eq:opt1}
    \begin{array}{rl}
      \max & \quad f(\kett{x})+ \sum_{i=1}^M \xi_iy_i^2\\
      \mathrm{s.t. }& \quad
                      \begin{cases}
                        \langle \kett c_i,\kett{x}\rangle & =0, \\
                        \sum_{i=1}^N\tau_i~x_i^2+\sum_{i=1}^M \omega_iy_i^2 &= r,
                      \end{cases}
    \end{array}
  \end{equation}
  where $\kett{y}\in\bbR^M$, $\kett{c_i}\in\bbR^N,\;i=1, \ldots, m,$, $\xi_i\in\bbR $, $\tau_i~>0$, and $\omega_i>0 $.
  Denote by $\setS$ the feasible set of the optimization problem.  Let $(\kett{x}^*,\kett y^*)$ be an optimal
  solution. Then
  \begin{equation}\label{eq:2.2}
    \max_{(\kett{x},\kett{y})\in\setS}f(\kett{x})+ \sum_{i=1}^M \xi_iy_i^2= \eta r, \quad \text{or} \quad \kett y^*=0.
  \end{equation}
  where $\eta = \max\{\frac{\xi_i}{\omega_i},i=1,2,\ldots,M\}$.
\end{lemma}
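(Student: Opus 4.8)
The plan is to exploit the block structure of the problem: both the objective and the quadratic constraint split cleanly into an $\kett{x}$-part and a $\kett{y}$-part, and the linear constraints touch only $\kett{x}$. First I would bound the $\kett{y}$-block. Since $\eta=\max_i \xi_i/\omega_i$ and every $\omega_i>0$, we have $\xi_i\leqslant\eta\,\omega_i$ for all $i$, hence
\begin{equation}
  \sum_{i=1}^M \xi_i y_i^2 \;\leqslant\; \eta \sum_{i=1}^M \omega_i y_i^2,
\end{equation}
with equality exactly when all the $\kett{y}$-weight sits on indices attaining the maximal ratio. Next I would bound the $\kett{x}$-block via homogeneity. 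Because $f$ is degree-two homogeneous and the constraints $\langle\kett{c}_i,\kett{x}\rangle=0$ are scale invariant, the number
\begin{equation}
  M_0 = \max\Big\{\, f(\kett{x}) : \langle\kett{c}_i,\kett{x}\rangle=0,\ \textstyle\sum_{i=1}^N \tau_i x_i^2 = 1 \,\Big\}
\end{equation}
is attained (the feasible set is compact because each $\tau_i>0$ makes it a bounded ellipsoid slice), and the scaling $\kett{x}\mapsto t\kett{x}$ then gives $f(\kett{x})\leqslant M_0\sum_i \tau_i x_i^2$ for every $\kett{x}$ obeying the linear constraints.

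Combining the two bounds, for any feasible $(\kett{x},\kett{y})\in\setS$ write the budget split $s=\sum_i\tau_i x_i^2\in[0,r]$, so that $\sum_i\omega_i y_i^2=r-s$, and obtain
\begin{equation}
  f(\kett{x}) + \sum_{i=1}^M \xi_i y_i^2 \;\leqslant\; M_0\, s + \eta\,(r-s) \;=\; (M_0-\eta)\,s + \eta r .
\end{equation}
The right-hand side is affine in $s$ on $[0,r]$, so the optimum over $\setS$ is at most $\max\{M_0 r,\eta r\}$. Both endpoint values are in fact attainable ($\kett{x}=0$ with all budget on a best $\kett{y}$-index gives $\eta r$; $\kett{y}=0$ with $\kett{x}$ equal to the $M_0$-optimizer rescaled to budget $r$ gives $M_0 r$), so the optimal value is exactly $\max\{M_0 r,\eta r\}$.

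The dichotomy then follows from a case split. If $M_0\leqslant\eta$, the optimal value equals $\eta r$, which is the first alternative. If instead $M_0>\eta$, take any optimal $(\kett{x}^*,\kett{y}^*)$ with split $s^*$; the chain above bounds its value by $(M_0-\eta)s^*+\eta r$. Were $\kett{y}^*\neq 0$, then since all $\omega_i>0$ we would have $s^*<r$, forcing the value strictly below $M_0 r$ and contradicting optimality (as $M_0 r$ is attainable). Hence $\kett{y}^*=0$, the second alternative. The only step needing genuine care is the homogeneity reduction: I must confirm that $M_0$ is well defined and attained (compactness from $\tau_i>0$) and that the scaling legitimately transfers the unit-budget optimum to every budget, including the degenerate case $\kett{x}=0$. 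I do not expect any obstacle beyond making this compactness/homogeneity bookkeeping precise; the remaining inequalities are elementary.
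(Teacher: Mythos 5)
Your proof is correct, and it rests on the same mechanism as the paper's: degree-two homogeneity makes the objective affine in the budget split $s=\sum_i\tau_i x_i^2$ between the $\kett{x}$-block and the $\kett{y}$-block, so the optimum sits at an extreme split, and the pure-$\kett{y}$ extreme is worth exactly $\eta r$. The packaging differs in two ways worth noting. First, the paper argues by contradiction along the one-parameter feasible curve $(t\kett{x}^*,\alpha_t\kett{y}^*)$ through an assumed optimum, observing that $h(t)=\beta_1 t^2+\beta_2$ is monotone in $t^2$ and deriving a contradiction in either sign case of $\beta_1$; you instead introduce the global constant $M_0=\max\{f:\ \langle\kett{c}_i,\kett{x}\rangle=0,\ \sum_i\tau_i x_i^2=1\}$ and prove the clean identity that the optimal value equals $\max\{M_0 r,\eta r\}$, which is slightly stronger than the stated dichotomy and arguably more transparent. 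Second, where the paper invokes its Rayleigh-quotient Lemma A1 to bound $\sum_i\xi_i y_i^2/\sum_i\omega_i y_i^2$, you use the termwise inequality $\xi_i\leqslant\eta\,\omega_i$, which is the elementary way to get the same bound. The price of your global formulation is the extra existence claim that $M_0$ is attained; this is harmless given that the paper's setup assumes $f$ continuous and $\tau_i>0$ (so the unit slice is compact), but you should also dispose explicitly of the degenerate case where the linear constraints force $\kett{x}=0$, in which case the slice is empty and the value is trivially at most $\eta r$ with equality attainable, so the first alternative holds. With that bookkeeping spelled out, the argument is complete.
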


\begin{lemma}\label{lem:y:same}
  Suppose the optimization problem is defined as follows:
  \begin{equation}\label{eq:opt2}
    \begin{array}{rl}
      \max & \quad f(\kett{x})\\
      \text{s.t.}& \quad
                   \begin{cases}
                     \langle \kett{c},\kett{x}\rangle +\sum_{i=1}^My_i &=0,\\
                     \langle \kett{d}_i,\kett{x}\rangle & =0,   \\
                     \sum_{i=1}^N\tau_i~x_i^2+\sum_{i=1}^My_i^2 & =r,
                   \end{cases}
    \end{array}
  \end{equation}
  where $\kett{x},\;\kett{c}\in\bbR^N,\;\kett{d}_i\in\bbR^N,\; i=1, \ldots, m,,\;\kett{y}\in\bbR^M$ and $\tau_i>0 $.  In
  addition, $f$ is degree-2 homogeneous and $f(\kett{x}^{0})>0$ for some feasible point $\kett{x}^0$.  Then, the maximal
  value of the objective function $f$ is achieved only when
  \begin{equation}
    y_1=y_2=\cdots=y_M.
  \end{equation}
\end{lemma}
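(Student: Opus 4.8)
The plan is to exploit the fact that $f$ depends only on $\kett{x}$, while $\kett{y}$ enters solely through the constraints, by means of a scaling argument. First I would observe that the optimal value is at least $f(\kett{x}^0)>0$, so at any optimal solution $(\kett{x}^*,\kett{y}^*)$ one has $f(\kett{x}^*)>0$; in particular $\kett{x}^*\neq 0$, and since every $\tau_i>0$ this forces $\sum_{i=1}^N\tau_i(x_i^*)^2>0$. The strategy is then to argue by contradiction: assuming the $y_i^*$ are not all equal, I will build a strictly better feasible point by equalizing the $y_i$ and compensating with an enlargement of $\kett{x}$.

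The key elementary inequality is that, for a prescribed sum $s=\sum_{i=1}^M y_i$, the quantity $\sum_{i=1}^M y_i^2$ is minimized precisely when all coordinates coincide, the minimum being $s^2/M$ (this is Cauchy--Schwarz, equivalently strict convexity of $t\mapsto t^2$). Writing $s^*=\sum_{i=1}^M y_i^*=-\langle\kett{c},\kett{x}^*\rangle$, the hypothesis that the $y_i^*$ are not all equal yields the strict inequality $\sum_{i=1}^M (y_i^*)^2>(s^*)^2/M$. Combining this with the active quadratic constraint $\sum_i\tau_i(x_i^*)^2+\sum_i(y_i^*)^2=r$ gives $\sum_i\tau_i(x_i^*)^2+(s^*)^2/M<r$, so the scalar $\lambda$ defined by $\lambda^2=r/\big(\sum_i\tau_i(x_i^*)^2+(s^*)^2/M\big)$ is well defined and satisfies $\lambda>1$.

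Then I would set $\tilde{\kett{x}}=\lambda\,\kett{x}^*$ and $\tilde{y}_i=\lambda s^*/M$ for every $i$, and check feasibility: the constraints $\langle\kett{d}_i,\tilde{\kett{x}}\rangle=0$ survive scaling; the first linear constraint holds because $\sum_i\tilde{y}_i=\lambda s^*=-\langle\kett{c},\tilde{\kett{x}}\rangle$; and the choice of $\lambda$ makes the quadratic constraint active, $\sum_i\tau_i\tilde{x}_i^2+\sum_i\tilde{y}_i^2=\lambda^2\big(\sum_i\tau_i(x_i^*)^2+(s^*)^2/M\big)=r$. By degree-$2$ homogeneity, $f(\tilde{\kett{x}})=\lambda^2 f(\kett{x}^*)>f(\kett{x}^*)$ since $\lambda>1$ and $f(\kett{x}^*)>0$, contradicting optimality; hence the $y_i^*$ are all equal. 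I expect the only delicate points to be the two nondegeneracy facts, namely that the optimum has $f(\kett{x}^*)>0$ and $\sum_i\tau_i(x_i^*)^2>0$, because these are exactly what guarantee the enlargement factor $\lambda$ strictly exceeds $1$; the remaining verifications are routine bookkeeping.
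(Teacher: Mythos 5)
Your proof is correct and follows essentially the same route as the paper's: both arguments strictly decrease $\sum_{i}y_i^2$ at fixed $\sum_i y_i$ (you by averaging all coordinates at once via Cauchy--Schwarz, the paper by a pairwise transfer $(y_1^*+t,\,y_2^*-t)$), then rescale the whole feasible point by a factor $\lambda>1$ and invoke degree-$2$ homogeneity together with $f(\kett{x}^*)\geqslant f(\kett{x}^0)>0$ to contradict optimality. Your explicit check that $\sum_i\tau_i(x_i^*)^2>0$, which makes $\lambda$ well defined, is a point the paper glosses over.
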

For the sake of conciseness, we move the proofs of the above lemmas to~\cref{appendix:y:square,appendix:y:same},
respectively.

\section{Proof of the Main Result}
\label{sec3}
In this section, we will prove our main result \cref{main-result1}, that is, the validity of \cref{cj:high_var}
when one of $A$ and $B$  is normal and the other is arbitrary.
 With a locally unitary similarity, we can assume that $A$ is diagonal.
 Therefore, in the following development of this
  section we always assume that
\begin{equation}\label{A}
A = \mathrm{diag}(a_1,a_2,\ldots,a_d) \in
  \bbC^{d\times d}.
\end{equation}
Thus, $X$ can be written as a direct sum
\begin{equation} X = \oplus_{i=1}^{d}(a_i\I + B). \end{equation}

In order to prove Theorem \ref{main-result1}, we prove the following theorem first, which can be regarded as the real
version of Theorem \ref{main-result1}.

\

\begin{theorem}\label{main-result2}
  Conjecture \ref{cj:high_var} holds when
  \begin{equation} A = \mathrm{diag}(a_1,a_2,\ldots,a_d) \in \bbR^{d\times d}, \quad\quad B\in \bbR^{d\times d}.\end{equation}
\end{theorem}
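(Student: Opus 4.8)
The plan is to reduce the optimization over $X = \oplus_{i=1}^d (a_i \I + B)$ to a tractable constrained problem and then invoke the supporting lemmas \ref{lem:y:square} and \ref{lem:y:same} to eliminate degrees of freedom. Since $A$ is real diagonal and $B$ is real, I first observe that $\sigma_1^2(X) + \sigma_2^2(X)$ is the sum of the two largest eigenvalues of $X^\intercal X$, which I would express via a variational (Ky Fan) characterization: $\sigma_1^2(X)+\sigma_2^2(X) = \max \{ \langle \kett{u}, X^\intercal X \kett{u}\rangle + \langle \kett{v}, X^\intercal X \kett{v}\rangle \}$ over orthonormal pairs $\kett{u},\kett{v}$. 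Because $X$ is block diagonal, each maximizing vector decomposes along the $d$ blocks, and the contribution of block $i$ is governed by $(a_i\I + B)^\intercal(a_i\I+B)$. I would parametrize the problem by the real scalars $a_1,\dots,a_d$ together with the entries of $B$, carrying the constraints $\sum_i a_i = 0$, $\trace B = 0$, and $\sum_i a_i^2 + \norm{B}_F^2 = \tfrac{1}{d}$ from the definition of $\cX_d$.

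The key reduction step is to recognize the structural form that matches Lemma \ref{lem:y:square} and Lemma \ref{lem:y:same}. First I would set up the Lagrangian/KKT system via Lemma \ref{lem:KKT}, separating the variables into an ``$\kett{x}$'' block (the entries of $B$ and the singular-vector components that enter $f$ nonlinearly) and a ``$\kett{y}$'' block (the diagonal entries $a_i$, which appear quadratically in the objective through the $a_i\I$ shift and linearly in the trace constraint). Lemma \ref{lem:y:same} then forces the relevant $a_i$'s that are \emph{not} pinned by the objective to be equal, collapsing the $d$-parameter family of shifts to essentially two active values; Lemma \ref{lem:y:square} handles the case analysis where the remaining quadratic-in-$\kett{y}$ terms either saturate at the ratio $\eta$ or vanish ($\kett{y}^*=0$). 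Combining these should reduce the free data to a small number of parameters, after which the bound $\tfrac{3d-4}{d^2}$ can be verified directly, with the explicit extremizer ($A,B$ the diagonal matrices exhibited after \cref{coro1}) confirming tightness.

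The main obstacle I anticipate is the interaction between the two singular vectors and the block structure: the two largest singular values of $X$ may come from the \emph{same} block or from \emph{two different} blocks, and these cases behave differently under the KKT analysis. When they live in distinct blocks $a_i\I+B$ and $a_j\I+B$, the coupling through the shared matrix $B$ and the shared norm constraint must be tracked carefully, since perturbing $B$ affects every block simultaneously. I would therefore split the argument into the ``same-block'' and ``cross-block'' cases at the outset, apply the lemmas within each, and show that neither case exceeds $\tfrac{3d-4}{d^2}$. The cross-block case is where Lemma \ref{lem:y:same} does the heaviest lifting — equalizing the inactive diagonal entries of $A$ — and verifying that its positivity hypothesis $f(\kett{x}^0)>0$ holds (i.e.\ that the configuration is genuinely extremal rather than degenerate) is the delicate point. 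Once the real case (Theorem \ref{main-result2}) is established, lifting it to the complex Theorem \ref{main-result1} should follow by the unitary-reduction machinery of Theorem \ref{le:cj}, treating real and imaginary parts appropriately.
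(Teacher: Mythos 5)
Your outline matches the paper's strategy at the skeleton level: the same block decomposition $X=\oplus_i(a_i\I+B)$, the same split into ``same-block'' versus ``cross-block'' cases, and the same use of Lemmas~\ref{lem:y:square} and \ref{lem:y:same} to kill inactive variables. But there is a genuine gap in how you propose to close the argument. You claim that after the reductions ``the bound $\tfrac{3d-4}{d^2}$ can be verified directly,'' and this is precisely where the real work lives. The paper's terminal step in \emph{both} cases is not a direct verification but an appeal to the already-established normal case (Theorem~\ref{lem:normal}, in the form of Corollary~\ref{lem:normal2}): in the same-block case (Lemma~\ref{lem:anyB}) the objective collapses to $(a_1+b_{11})^2+(a_1+b_{22})^2+\sum_{i=1}^{2}\sum_{j\neq i}b_{ij}^2$, and the diagonal part is bounded by applying Corollary~\ref{lem:normal2} to the pair $(A,\mathrm{diag}(B))$ --- no optimization lemma is used there at all, contrary to your plan of ``applying the lemmas within each case.'' Without invoking the normal-case result as the base of the reduction, your same-block case does not close.

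In the cross-block case your sketch also underestimates the length of the reduction chain and slightly misassigns the roles of the two lemmas. The paper first uses Lemma~\ref{lem:y:square} with the $\kett{y}$-block being the \emph{off-diagonal entries of $B$} not appearing in the objective (not the $a_i$, as you suggest), which forces $B$ into $\cP$, i.e.\ a $2\times2$ block plus a diagonal tail. Only then does Lemma~\ref{lem:y:same} enter, with $\kett{y}=(a_3,\ldots,a_d)$ and separately $(b_{33},\ldots,b_{dd})$, to equalize the tails. What remains is still not a ``small number of parameters to check'': Lemma~\ref{lem:c3} requires an explicit one-variable maximization in $t=\tan\theta$, a change of variables to remove the trace constraints, a KKT argument to show $x=w$ (i.e.\ $a_1+a_2=b_{11}+b_{22}$), and a further case split on $pq=0$ versus $pq\neq 0$ --- where the $pq=0$ branch again terminates by observing that $B$ is then symmetric or anti-symmetric, hence normal, hence covered by Theorem~\ref{lem:normal}. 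So the missing idea is the systematic reduction to the normal case as the endpoint; the positivity hypothesis of Lemma~\ref{lem:y:same} that you flag as delicate is in fact the easy part ($h$ is a sum of two squared singular values, hence positive at any nondegenerate feasible point). Your final paragraph on lifting to the complex case via real and imaginary parts is consistent with the paper's proof of Theorem~\ref{main-result1}, though that lifting uses the orthogonality of real eigenvectors of $X\adj{X}$ and the inequality $\sigma_1^2(X)+\sigma_2^2(X)\leqslant\sum_{i=1}^{2}\bigl(\sigma_1^2(X_i)+\sigma_2^2(X_i)\bigr)$ rather than Theorem~\ref{le:cj}.
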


\

Note that the set of singular values of $X$ consists of all the singular values of each block $a_i\I + B$. Then the
problem of proving Theorem \ref{main-result2} naturally split into two different cases:
\begin{itemize}
  \item [Case~1]:  the largest two singular values of $X$ come from the same block, say $a_1\I+B$;
  \item [Case~2]:  the largest two singular values of $X$ come from two different blocks, say $a_1\I + B$ and $a_2\I + B$.
\end{itemize}

In the following we discuss these two cases separately.

\subsection{Case~1: the Largest Two Singular Values of $X$ Come From the Same Block}

We have the following result for this case.

\begin{lemma}%
  \label{lem:anyB}
  Let $X = A\otimes \I + \I \otimes B \in\cX_d\,(d\geqslant 4)$ with  $A=\mathrm{diag}(a_1,a_2,\ldots,a_d)$. If
  \begin{equation}
    \sigma_1(X) = \sigma_1(a_1\I + B),\quad \quad \sigma_2(X) = \sigma_2(a_1\I + B),
  \end{equation}
  then
  \begin{equation} \tar \leqslant \frac{3d-4}{d^2}.  \end{equation}
\end{lemma}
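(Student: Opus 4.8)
The plan is to bound the single-block quantity $\sigma_1^2(N)+\sigma_2^2(N)$ with $N:=a_1\I+B$, which by hypothesis equals $\tar$; I would bound it using only $\trace{A}=\trace{B}=0$ and $\norm{A}_F^2+\norm{B}_F^2=\frac1d$, so that the argument in fact applies to every block. First I would linearise the objective in the entries of $B$. Since singular values are invariant under $N\mapsto QN\trans{Q}$ for orthogonal $Q$, and the Ky Fan variational identity gives
\begin{equation}
  \sigma_1^2(N)+\sigma_2^2(N)=\max\bigl\{\norm{NW}_F^2 : W\in\bbR^{d\times2},\ \trans{W}W=\I_2\bigr\},
\end{equation}
I may rotate $B\mapsto QB\trans{Q}$ (which preserves tracelessness and $\norm{B}_F$) so that the optimal right-singular plane of $N$ becomes $\mathrm{span}(\kett{e}_1,\kett{e}_2)$. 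Hence, for the purpose of maximising over traceless $B$ of fixed norm,
\begin{equation}
  \sigma_1^2(N)+\sigma_2^2(N)=\norm{N\kett{e}_1}^2+\norm{N\kett{e}_2}^2=2a_1^2+2a_1(b_{11}+b_{22})+\sum_{i=1}^d\bigl(b_{i1}^2+b_{i2}^2\bigr),
\end{equation}
which depends only on $a_1$, the first two columns of $B$, and $\trace{B}$.

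Next I would maximise this expression over $a_1,\dots,a_d$ and the entries of $B$ subject to $\sum_i a_i=\trace{B}=0$ and $\sum_i a_i^2+\norm{B}_F^2=\frac1d$, stripping away every variable absent from the objective. The tail eigenvalues $a_2,\dots,a_d$ enter only the linear constraint $\sum_i a_i=0$ and the norm constraint, so by \cref{lem:y:same} they are equal at the optimum, $a_i=-a_1/(d-1)$, yielding the reduced budget $\frac{d}{d-1}a_1^2+\norm{B}_F^2=\frac1d$; likewise the tail diagonal entries $b_{33},\dots,b_{dd}$ are equal, contributing $(b_{11}+b_{22})^2/(d-2)$ to $\norm{B}_F^2$. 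The off-diagonal entries of columns $1$ and $2$ occur in the objective and in the norm constraint with the same positive weight, so \cref{lem:y:square} (here $\eta=1$) forces them to vanish at any optimum whose value exceeds $\eta r=\frac1d$; this holds because the explicit extremal matrices already attain $(3d-4)/d^2>\frac1d$.

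After these eliminations only two scalars survive, $a:=a_1$ and $s:=b_{11}+b_{22}$, with $b_{11}^2+b_{22}^2$ pinned by the budget and feasibility requiring $b_{11}^2+b_{22}^2\ge \frac12 s^2$. The objective becomes
\begin{equation}
  h(a,s)=2a^2+2as+\Bigl(\tfrac1d-\tfrac{d}{d-1}a^2-\tfrac{s^2}{d-2}\Bigr),
\end{equation}
whose unique interior stationary point is a saddle at the origin; the maximum therefore lies on the boundary ellipse $\frac{d}{d-1}a^2+\frac{d}{2(d-2)}s^2=\frac1d$, where $b_{11}=b_{22}=\frac12 s$ and $h=2(a+\frac12 s)^2$. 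A single application of Cauchy--Schwarz then gives $\max h=\frac{3d-4}{d^2}$, the claimed bound.

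The conceptual crux is the orthogonal-invariance reduction aligning the top right-singular plane with $\mathrm{span}(\kett{e}_1,\kett{e}_2)$, which converts a spectral quantity into an explicit quadratic in a few entries of $B$. The step I expect to be most delicate is the rigorous removal of the $O(d^2)$ nuisance variables down to $a$ and $s$ -- precisely the bookkeeping that \cref{lem:y:same,lem:y:square} are designed to discharge; the resulting two-variable optimisation is then routine.
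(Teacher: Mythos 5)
Your proposal is correct, and its opening move coincides with the paper's: pass to the relevant positive semidefinite matrix ($\trans{N}N$ in your column convention, $N\trans{N}$ in the paper's row convention), rotate $B$ by an orthogonal similarity so that the top singular plane becomes $\mathrm{span}(\kett{e}_1,\kett{e}_2)$, and thereby convert $\tar$ into the explicit quadratic $2a_1^2+2a_1(b_{11}+b_{22})+\sum_{i=1}^d(b_{i1}^2+b_{i2}^2)$. Where you genuinely diverge is in how this quadratic is bounded. The paper rewrites it as $(a_1+b_{11})^2+(a_1+b_{22})^2+\sum_{i=1}^2\sum_{j\neq i}b_{ij}^2$, applies Corollary~\ref{lem:normal2} (the known normal-case bound of Pankowski~\etal) to the normal traceless pair $A$, $\mathrm{diag}(B)$, and absorbs the off-diagonal terms in two lines using $\tfrac{3d-4}{d}\geqslant 1$. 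You instead run the constrained optimisation from scratch: \cref{lem:y:same} equalises $a_2,\dots,a_d$ and $b_{33},\dots,b_{dd}$, \cref{lem:y:square} with $\eta=1$ disposes of the off-diagonals (the alternative branch of the dichotomy already yields $\eta r=1/d\leqslant(3d-4)/d^2$, so you in fact never need to argue that the optimum exceeds $1/d$ -- your appeal to the extremal matrices, which realise a two-block configuration, is unnecessary), and the surviving two-variable problem collapses on the boundary ellipse to $2(a+s/2)^2$, which weighted Cauchy--Schwarz bounds by $(3d-4)/d^2$; I have checked this reduction and the final computation, and they are right. The trade-off is clear: your route is longer and leans on the optimisation machinery of \cref{lem:y:same,lem:y:square}, but it is self-contained (it does not consume Theorem~\ref{lem:normal}) and shows as a by-product that the single-block supremum equals $(3d-4)/d^2$ exactly; the paper's route is a three-line finish precisely because it recycles the already-established normal case.
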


\begin{proof}
  Denote by $B= (b_{ij})_{i,j=1}^d$. Note that
  \begin{subequations}
    \begin{align}
      \sigma_i^2(X)& = \sigma_i^2(a_1\I + B)
                    = \lambda_i(a_1^2 \I+ a_1(B+\trans B)+B\trans B)  \\
                   & = a_1^2 +
                     \lambda_i(a_1(B+\trans B)+B\trans B),\quad i=1,2.
    \end{align}
  \end{subequations}
  Denote by  $\kett{\phi}$ and $\kett{\psi}$  the two unit eigenvectors (up to a phase multiplication) associated with the first and second largest
  eigenvalues of $a_1(B+\trans B)+B\trans B$, respectively.  With an orthogonal similarity, we can assume that
  $\kett{\phi}=\kett{e}_1$ and $\kett{\psi}=\kett{e}_2$, i.e., replace $B$ with $ UB\trans U$ by some orthogonal
  operator $U$. Hence,
  \begin{align}
    \tar
    & = 2a_1^2 +\trace{(a_1(B+\trans B)+B\trans B)(\kett{\phi}\braa{\phi} + \kett{\psi}\braa{\psi})}\\
    & = 2a_1^2 + 2a_1(b_{11}+b_{22})+ \sum_{i=1}^d(b_{1i}^2+b_{2i}^2)\\
    & = (a_1+b_{11})^2+(a_1+b_{22})^2+\sum^2_{i=1}\sum_{j\neq i}b_{ij}^2.\label{eq:24}
  \end{align}
  Consider $A$ and $\mathrm{diag}(B)$, they are both normal and traceless, by
  Corollary~\ref{lem:normal2}, we have
  \begin{equation}
    (a_1+b_{11})^2+(a_1+b_{22})^2\leqslant\frac{3d-4}{d}\sum_{i=1}^d(a_i^2+b_{ii}^2).
  \end{equation}
  Therefore, by \eqref{eq:24}, we have
  \begin{subequations}
    \begin{align}
       \tar &\leqslant \frac{3d-4}{d}\sum_{i=1}^d(a_i^2+b_{ii}^2)+ \sum^2_{i=1}\sum_{j\neq i}b_{ij}^2\\
      &  = \frac{3d-4}{d}( \frac{1}{d} - \sum_{i\neq j}b_{ij}^2)+ \sum^2_{i=1}\sum_{j\neq i}b_{ij}^2\\
      & \leqslant \frac{3d-4}{d^2} - \frac{2d-4}{d}\sum_{i\neq j}b_{ij}^2\\
      & \leqslant  \frac{3d-4}{d^2},
    \end{align}
  \end{subequations}
  which completes our proof.
\end{proof}

\subsection{Case~2: the Largest Two Singular Values of $X$ Come From Two Different Blocks}

We shall consider the case when $B\in \cP$ first, where $B$ is the direct sum of blocks of at most dimension two. After
that, we can extend this result to the general $B$.

Recall that if $B\in\cP$, then $B$ is the direct sum of several square matrices, i.e.,
\begin{equation}
  B = \oplus_{i=1}^l B_i, \quad \mathrm{dim}(B_i) \leqslant 2,\quad \sum_{i=1}^l \mathrm{dim}(B_i)=d.
\end{equation}

\begin{lemma}%
  \label{lem:22B}
  Suppose $X = A\otimes\I+\I \otimes B$ with $A, B\in \bbR^{d\times d}\;(d\geqslant 4)$, $\trace{A}= \trace{B} = 0$,
  $ \norm{A}_F^2+\norm{B}_F^2 = \frac{1}{d}$, $A=\mathrm{diag}(a_1,a_2,\ldots,a_d)$ and
  $B\in \cP$.  If the largest two singular values of $X$ come from two different blocks, i.e.,
  \begin{equation}
    \sigma_1(X) = \sigma_1(a_1\I + B),\quad \quad \sigma_2(X) = \sigma_1(a_2\I + B),
  \end{equation}
  then
  \begin{equation} \tar \leqslant \frac{3d-4}{d^2}.  \end{equation}

\end{lemma}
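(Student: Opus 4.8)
The plan is to exploit the block structure $X=\oplus_{i=1}^d(a_i\I+B)$ together with $B=\oplus_k B_k$ (each $B_k$ of size at most two), so that the singular values of $X$ are precisely those of the matrices $a_i\I+B_k$. Under the Case~2 hypothesis the two largest come from the summands $i=1$ and $i=2$, and within these $\sigma_1(a_1\I+B)=\sigma_1(a_1\I+B_{k_1})$ and $\sigma_1(a_2\I+B)=\sigma_1(a_2\I+B_{k_2})$ for two sub-blocks $B_{k_1},B_{k_2}$. Thus $\tar$ depends only on $a_1,a_2$ and the entries of $B_{k_1},B_{k_2}$, whereas every other entry of $A$ and $B$ enters solely through the constraints $\trace{A}=\trace{B}=0$ and $\norm{A}_F^2+\norm{B}_F^2=\frac1d$. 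I would first eliminate the irrelevant variables: the off-diagonal entries of the blocks other than $B_{k_1},B_{k_2}$ occur with zero coefficient in the objective but positively in the norm constraint, so Lemma~\ref{lem:y:square} (with $\eta=0$, the positive objective excluding the value $\eta r=0$) forces them to vanish at an optimum; the remaining irrelevant diagonal entries of $B$ and the coordinates $a_3,\dots,a_d$ then appear linearly in the trace constraints and quadratically in the norm, so Lemma~\ref{lem:y:same} equalizes each group. This collapses the problem to a low-dimensional optimization in $a_1,a_2$ and the at most eight entries of $B_{k_1},B_{k_2}$, the eliminated budget reappearing as the terms $\tfrac{(a_1+a_2)^2}{d-2}$ and $\tfrac{(\trace{B_{k_1}}+\trace{B_{k_2}})^2}{d-2}$ in the norm constraint.

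I would then split into cases by the sizes of $B_{k_1},B_{k_2}$ and by whether $k_1=k_2$. If both relevant sub-blocks are $1\times1$, the reduced $B$ is diagonal, $X=A\otimes\I+\I\otimes\mathrm{diag}(B)$, and the Case~2 hypothesis guarantees that $(a_1+b_{k_1})^2$ and $(a_2+b_{k_2})^2$ are genuinely the two largest of all $(a_i+b_j)^2$; the bound is then immediate from Corollary~\ref{lem:normal2} applied to the normal pair $(A,\mathrm{diag}(B))$. The genuine work is the $2\times2$ sub-blocks, and here I would use the local-unitary invariance of the problem (Theorem~\ref{le:cj}) to conjugate each relevant block by an orthogonal matrix into a convenient canonical form. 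When $k_1\neq k_2$ each block is used at a single shift, so I would rotate the block so that the top right singular vector of $a_1\I+B_{k_1}$ is $\kett{e}_1$; this linearizes the contribution to $\sigma_1^2(a_1\I+B_{k_1})=(a_1+p)^2+r^2$ as a clean quadratic, at the price of one alignment equation $(\trans{M}M)_{12}=0$ coupling the two unused entries to $r$, plus the inequality that this block dominates. When $k_1=k_2$ I would instead diagonalize the symmetric part, writing the block as $\begin{bmatrix}s_1&\kappa\\-\kappa&s_2\end{bmatrix}$ and obtaining the explicit value $\sigma_1^2(a\I+B_s)=\tfrac12\big(u^2+v^2+2\kappa^2+\abs{s_1-s_2}\sqrt{(u+v)^2+4\kappa^2}\big)$ with $u=a+s_1,\ v=a+s_2$.

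In each reduced case I would close the estimate either by removing the antisymmetric parameter and invoking Corollary~\ref{lem:normal2} on the resulting symmetric (hence normal) block, or by solving the small remaining KKT system of Lemma~\ref{lem:KKT} directly and comparing against $\frac{3d-4}{d^2}$. The diagonal extremal example listed after Theorem~\ref{main-result1} attains this value and fixes the target configuration; together with $d\geqslant4$ I expect the trace-balancing terms above to supply exactly the slack needed to absorb the off-diagonal and antisymmetric contributions.

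The step I expect to be the main obstacle is this $2\times2$ analysis. Unlike Case~1 (Lemma~\ref{lem:anyB}), the two dominant singular values now live in two different matrices $a_1\I+B$ and $a_2\I+B$, so there is no single Gram matrix whose top two orthonormal eigenvectors may be placed at $\kett{e}_1,\kett{e}_2$; consequently $\tar$ is genuinely non-smooth in the block entries, and the explicit formula above shows that the antisymmetric parameter $\kappa$ strictly increases $\sigma_1^2$ while also consuming norm, so it is not a priori clear that the extremum is normal. Since Corollary~\ref{lem:normal2} does not apply to non-normal blocks, the crux is to show — using the alignment and dominance constraints together with Lemmas~\ref{lem:y:square} and~\ref{lem:y:same} — that any contribution of $\kappa$ is dominated, ideally by proving the optimal $2\times2$ blocks may be taken with $\kappa=0$, after which the normal-case bound finishes the argument. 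Additional care is needed because the supporting optimization lemmas are stated for problems with equality constraints only, so the dominance inequalities must be shown inactive at the optimum (or handled separately) before the lemmas may be invoked.
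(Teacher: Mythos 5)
Your reduction skeleton coincides with the paper's: the same split into ``the two dominant values come from two different sub-blocks of $B$'' versus ``from the same $2\times2$ sub-block,'' the elimination of the off-diagonal entries of the unused blocks via Lemma~\ref{lem:y:square} with $\eta=0$, and the equalization of $a_3,\dots,a_d$ and of the unused diagonal entries of $B$ via Lemma~\ref{lem:y:same} (these are the paper's Lemmas~\ref{lem:c1} and~\ref{lem:c2}). Two remarks on the easier branches: when the two dominant values come from \emph{different} sub-blocks, no canonical form or alignment equation is needed --- the crude estimate $\tar\leqslant(a_1+\sigma_1(B_{k_1}))^2+(a_2+\sigma_1(B_{k_2}))^2\leqslant 2(a_1^2+a_2^2+\sigma_1^2(B_{k_1})+\sigma_1^2(B_{k_2}))\leqslant 2(\norm{A}_F^2+\norm{B}_F^2)=2/d\leqslant(3d-4)/d^2$ already closes that case (Lemma~\ref{lem:c0}), so your rotation-and-alignment machinery there is unnecessary; and your worry about the dominance inequalities being active is resolved by simply dropping them, since maximizing $\lambda_1(Y_1)+\lambda_1(Y_2)$ without those constraints upper-bounds $\tar$ whenever the hypothesis holds.

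The genuine gap is exactly where you place it: the single $2\times2$ sub-block feeding both $\sigma_1$ and $\sigma_2$. You correctly observe that Corollary~\ref{lem:normal2} is unavailable there and that the antisymmetric parameter is the obstruction, but your proposal only names two hoped-for exits --- prove $\kappa=0$ at the optimum, or ``solve the small remaining KKT system directly'' --- without carrying either out, and the first is not what the paper establishes. The paper's resolution (Lemma~\ref{lem:c3}) runs differently: it keeps the objective smooth by writing the second eigenvector as $\cos\theta\,\kett{e}_1+\sin\theta\,\kett{e}_2$ and maximizing the resulting rational function of $t=\tan\theta$ in closed form; after the substitution $b_{12}=p+q$, $b_{21}=p-q$ and a KKT argument forcing $a_1+a_2=b_{11}+b_{22}$, it splits on $pq=0$ --- where $B$ is symmetric or antisymmetric, hence normal, and Theorem~\ref{lem:normal} applies --- versus $p,q\neq0$, where it does \emph{not} show the optimum is normal but instead relaxes the objective via $p^2+pq+q^2\leqslant\frac32(p^2+q^2)$, uses the KKT condition $p^2+z^2=q^2+(y-2u/\beta)^2$ to linearize the square root, and then applies Lemma~\ref{lem:y:square} with $\eta=4$, $r=1/(2d)$ to conclude the relaxed maximum is either attained at $p=q=0$ (back to the normal case) or equals $2/d\leqslant(3d-4)/d^2$. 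Without some concrete argument of this kind for the mixed case in which both the symmetric and antisymmetric off-diagonal parts are nonzero, the hardest configuration of the lemma remains unproven in your proposal.
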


\begin{proof} Lemma \ref{lem:22B} is proved in Appendix C. \end{proof}

Now we extend Lemma \ref{lem:22B} to general $B$.

\begin{lemma} \label{lem:from_two} Suppose $X = A\otimes\I+\I \otimes B$  with 
  $A, B\in \bbR^{d\times d}\;(d\geqslant 4)$, $\trace{A} = \trace{B} =0$, $ \norm{A}_F^2+\norm{B}_F^2 = \frac{1}{d}$, and
  $A=\mathrm{diag}(a_1,a_2,\ldots,a_d)$.  If the largest two singular values of $X$ come from two different blocks,
  i.e.,
  \begin{equation}
    \sigma_1(X) = \sigma_1(a_1\I + B) \text{ and } \sigma_2(X) = \sigma_1(a_2\I + B),
  \end{equation}
  then
  \begin{equation} \tar \leqslant \frac{3d-4}{d^2}.  \end{equation}
\end{lemma}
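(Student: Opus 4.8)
The plan is to show that the Case-2 configuration for an arbitrary real $B$ can always be reduced, without decreasing $\tar$ and without violating the constraints $\trace A=\trace B=0$ and $\norm{A}_F^2+\norm{B}_F^2=\frac1d$, to one in which $B\in\cP$, after which \cref{lem:22B} closes the argument. The starting observation is that, with $A=\mathrm{diag}(a_1,\dots,a_d)$ held fixed, we still have a full real-orthogonal gauge freedom $B\mapsto VB\trans V$ (the local unitary $\I\otimes V$): since $a_i\I+VB\trans V=V(a_i\I+B)\trans V$, every block $a_i\I+B$ retains its singular values, so both the objective $\tar$ and the feasibility constraints $\trace B$, $\norm{B}_F$ are invariant, consistent with part (3) of \cref{le:cj}. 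Because the feasible set is compact and $\tar$ is continuous, a maximizer $(A,B)$ exists, and I would work with such a maximizer, the alternative configuration being already covered by \cref{lem:anyB}.

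Next I would exploit the spectral description of the two active blocks. Writing $S=B+\trans B$ and $G=\trans B B$, one has $\sigma_1^2(a_i\I+B)=a_i^2+\lambda_1(a_iS+G)$, so the top right singular vectors $\kett u,\kett v$ of $a_1\I+B$ and $a_2\I+B$ are exactly the leading eigenvectors of the symmetric matrices $a_1S+G$ and $a_2S+G$. Using the orthogonal gauge I would normalise $\kett u=\kett{e}_1$ and bring $\kett v$ into $\mathrm{span}\{\kett{e}_1,\kett{e}_2\}$, isolating the low-dimensional subspace supporting the two active blocks. Stationarity, through the first-order conditions of \cref{lem:KKT}, then forces a rigid structure on the maximizer: differentiating in the $a_k$ yields $a_3=\dots=a_d$, while differentiating in the off-diagonal entries $b_{ij}$ expresses them through $\kett u,\kett v$ and the corresponding left singular vectors, so that the off-diagonal part of $B$ has rank at most two. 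The remaining directions, which carry none of these singular vectors, should be removed with the two reduction lemmas \cref{lem:y:square,lem:y:same}: the coordinates outside the active subspace enter $\tar$ only through squared terms and a shared linear-and-quadratic constraint, so those lemmas pin them either to zero or to a common value, collapsing $B$ onto a direct sum of blocks of dimension at most two.

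The hard part is precisely this collapse: because $B$ is shared between the two singular-value terms, stationarity delivers a matrix of the shape \emph{diagonal plus rank two} rather than a literal block-diagonal element of $\cP$, so converting the structural information into genuine membership in $\cP$ is the crux. I expect the decisive step to be showing that, after the normalising orthogonal similarity, the inactive coordinates can be simultaneously rotated and either zeroed or merged into equal groups (via \cref{lem:y:same}) so that the surviving nonzero part of $B$ splits as $\oplus_i B_i$ with $\mathrm{dim}(B_i)\le 2$, while $\trace B$ and $\norm{B}_F$ are preserved and $\tar$ does not decrease. Once a configuration with $B\in\cP$, the same constraints, and a value at least as large is reached, \cref{lem:22B} gives $\tar\le\frac{3d-4}{d^2}$ for the reduced configuration, hence for the original arbitrary $B$, completing the proof.
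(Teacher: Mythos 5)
Your overall architecture matches the paper's: gauge-fix $B$ by a real orthogonal similarity so that the top singular vectors of the two active blocks become $\kett{e}_1$ and $\cos\theta\,\kett{e}_1+\sin\theta\,\kett{e}_2$, reduce to $B\in\cP$, and invoke \cref{lem:22B}. But the decisive reduction is exactly the step you leave as an expectation rather than a proof, and the mechanism you sketch for it would not deliver it. You appeal to KKT stationarity in the off-diagonal entries $b_{ij}$ to conclude that the off-diagonal part of $B$ has rank at most two, and then hope that the inactive coordinates can be ``simultaneously rotated and either zeroed or merged'' so that $B$ splits as a direct sum of blocks of dimension at most two. Neither claim is established: you do not write down or solve the stationarity equations, ``diagonal plus rank two'' is not the same as membership in $\cP$, and no argument is given for why such a perturbation can be rotated into block form while fixing the active two-dimensional subspace. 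This is precisely the crux you yourself flag, so the proposal is incomplete where it matters most.

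The paper closes this gap by a more elementary and entirely explicit mechanism. After the first gauge-fixing it applies a second orthogonal similarity that fixes the first two coordinates and concentrates row $1$ onto column $3$ and row $2$ onto columns $3$ and $4$, so that $b_{1j}=0$ for $j\geqslant 4$ and $b_{2j}=0$ for $j\geqslant 5$. It then writes out the Rayleigh-quotient value $h=\braa{\phi}Y_1\kett{\phi}+\braa{\psi}Y_2\kett{\psi}$ and observes that it involves only $a_1,a_2$, the diagonal of $B$, and $b_{12},b_{21},b_{13},b_{23},b_{24}$; every other off-diagonal entry appears only in the quadratic constraint, so zeroing those entries and rescaling the survivors by a factor $\beta>1$ increases $h$ (degree-two homogeneity). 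Finally it sets $b_{13}=t\cos\alpha$, $b_{23}=t\sin\alpha$, takes $\kett{y}=(b_{24},t)^\intercal$, and applies \cref{lem:y:square} with $\eta\leqslant 2$: either $\max(\tar)\leqslant 2/d\leqslant(3d-4)/d^2$ outright, or the maximum forces $b_{13}=b_{23}=b_{24}=0$, which makes $B$ a $2\times 2$ block plus a diagonal, i.e.\ an element of $\cP$, and hands the problem to \cref{lem:22B}. No KKT computation in the $b_{ij}$ and no rank argument is needed. To repair your proposal you would need to supply this (or an equivalent) concrete collapse; as written, the proof does not go through.
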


\begin{proof}
  Let
  \begin{subequations}
  \begin{align}
    Y_i &= (a_i\I + B)\trans{(a_i\I + B)}
        = a_i^2\I + a_i(B+\trans B)+B\trans B,\quad i=1,2.
  \end{align}
  \end{subequations}
  Then $\sigma_1^2(X) = \lambda_1(Y_1)$ and $\sigma_2^2(X) = \lambda_1(Y_2)$.  Suppose $\kett{\phi}$ and $\kett{\psi}$
  are the unit eigenvectors associated with the largest eigenvalues of $Y_1$ and $Y_2$, respectively.  With an
  orthogonal similarity of $B$, i.e., replace $B$ with $UB\trans U$ by some orthogonal operator $U$, we can assume that
  \begin{equation}
    \kett{\phi} = \kett{e}_1,\quad\kett{\psi} = \cos\theta\kett{e}_1+\sin\theta\kett{e}_2,\quad\theta\in[0,2\pi].
  \end{equation}
  Furthermore, applying an orthogonal similarity on $B$ with the first two dimensions kept the same, we can further
  assume
  \begin{equation} b_{1j}=0,\;j\geqslant 4, \quad \text{and}\quad
  b_{ 2j}=0,\;j\geqslant 5.
  \end{equation}
  Let $h = \tar$. Then,
  \begin{align}
    h& = \lambda_1(Y_1)+ \lambda_1(Y_2)
     =\braa{\phi}Y_1\kett{\phi}+\braa{\psi}Y_2\kett{\psi}
     =h_1+ h_2+b_{24}^2\sin^2\theta,
  \end{align}
  where
  \begin{multline}  h_1 = \sin(2\theta)\left(a_2(b_{12}+b_{21})+b_{11}{b}_{21}+b_{12}{b}_{22}\right) 
    +(1+\cos^2\theta)b_{12}^2 \\+ \cos^2\theta(a_2+b_{11})^2 +(a_1+b_{11})^2+\sin^2\theta((a_2+b_{22})^2+ b_{21}^2),
  \end{multline}
  and
  \begin{equation}
    h_2=b_{13}^2+(\cos\theta b_{13}+\sin\theta b_{23})^2.
  \end{equation}
  Let
  \begin{equation}
    \mathcal{B} = \{ b_{ij}: i\neq j \&\; (i,j)\not\in \{(1,2),(2,1),(1,3),(2,3),(2,4)\}\},
  \end{equation}
  be the set where the parameters are not involved in the objective function $h$ and the related linear constraints.
  Hence, we can replace $b_{ij}\in\mathcal{B}$ by $0$'s and multiply a scalar (denoted as $\beta$) to all the rest
  variables simultaneously in order to satisfy the related quadratic constraint.  It is easy to see that $\beta >1$ if
  some of $b_{ij}\in\mathcal{B}$ are nonzero.  In this way, the linear constraints are also satisfied and the value of
  function $h$ is replaced by
  \begin{equation}
    \beta^2h>h,
  \end{equation}
  since $\beta>1$ and $h$ is a positive degree-$2$ homogeneous function. We now find another feasible point such that
  $h$ achieves a greater value.  Hence, it suffices to assume
  \begin{equation}
    b_{ij}=0,\quad \forall i\neq j \in \mathcal{B}.
  \end{equation}
  Note that to find the maximal value of $h$
  is equivalent to solve the following optimization problem:
  \begin{equation}\label{12:opt1}
    \begin{array}{rl}
      \max & \quad h\\
      \text{ s.t. }&
                     \begin{cases}
                       \sum\limits_{i=1}^d a_i= 0,\\
                       \sum\limits_{i=1}^d b_i= 0,\\
                       \sum\limits_{i=1}^d\!a_i^2+\sum\limits_{i,j\not\in \mathcal{B}}b_{ij}^2 = \frac{1}{d}.
                     \end{cases}
    \end{array}
  \end{equation}
  Let
  \begin{equation}
    b_{13} = t\cos\alpha,\quad b_{23} = t\sin\alpha,\quad \alpha\in[0,2\pi].
  \end{equation}
  Then
  \begin{equation}
    h_2 =t^2(\cos^2\alpha +\cos^2(\alpha-\theta)).
  \end{equation}

    Therefore, the original optimization problem~\eqref{12:opt1} is reduced to the following one:
    \begin{equation}
      \begin{array}{rl}
        \max & \quad f(\kett{x}) +  \xi_1 y_1^2+\xi_2 y_2^2\\
        \text{s.t.}&\quad
                     \begin{cases}
                       \langle \kett{c_i},\kett{x}\rangle & = 0,\\
                       \sum_{i=1}^{2d+2} \tau_i~|x_i|^2 + \omega_1 y_1^2+ \omega_2 y_2^2 &= r,
                     \end{cases}
      \end{array}
    \end{equation}
    where
    \begin{equation}
      \begin{aligned}
        \kett{x}&  = (a_1,a_2,\ldots,a_d,b_{11}, b_{22}, \ldots,b_{dd},b_{12},b_{21})^\intercal,\\
        \kett{c}_1& = (\underbrace{1,1,\ldots,1}_{d \text{ times}},0,0,\ldots,0, 0, 0)^\intercal\in\bbR^{2d+2},\\
        \kett{c}_2& = (0,0,\ldots,0,\underbrace{1,1,\ldots,1}_{d \text{ times}},0,0)^\intercal\in\bbR^{2d+2},\\
      \end{aligned}
    \end{equation}
    and
    \begin{align}
      \kett{y} =(b_{24},t)^\intercal,
      \quad \tau_i =\omega_i = 1,
      \quad 
      \xi_1 = \sin^2\theta,
      \quad \xi_2=\cos^2\alpha +\cos^2(\alpha-\theta),
      \quad r = \frac{1}{d}.
    \end{align}
    In particular,
    \begin{multline}
       f(\kett{x}) = (x_1+x_{d+1})^2
        + \cos^2\theta(x_2+x_{d+1})^2 
        +\sin^2\theta((x_2+x_{d+2})^2+ x_{2d+2}^2)
        \\ +(1+\cos^2\theta) x_{2d+1}^2
         +\sin(2\theta)\left( x_{d+1}{x}_{2d+2}+x_{2d+1}{x}_{d+2} 
           +x_2(x_{2d+1}+x_{2d+2})\right).
    \end{multline}
  Let
  \begin{equation}
    \eta = \max\{ \frac{\xi_i}{\omega_i},i=1,2\}.
  \end{equation}
  It is eay to see that
  \begin{equation}\eta =\max\{\sin^2\theta,\cos^2\alpha + \cos^2(\alpha-\theta)\} \leqslant 2.\end{equation}
  Note that $f(\kett{x})$ is degree-$2$ homogeneous.  Therefore, according to Lemma~\ref{lem:y:square}, we have either
  \begin{align}
    \max \tar &=\max  ( h_1 +h_2+ b_{24}^2\sin^2\theta)
             = \frac{\eta}{d} \leqslant \frac{2}{d}\leqslant \frac{3d-4}{d^2},
  \end{align}
  or $\tar$ achieves the maximal value only when $\kett{y}=0$, i.e., $b_{13}=b_{23}=b_{24}=0$.

  For the latter case, the matrix $B$ has the shape as follows:
  \begin{equation}
    B = \begin{bmatrix}
      b_{11}&b_{12}   \\
      b_{21}& b_{22} \\
    \end{bmatrix}
    \oplus \mathrm{diag}(b_{33},b_{44},\ldots,b_{dd}),
  \end{equation}
  and then $B \in \cP$. Hence, Lemma \ref{lem:from_two} follows directly from Lemma~\ref{lem:22B}.
\end{proof}

Theorem \ref{main-result2} can be proved based on Lemmas~\ref{lem:anyB} and \ref{lem:from_two} as follows.

\

\begin{proof}[Proof of Theorem \ref{main-result2}] $X$ is a direct sum of $d$ blocks. Then, the set of singular values
  of $X$ consists of all the singular values of all the different blocks.  By Lemma~\ref{lem:anyB}, the inequality
  (\ref{complex}) holds when the largest two singular values of $X$ come from a single block. On the other hand, by
  Lemma~\ref{lem:from_two}, the inequality (\ref{complex}) also holds when the largest two singular values of $X$ come
  from two different blocks. Hence Theorem \ref{main-result2} follows.
\end{proof}

We are now ready to prove Theorem \ref{main-result1}.

\

\begin{proof}[Proof of Theorem \ref{main-result1}] Since $A=\mathrm{diag}(a_1,a_2,\ldots,a_d)$ and thus
  \begin{equation} X = \oplus_{i=1}^d (a_i\I + B). \end{equation} Let $Y= X\adj{X}$. Then
  \begin{equation} Y = \oplus_{i=1}^d\left[ (a_i\I+ B)\adj{(a_i\I + B)}\right].
  \end{equation}
  Suppose $\kett{\phi}_i$ are the two unit eigenvectors of $Y$ corresponding to its largest two eigenvalues.  Note that
  $Y$ is the direct sum of $d$ block matrices, its eigenvalues consist of all the eigenvalues of every block.  Moreover,
  the eigenvectors of $Y$ have the shapes:
  \begin{equation}
    \kett{\phi}_i = \kett{e}_i\otimes \kett{y}_i,\quad \kett{e}_i,\kett{y}_i\in\bbC^d,\;i=1,\ldots,d^2,\;
  \end{equation}
  where $\kett{e}_i,i=1,\ldots,d$ is the natural basis in $\bbC^d$.  In particular, we will have the two different cases
  up to an index permutation:
  \begin{enumerate}[label = (\roman*)]
    \item $\kett{\phi}_1 = \kett{e}_1\otimes \kett{y}_1$ and $\kett{\phi}_2 = \kett{e}_1\otimes \kett{y}_2$;
    \item $\kett{\phi}_1 = \kett{e}_1\otimes \kett{y}_1$ and $\kett{\phi}_2 = \kett{e}_2\otimes \kett{y}_2$.
  \end{enumerate}

  In case (i): we have $\langle \kett{y}_1,\kett{y}_2\rangle =0$.  Hence, there exists a unitary operator $U$ such that
  \begin{equation}
    U(\kett{y}_1,\kett{y}_2) = (\kett{e}_1,\kett{e}_2).
  \end{equation}

  In case (ii): we can find a unitary operator $U_1$ such that
  \begin{equation}
    U_1(\kett{y}_1,\kett{y}_2) = (\kett{e}_1,c_1\kett{e}_1+c_2\kett{e}_2),
  \end{equation}
  where $c_1,c_2\in\bbC$. Suppose
  \begin{align}
    c_1 &= r_1e^{\iu\theta_1},
    \quad \quad c_2 = r_2e^{\iu\theta_2},\quad
    \theta_1,\theta_2\in[0,2\pi], \quad
    r_1^2+r_2^2=1,
  \end{align}
  and let
  \begin{equation}
    U_2 = \mathrm{diag}(1,e^{\iu(\theta_1-\theta_2)},1,\ldots,1)\in\bbC^{d\times d}.
  \end{equation}
  Then
  \begin{equation}
    U(\kett{y}_1,\kett{y}_2) = (\kett{e}_1,e^{\iu\theta_1}(r_1\kett{e}_1+r_2\kett{e}_2)),
  \end{equation}
  where $U=U_2U_1$.  Note that if $\kett{x}$ is an eigenvector of a Hermitian matrix, then $e^{\iu\theta}\kett{x}$ is
  also the eigenvector corresponding to the same eigenvalue. So, if we replace $B$ by $UB\adj{U}$, then $\kett{e}_1$ and
  $r_1\kett{e}_1+r_2\kett{e}_2$ are the two eigenvectors of $Y$ associated with its largest two eigenvalues and these
  two eigenvectors are real.

  Therefore, we can denote by $\kett{\phi}_i \in \bbR^{d^2} (i=1,2)$ the unit eigenvectors of $Y$ associated with its
  largest two eigenvalues, which are real. Then
  \begin{equation}
    \langle \kett{\phi}_1,\kett{\phi}_2\rangle =0.
  \end{equation}

  Let $X = X_1+ \iu X_2$ with $X_1, X_2\in \bbR^{d^2\times d^2}$. Then,
  \begin{equation}
    Y = X_1X_1^\intercal + X_2X_2^\intercal + \iu( X_2 X_1^\intercal - X_1X_2^\intercal).
  \end{equation}
  There
  \begin{align}
    \sigma_i^2(X)&= \kett{\phi}^\intercal_i Y\kett{\phi}_i
                  = \kett{\phi}^\intercal_i \left(X_1X_1^\intercal + X_2X_2^\intercal \right)\kett{\phi}_i
                       + \iu\left(\kett{\phi}^\intercal_i (X_2 X_1^\intercal - X_1X_2^\intercal)\kett{\phi}_i\right).
  \end{align}
  Since $\sigma_i^2(X)\geqslant 0$, $\kett{\phi} _i\in\bbR^{d^2}$, and $X_1,X_2\in\bbR^{d^2\times d^2}$, we have
  \begin{align}
    \sigma_i^2(X)& = \kett{\phi}^\intercal_i \left(X_1X_1^\intercal + X_2X_2^\intercal \right)\kett{\phi}_i 
                 =
                   \kett{\phi}^\intercal_i X_1X_1^\intercal \kett{\phi}_i+ \kett{\phi}^\intercal_i X_2X_2^\intercal \kett{\phi}_i.
  \end{align}
  Moreover,
  \begin{align}
    \max_{\stackrel{\langle \kett\psi_1,\kett{\psi}_2\rangle=0}{\norm{\psi_1}=\norm{\psi_2}=1 }}\quad &\sum_{i=1}^2\kett{\psi}^\intercal_i X_iX_i^\intercal
      \kett{\psi}_i 
    = \lambda_1(X_iX_i^\intercal)+ \lambda_2(X_iX_i^\intercal) 
    = \sigma_1^2(X_i)+\sigma_2^2(X_i).
  \end{align}
  Thus, we have
  \begin{align}\label{X} \tar \leqslant
    \sigma_1^2(X_1)+\sigma_2^2(X_1) + \sigma_1^2(X_2)+\sigma_2^2(X_2).
  \end{align}
  Note that
  \begin{equation} X_1 = \Real{X} = \Real{A}\otimes \I + \I \otimes \Real{B},
  \end{equation}
  which also satisfies the conditions:
  \begin{align}
    \trace{\Real{A}}&= \trace{\Real{B}}=0, \\
     \Real{A}&=\Real { \mathrm{diag}(a_1,a_2,\ldots,a_d)} \in \bbR^{d\times d}, \\
       \Real{B}& \in \bbR^{d\times d}.
  \end{align}
  Let
  \begin{equation} s_1=[d (\norm{\Real{A}}_F^2+\norm{\Real{B}}_F^2)]^{-1/2}. \end{equation} Then we have by Theorem \ref{main-result2}
  \begin{equation} \sigma_1^2(s_1 X_1)+ \sigma_2^2(s_1 X_1) \leqslant \frac{3d-4}{d^2}, \end{equation} which gives
  \begin{equation}
    \sigma_1^2(X_1)+ \sigma_2^2(X_1) \leqslant \frac{3d-4}{d}(\norm{\Real{A}}_F^2+\norm{\Real{B}}_F^2).
  \end{equation}

  Similarly,
  \begin{equation} X_2 = \Ima{X} = \Ima{A}\otimes \I + \I \otimes \Ima{B},
  \end{equation}
  which also satisfies the conditions:
  \begin{align}
    \trace{\Ima{A}}&= \trace{\Ima{B}}=0,\\
         \Ima{A}&=\Ima { \mathrm{diag}(a_1,a_2,\ldots,a_d)} \in \bbR^{d\times d},\\
    \Ima{B} &\in \bbR^{d\times d}.
  \end{align}
  Let
  \begin{equation} s_2=[d (\norm{\Ima{A}}_F^2+\norm{\Ima{B}}_F^2)]^{-1/2}.
  \end{equation} Then we have by Theorem \ref{main-result2}
  \begin{equation} \sigma_1^2(s_2 X_2)+ \sigma_2^2(s_2 X_2) \leqslant \frac{3d-4}{d^2},
  \end{equation} which gives
  \begin{equation}
    \sigma_1^2(X_2)+ \sigma_2^2(X_2) \leqslant \frac{3d-4}{d}(\norm{\Ima{A}}_F^2+\norm{\Ima{B}}_F^2).
  \end{equation}
  Therefore, by \eqref{X}, we have
  \begin{align}
    \tar  &\leqslant \frac{3d-4}{d}(\norm{\Real{A}}_F^2+\norm{\Real{B}}_F^2)
      + \frac{3d-4}{d}(\norm{\Ima{A}}_F^2+\norm{\Ima{B}}_F^2)   \\
    &  = \frac{3d-4}{d}(\norm{\Real{A}}_F^2+\norm{\Real{B}}_F^2 
     +\norm{\Ima{A}}_F^2+\norm{\Ima{B}}_F^2 )  \\
    & = \frac{3d-4}{d}(\norm{A}_F^2+\norm{B}_F^2)  \\
    & = \frac{3d-4}{d^2},
  \end{align}
  i.e., \eqref{complex} holds.
\end{proof}

%
%
%
\section{Conclusion}\label{sec:concluding}

We have studied Conjecture \ref{cj:high_var} related to the entanglement distillability problem, one of the fundamental
problems in quantum information theory.  In Refs.~\cite{pphh2010,ysdis2018}, this conjecture is answered partially when
$A,B$ are both normal. In our main result--Theorem \ref{main-result1}, we solved this open problem when one of $A$ and
$B$ is normal and the other one is arbitrary utilizing the techniques of matrix analysis and optimization theory.
Our Results-Theorems \ref{main-result1} and \ref{main-result11} and Corollaries \ref{main-result2} and
\ref{main-result22} make significant progress on Conjecture \ref{cj:main-4} and \ref{cj:high_var}, namely the
two-undistillability of the one-copy undistillability NPT state.

%
%
%

\appendices
\section{Proof of Lemma~\ref{lem:y:square}}\label{appendix:y:square}

Before proving Lemma~\ref{lem:y:square}, the following result is needed.
\begin{lemma}\label{lem:A1}
  Suppose
  \begin{align}
    \kett{a}&= (a_1,a_2,\ldots,a_N)^\intercal, \\
    \kett{b}&= (\,b_1,b_2,\ldots,\,b_N)^\intercal, \\
    \kett{x} &= (x_1,x_2,\ldots,x_N)^\intercal,
    \end{align}
  where $a_i,x_i\in\bbR$  and $b_i>0$.  Then for any nonzero $\kett x \in \bbR^N$,
  \begin{align}
    \frac{\sum_{i=1}^N a_ix_i^2}{\sum_{i=1}^Nb_ix_i^2}\leqslant \max\left\lbrace\frac{a_i}{b_i},\; i=1,2,\ldots,N\right\rbrace.
    \end{align}
  \end{lemma}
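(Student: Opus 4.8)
The plan is to recognize the left-hand side as a weighted average (convex combination) of the component ratios $a_i/b_i$, which is then automatically bounded above by their maximum. Set $M=\max\{a_i/b_i:\,i=1,\ldots,N\}$, so that by definition $a_i/b_i\leqslant M$ for every $i$.

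First I would exploit the positivity $b_i>0$ to rewrite each numerator term as $a_i x_i^2 = (a_i/b_i)\,(b_i x_i^2)$, giving
\begin{equation}
  \sum_{i=1}^N a_i x_i^2 = \sum_{i=1}^N \frac{a_i}{b_i}\,(b_i x_i^2).
\end{equation}
Since each weight $b_i x_i^2$ is nonnegative (as $b_i>0$ and $x_i^2\geqslant 0$) and $a_i/b_i\leqslant M$, I can bound the right-hand side termwise to obtain
\begin{equation}
  \sum_{i=1}^N a_i x_i^2 \leqslant M\sum_{i=1}^N b_i x_i^2.
\end{equation}
The denominator $\sum_{i=1}^N b_i x_i^2$ is strictly positive, because $b_i>0$ together with $\kett{x}\neq 0$ forces at least one term $b_i x_i^2>0$; dividing the displayed inequality through by it therefore preserves the direction and yields exactly the claimed bound.

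There is essentially no genuine obstacle here, as the estimate is elementary. The only places where the hypotheses are actually used are the nonnegativity of the weights $b_i x_i^2$ (which needs $b_i>0$ so that the rewriting $a_ix_i^2=(a_i/b_i)(b_ix_i^2)$ is valid and the weights keep their sign) and the strict positivity of the denominator (which needs both $b_i>0$ and $\kett{x}\neq 0$ so that the quotient is well-defined and the final division is legitimate). One may additionally note that equality holds precisely when all the weight is carried by indices attaining the maximal ratio $M$, i.e.\ $x_i=0$ whenever $a_i/b_i<M$.
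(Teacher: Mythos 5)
Your proof is correct and complete, but it takes a more elementary route than the paper. The paper proves this lemma by recasting the ratio as a Rayleigh quotient: writing $A=\mathrm{diag}(\kett{a})$, $B=\mathrm{diag}(\kett{b})=C^2$ with $C=\mathrm{diag}(\sqrt{b_1},\ldots,\sqrt{b_N})$, it substitutes $\kett{y}=C\kett{x}$ to get $\langle \kett{y},(C^{-1}AC^{-1})\kett{y}\rangle/\langle\kett{y},\kett{y}\rangle$, bounds this by $\lambda_1(C^{-1}AC^{-1})=\lambda_1(AB^{-1})$, and identifies that eigenvalue as $\max_i a_i/b_i$. You instead observe directly that the ratio is a convex combination of the $a_i/b_i$ with nonnegative weights $b_ix_i^2$ summing to a positive total, and bound it termwise by the maximum. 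Both arguments are airtight; yours avoids any spectral machinery and makes the equality case transparent (all weight on maximizing indices), while the paper's phrasing is the diagonal instance of the general Rayleigh-quotient bound and would survive verbatim if $A$ and $B$ were replaced by non-diagonal symmetric matrices with $B\succ 0$. For the diagonal setting actually needed in the paper, the two are equivalent in content, and your version is arguably the cleaner one.
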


\begin{proof}
  Denote by $\kett{c} = (\sqrt{b_1},\sqrt{b_2},\ldots,\sqrt{b_N})^\intercal$, $A=\mathrm{diag}(\kett{a})$,
  $B= \mathrm{diag}(\kett{b})$, and $C = \mathrm{diag}(\kett{c})$.  Then $B= C^2$ and
  \begin{align}
    \frac{\sum_{i=1}^N a_ix_i^2}{\sum_{i=1}^Nb_ix_i^2}
    & =\frac{ \langle \kett{x},A\kett{x}\rangle }{\langle \kett{x},B \kett{x}\rangle}
                 =\frac{ \langle \kett{x},A\kett{x}\rangle }{\langle \kett{x},C^2 \kett{x}\rangle}
    = \frac{ \langle C\kett{x},(C^{-1}AC^{-1})C\kett{x} \rangle }{\langle C \kett{x},C \kett{x}\rangle }\\
    &= \frac{ \langle \kett{y},(C^{-1}AC^{-1})\kett{y} \rangle }{\langle \kett{y}, \kett{y}\rangle }  \\
    & \leqslant \lambda_1(C^{-1}AC^{-1}) \\
    & = \lambda_1(AB^{-1}) \\
    &=\max\left\lbrace\frac{a_i}{b_i},\; i=1,2,\ldots,N\right\rbrace,
  \end{align}
  where $\kett{y}=C\kett{x}$.
\end{proof}

\begin{proof}[Proof of  Lemma~\ref{lem:y:square}]
  Suppose $f(\kett{x})+ \sum_{i=1}^M \xi_iy_i^2$ achieves its maximal value at $(\kett{x}^*,\kett y^*)$.  Let
  \begin{equation}\label{eq:21}
    \zeta = f(\kett{x}^*) + \sum_{i=1}^M\xi_i|y_i^*|^2 = \max_{(\kett{x},\kett{y})\in\setS}f(\kett{x})+ \sum_{i=1}^M \xi_iy_i^2.
  \end{equation}
  Without the loss of generality, we can assume $\eta = \frac{\xi_1}{w_{1}}$.  Then we must have
  \begin{equation}
    \zeta \geqslant \eta r,
  \end{equation}
  since the value $\eta r$ can be achieved at
  $\kett{x} = 0,\;y_1= \frac{\sqrt{r}}{\sqrt{w_{1}}}$ and $y_i=0,\;i\geqslant 2$.

  We prove this lemma by contradiction.

  Assume that result of Lemma~\ref{lem:y:square}, i.e., \eqref{eq:2.2} does not hold:
  \begin{equation}\label{eq:A:11}
    \kett y^*\neq 0 \text{ and } \zeta> \eta r.
  \end{equation}
  Let
  \begin{align}
    r_1 &= \sum_{i=1}^N\tau_i|x_i^*|^2,\quad\text{and}\quad 
    r_2 = \sum_{i=1}^M \omega_i|y_i^*|^2.
  \end{align}

  By the quadratic constraint of the optimization problem~\eqref{eq:opt1}, we must have
  \begin{equation}
    r_1+r_2 = r,\quad r_1\geqslant 0,\;r_2>0.
  \end{equation}
  If $r_1=0$, then $\kett{x}=0$. Let $u_i = \sqrt{w_i}|y_i^*|$, then
  \begin{equation}
    \begin{aligned}
      \zeta &= f(\kett{x}^*) + \sum_{i=1}^M\xi_i|y_i^*|^2 
      = \sum_{i=1}^M\xi_i|y_i^*|^2 =\frac{\sum_{i=1}^M
        \frac{\xi_i}{\omega_i}u_i^2}{\sum_{i=1}^Mu_i^2}r.
    \end{aligned}
  \end{equation}
  According to Lemma~\ref{lem:A1}, we have
  \begin{equation}
    \zeta\leqslant r \max\{\frac{\xi_i}{\omega_i},i=1,\ldots,M\}=\eta r,
  \end{equation}
  which contradicts \eqref{eq:A:11}.

  Hence, we assume $r_1>0$.  Define the function
  \begin{equation}
    h(t) = f(t\kett{x}^*) + \sum_{i=1}^M\xi_i|\alpha_t y_i^*|^2,
  \end{equation}
  where
  \begin{equation}
    \label{eq:appd:con2}
    \alpha_t  = \sqrt{ \frac{r-r_1t^2}{r_2}},\quad t^2\leqslant\frac{r}{r_1}.
  \end{equation}
  Note that \eqref{eq:appd:con2} guarantees the condition
  \begin{equation}
    \sum_{i=1}^N\tau_i  |tx_i^*|^2+ \sum_{i=1}^M\omega_i|\alpha_t y_i^*|^2
    = t^2 r_1+\alpha_t^2r_2=r.
  \end{equation}
  Obviously, $(t\kett{x}^*,\alpha_t\kett y^*)$ also satisfy the linear constraints:
  \begin{equation}
    \langle \kett{c}_i,t\kett{x}^*\rangle =   t\langle \kett{c}_i,\kett{x}^*\rangle =0,\quad i=1,\ldots,m,
  \end{equation}
  So, $(t\kett{x}^*,\alpha_t\kett y^*)$ is also a feasible point of the optimization problem~\eqref{eq:opt1}.
  Specifically,
  \begin{equation}
    h(t) =\beta_1 t^2+\beta_2,
  \end{equation}
  where
  \begin{align}
    \beta_1 &= f(\kett{x}^*)   -\frac{r_1}{r_2}\sum_{i=1}^M\xi_i|y_i^*|^2,\quad\text{and}\quad
    \beta_2 = \frac{r}{r_2}\sum_{i=1}^M\xi_i|y_i^*|^2.
  \end{align}
  If the coefficient $\beta_1$ is positive, then $h(t)$ is strictly increasing with respect to $t$ when $t\geqslant 0$.
  Let
  \begin{equation}
    t_0 = \sqrt{\frac{r}{r_1}}= \sqrt{1+\frac{r_2}{r_1}}>1.
  \end{equation}
  \noindent Further, we have that $h(t_0)>h(1)=\zeta$. In other words, the objective function of the optimization
  problem~\eqref{eq:opt1} reaches a greater value at $(t\kett{x}^*,\alpha_t\kett y^*)$, which contradicts \eqref{eq:21}.

  On the other hand, if the coefficient $\beta_1$ is non-positive, then $h(t)$ is monotonously decreasing (maybe
  constant) with respect to $t$ when $t\geqslant 0$. We have,
  \begin{equation}\label{eq:22}
    \zeta = h(1)\leqslant h(0) = \beta_2=\frac{\sum_{i=1}^M\xi_i|y_i^*|^2}{\sum_{i=1}^M\omega_i|y_i^*|^2}r\leqslant \eta r,
  \end{equation}
  where the last inequality follows from Lemma~\ref{lem:A1}.

  However, \eqref{eq:22} contradicts our assumption~\eqref{eq:A:11}.

  To sum up, we can conclude that
  \begin{equation}
    \max_{(\kett{x},\kett{y})\in\setS}f(\kett{x})+  \sum_{i=1}^N \xi_iy_i^2= \eta r \quad \text{or}\quad  \kett{y}^* = 0.
  \end{equation}
\end{proof}

\section{Proof of  Lemma~\ref{lem:y:same}}
\label{appendix:y:same}
\begin{proof}[Proof of Lemma~\ref{lem:y:same}]
Suppose the feasible point $(\kett{x}^*,\kett{y}^*)$ is an optimal solution of the optimization problem~\eqref{eq:opt2}, that is,
  \begin{equation}\label{eq:11}
   f(\kett{x}^*)=  \max_{(\kett{x},\kett{y})\in\setS}f(\kett{x}) = \zeta.
  \end{equation}

We prove this lemma by contradiction.

  Assume $y_i^*,i=1,\ldots,M$ are not all equal. Specifically, let
  \begin{equation}\label{eq:2}
    y_1^*\neq y_2^*.
  \end{equation}
  Suppose $y_1^*<y_2^*$ ( proof for $y_1^*>y_2^*$ is the same by exchanging the symbols).  Note  that
  \begin{equation}
    |y_1^*+t|^2+|y^*_2-t|^2 = |y_1^*|^2+|y_2^*|^2 - 2(y^*_2-y^*_1)t + 2t^2.
  \end{equation}
  Hence the following inequality holds
  \begin{equation}
    |y^*_1+t|^2+|y^*_2-t|^2< |y^*_1|^2+|y^*_2|^2
  \end{equation}
  when
  \begin{equation}
    0< t < y^*_2-y^*_1.
  \end{equation}
  If we replace $(y_1^*,y^*_2)$ by $(y_1^*+t,y^*_2-t)$, then
  \begin{equation}
    \sum_i\tau_i   |x_i^*|^2 + |y^*_1+t|^2+|y^*_2-t|^2 + \sum_{i=3}^M|y_i^*|^2= \delta_t r,
  \end{equation}
  where
\begin{align}
    \delta_t& = \frac{ \sum\limits_{i=1}^N\tau_i  | x_i^*|^2 + |y_1^*+t|^2+|y^*_2-t|^2+ \sum\limits_{i=3}^M|y_i^*|^2 }
    { \sum\limits_{i=1}^N\tau_i   |x_i^*|^2 + |y_1^*|^2+|y_2^*|^2 + \sum\limits_{i=3}^M|y_i^*|^2}
    = 1- \frac{2(y_2^*-y^*_1)t -2t^2}{r}.
\end{align}
 Hence,
  \begin{equation}
    0< \delta_t<1, \quad \forall \;0< t< y^*_2-y^*_1.
  \end{equation}
  In order to satisfy all the constraints of the optimization
  problem~\eqref{eq:opt2}, we can multiply a positive constant $\beta$,
  which is larger than $1$, to all the parameters. That is, replace
  $\kett{x}^*$ by $\beta \kett{x}^*$ and $(y_1^*,y^*_2,y_3^*,\ldots,y_M^*)$ by
  $(\beta(y_1^*+t),\beta(y_2^*-t),\beta y_3^*,\ldots,\beta y_M^*)$ such that
  \begin{equation}\label{eq:3}
    \sum_{i=1}^N \tau_i  |\beta x_i^*|^2 + |\beta(y_1^*+t)|^2+|\beta(y_2^*-t)|^2+ \sum_{i=3}^M|\beta y_i^*|^2 = r.
  \end{equation}
   Hence, we have
  \begin{align}
    \beta
    &= \frac{\sqrt{r}}{\sqrt{ \sum\limits_{i=1}^N \tau_i  | x_i^*|^2 + |y_1^*+t|^2+|y_2^*-t|^2+ \sum\limits_{i=3}^M|y_i^*|^2}}
    =\frac{1}{\sqrt{\delta_t}}  >1.
  \end{align}
  In addition,
  \begin{equation}\label{eq:10}
    \begin{aligned}
    \langle \kett{c},\beta \kett{x}^*\rangle & + \beta((y_1^*+t)+(y_2^*-t))+\sum_{i=3}^M\beta y_i^*
    = \beta(\langle \kett{c}^*, \kett{x}^*\rangle +\sum_{i=1}^My_i^*) = 0,\\
   \langle \kett{d}_i,\beta\kett{x}^*\rangle & = \beta \langle \kett{d}_i,\kett{x}^*\rangle =0,i=1,2,\ldots,m.
    \end{aligned}
  \end{equation}
  Therefore, combine~\eqref{eq:3} and \eqref{eq:10}, we can know that $(\beta\kett{x}^*,\beta (y_1^*+t),\beta(y_2^*-t),\beta y_3^*,\ldots,\beta y_M^*)$
  is also a feasible point.
  However,
  \begin{equation}
    f(\beta \kett{x}^*) = \beta^2 f(\kett{x}^*)>f(\kett{x}^*)=\zeta,
  \end{equation}
  where the last inequality comes from $f(\kett{x}^*)\geqslant f(\kett{x^0})>0$. This is a contradiction to \eqref{eq:11},
  then the assumption \eqref{eq:2} does not hold.
  Therefore, $f$ achieves the maximal value only when $y_i$ are all  equal.
\end{proof}

\section{Proof of Lemma~\ref{lem:22B}} %
\label{appendix:C} We prove Lemma \ref{lem:22B} in this appendix.

Since
\begin{equation} X = \oplus_{i=1}^{d}(a_i\I + B), \end{equation} and
\begin{equation}
  B = \oplus_{i=1}^l B_i, \quad \mathrm{dim}(B_i) \leqslant 2,\quad \sum_{i=1}^l \mathrm{dim}(B_i)=d,
\end{equation}
we shall consider two different cases, say
\begin{itemize}
  \item[1)] $\sigma_1(X) = \sigma_1(a_1\I + B_1)$ and $\sigma_2(X) = \sigma_1(a_2\I + B_2)$;
  \item[2)] $\sigma_1(X) = \sigma_1(a_1\I + B_1)$ and  $\sigma_2(X) = \sigma_1(a_2\I + B_1)$.
\end{itemize}
We consider Case 1) first.

\begin{lemma}\label{lem:c0}
  Suppose $X = A\otimes\I+\I \otimes B$ with $A,\, B\in \bbR^{d\times d}\;(d\geqslant 4)$, $\trace{A}=\trace{B} =0$, $\norm{A}_F^2+\norm{B}_F^2 = \frac{1}{d}$, $A=\mathrm{diag}(a_1,a_2,\ldots,a_d)$ and
  $B\in \cP$.  If
  \begin{equation} \sigma_1(X) = \sigma_1(a_1\I + B_1), \quad\quad \sigma_2(X) = \sigma_1(a_2\I + B_2), \end{equation} then
  \begin{equation} \tar \leqslant \frac{3d-4}{d^2}.  \end{equation}
\end{lemma}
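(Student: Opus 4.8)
The plan is to turn the two singular-value terms into one explicit quadratic and then split into two regimes using Lemma~\ref{lem:y:square}. Because $B=\oplus_j B_j$ and the two relevant blocks are \emph{distinct}, after relabeling I may assume $B_1$ occupies coordinates $\{1,2\}$ and $B_2$ occupies $\{3,4\}$ (with the obvious simplification when either block is $1\times1$). Writing $Y_i=(a_i\I+B)\trans{(a_i\I+B)}$ and using $a_i\I+B=\oplus_j(a_i\I+B_j)$, the hypotheses $\sigma_1(X)=\sigma_1(a_1\I+B_1)$ and $\sigma_2(X)=\sigma_1(a_2\I+B_2)$ say that $\sigma_1^2(X)=\lambda_1(Y_1)$ is realized by a unit top eigenvector $\kett{u}_1$ of $Y_1$ supported on $\{1,2\}$, and $\sigma_2^2(X)=\lambda_1(Y_2)$ by one, $\kett{u}_2$, supported on $\{3,4\}$; in particular $\langle\kett{u}_1,\kett{u}_2\rangle=0$ holds automatically, which is the structural simplification that makes this case easier than Case~2).

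Next I would move to a convenient frame. Conjugating $B$ by a block-orthogonal $U=U_1\oplus U_2\oplus\I_{d-4}$ is a local unitary $\I\otimes U$ (acting on the second tensor factor) that leaves $A$ diagonal and preserves $\trace{B}$, $\norm{B}_F$, the property $B\in\cP$, and all singular values of $X$. Choosing $U_1,U_2$ at the maximizer so that $\kett{u}_1\mapsto\kett{e}_1$ and $\kett{u}_2\mapsto\kett{e}_3$, the two terms collapse to the squared norms of the first rows of the $2\times2$ blocks: $\sigma_1^2(a_1\I+B_1)=(a_1+b_{11})^2+b_{12}^2$ and $\sigma_2^2(a_2\I+B_2)=(a_2+b_{33})^2+b_{34}^2$. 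Thus at the rotated optimum $\tar=g:=(a_1+b_{11})^2+b_{12}^2+(a_2+b_{33})^2+b_{34}^2$. Since the rotated point still satisfies $\trace{A}=\trace{B}=0$ and $\norm{A}_F^2+\norm{B}_F^2=\frac1d$, it suffices to bound $\max g$ over \emph{all} such traceless $A,B$; discarding the requirement that these two blocks actually carry the top two singular values only enlarges the feasible set, which is harmless for an upper bound.

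The crucial observation is that the only off-diagonal entries entering $g$ are $b_{12}$ and $b_{34}$, and they appear neither in the linear (trace) constraints nor in any cross term, while every remaining off-diagonal entry occurs solely in the norm constraint, so a scaling argument (zero them and rescale) sets them to $0$ at the maximum of $g$. Hence $\max g$ is exactly the problem of Lemma~\ref{lem:y:square} with $f(\kett{x})=(a_1+b_{11})^2+(a_2+b_{33})^2$ in $\kett{x}=(a_1,\dots,a_d,b_{11},\dots,b_{dd})$ under $\sum_i a_i=\sum_i b_{ii}=0$, together with $\kett{y}=(b_{12},b_{34})$ and all weights $\xi_i=\omega_i=\tau_i=1$, so that $\eta=1$ and $r=\frac1d$. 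Lemma~\ref{lem:y:square} then forces one of two alternatives: either $\max g=\eta r=\frac1d\leqslant\frac{3d-4}{d^2}$ (which holds for all $d\geqslant2$), or the maximum occurs at $\kett{y}^*=0$, i.e. $b_{12}=b_{34}=0$. In the latter case $A$ and $B$ are both diagonal, hence normal and traceless, and $(a_1+b_{11})^2+(a_2+b_{33})^2$ is a sum of two squared diagonal entries of the diagonal matrix $X$, so it is at most $\tar$; Corollary~\ref{lem:normal2} then bounds it by $\frac{3d-4}{d}(\norm{A}_F^2+\norm{B}_F^2)=\frac{3d-4}{d^2}$. Either way $g\leqslant\frac{3d-4}{d^2}$, completing the proof.

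The step that demands the most care, and which I expect to be the main obstacle, is the frame reduction: one must verify that aligning the two (disjointly supported) top eigenvectors with $\kett{e}_1$ and $\kett{e}_3$ is a legitimate block-orthogonal similarity that keeps $A$ diagonal and $B\in\cP$, and that in this frame the two singular values really do reduce to the simple row-norms above, with equality holding precisely because these are top singular vectors \emph{at the maximizer}. Once this is in place, passing to $\max g$ and invoking Lemma~\ref{lem:y:square} is routine; the sub-cases where $B_1$ or $B_2$ is $1\times1$ merely delete the corresponding entry of $\kett{y}$ and only make the estimate easier.
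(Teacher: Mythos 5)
Your argument is correct, but it takes a genuinely different and much heavier route than the paper, whose entire proof of Lemma~\ref{lem:c0} is four lines: by Weyl's inequality $\sigma_1(a_i\I+B_i)\leqslant |a_i|+\sigma_1(B_i)$, then $(|a_i|+\sigma_1(B_i))^2\leqslant 2(a_i^2+\sigma_1^2(B_i))$, and --- this is the only place the ``two distinct blocks'' hypothesis enters --- $\sigma_1^2(B_1)+\sigma_1^2(B_2)\leqslant\norm{B_1}_F^2+\norm{B_2}_F^2\leqslant\norm{B}_F^2$, so that $\tar\leqslant 2(\norm{A}_F^2+\norm{B}_F^2)=2/d\leqslant(3d-4)/d^2$ precisely because $d\geqslant4$. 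You instead deploy the full optimization machinery (aligning the two disjointly supported top singular vectors by a block-orthogonal similarity, the dichotomy of Lemma~\ref{lem:y:square}, and Corollary~\ref{lem:normal2} in the all-diagonal branch) that the paper reserves for the genuinely hard same-block configurations of Lemmas~\ref{lem:c1}--\ref{lem:c3}. Your route buys a sharper constant in one branch ($\eta r=1/d$ rather than $2/d$, valid already for $d\geqslant2$) at the price of an extra branch that falls back on the normal-matrix result; the paper's route buys brevity and no case analysis. Two points to tighten if you keep your version: the dichotomy of Lemma~\ref{lem:y:square} should be invoked for a maximizer of the \emph{relaxed} problem (which exists by compactness of the feasible set), since for the given $A,B$ you only know $\tar=g\leqslant\max g$ --- the phrase ``at the maximizer'' conflates the given point with the optimum; and when one or both of $B_1,B_2$ is $1\times1$ the corresponding coordinate of $\kett{y}$ simply disappears (if both do, $\kett{y}$ is empty and you land directly in the diagonal, hence normal, branch).
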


\begin{proof} We have
  \begin{equation}
    \begin{aligned}
     \! \tar & \leqslant (a_1+\sigma_1(B_1))^2 + (a_2+ \sigma_1(B_2))^2  \\
           & \leqslant 2 ( a_1^2+ a_2^2 + \sigma_1^2(B_1)+ \sigma_1^2(B_2))  \\
           & \leqslant 2 ( \norm{A}_F^2+ \norm{B}_F^2)  \\
           & = \frac{2}{d} \leqslant\frac{3d-4}{d^2}.
    \end{aligned}
  \end{equation}
  Our proof completes.
\end{proof}

We then consider Case 2). First of all, we can prove that $\tar$ achieves the maximal value only when $B_i$
are all diagonal for $i=2,3,\ldots,d$.
\begin{lemma}\label{lem:c1}
 Suppose $X = A\otimes\I+\I \otimes B$ with $A, B\in \bbR^{d\times d}\;(d\geqslant 4)$,
  $\trace{A}=\trace{B}=\nobreak0,\break\; \norm{A}_F^2+\norm{B}_F^2 = \frac{1}{d}$, $A=\mathrm{diag}(a_1,a_2,\ldots,a_d)$ and
  $B\in \cP$.  If
  \begin{equation} \sigma_1(X) = \sigma_1(a_1\I + B_1),\quad \quad \sigma_2(X) = \sigma_1(a_2\I + B_1), \end{equation} Then $\tar$ achieves the
  maximal value only when $B_i,i\geqslant 2$ are all diagonal.
\end{lemma}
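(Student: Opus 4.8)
The plan is to show that if any off-diagonal block $B_i$ with $i\geqslant 2$ is nonzero, then we can strictly increase the objective $h=\tar$ while remaining feasible, thereby forcing all such blocks to be diagonal at the optimum. Since $A$ is diagonal and $B=\oplus_{i=1}^l B_i\in\cP$, we have $X=\oplus_{i=1}^d(a_i\I+B)$, and the two largest singular values come from $\sigma_1(a_1\I+B_1)$ and $\sigma_1(a_2\I+B_1)$, i.e., both from the blocks indexed by the \emph{same} $2\times 2$ piece $B_1$. The key observation is that the objective value $\tar=\sigma_1^2(a_1\I+B_1)+\sigma_1^2(a_2\I+B_1)$ depends \emph{only} on the entries of $B_1$ (together with $a_1,a_2$), and not at all on the entries of $B_i$ for $i\geqslant 2$. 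The blocks $B_i$, $i\geqslant 2$, enter the problem only through the quadratic constraint $\norm{A}_F^2+\norm{B}_F^2=\tfrac1d$ and through the trace constraint $\trace B=0$.

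**The scaling/contradiction argument.**
First I would write $h$ explicitly in terms of $a_1,a_2$ and the four entries of $B_1$, confirming it is a positive degree-$2$ homogeneous function of those variables. Then I would argue by contradiction: suppose at an optimal point some off-diagonal entry of some $B_i$ ($i\geqslant 2$) is nonzero. The off-diagonal entries of $B_i$, $i\geqslant 2$, are not constrained by the two linear (trace) conditions in any essential way, and crucially they do not appear in $h$. Hence I would zero out all off-diagonal entries $b_{jk}$ (with $j\neq k$) living in the blocks $B_i$, $i\geqslant 2$, and then rescale all the \emph{remaining} active variables (the entries of $A$, the diagonal entries of the $B_i$, and the entries of $B_1$) by a common factor $\beta>1$ to restore the quadratic constraint $\norm A_F^2+\norm B_F^2=\tfrac1d$. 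Because the zeroed entries carried strictly positive Frobenius weight, we get $\beta>1$; because the linear constraints are homogeneous and the diagonal entries were scaled uniformly, both $\trace A=0$ and $\trace B=0$ persist; and because $h$ is degree-$2$ homogeneous in the surviving variables, its value becomes $\beta^2 h>h$. This contradicts optimality, so at the maximum every $B_i$ with $i\geqslant 2$ must be diagonal. This is precisely the structure of the scaling trick already used in the proof of Lemma~\ref{lem:from_two}, and I would mirror that argument here.

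**Main obstacle and care needed.**
The delicate point is the bookkeeping of the constraints under the zeroing-and-rescaling move. I must verify that after deleting the off-diagonal entries of $B_i$ ($i\geqslant 2$), the assumption $\sigma_1(X)=\sigma_1(a_1\I+B_1)$ and $\sigma_2(X)=\sigma_1(a_2\I+B_1)$ is not violated---that is, the two largest singular values still come from the $B_1$-blocks and not from some newly dominant block. Here homogeneity helps: the common rescaling by $\beta$ multiplies \emph{every} singular value of $X$ by $\beta$, so it preserves the ordering of the singular values and the identity of the dominant blocks; and removing off-diagonal mass from $B_i$ only decreases $\sigma_1(a_i\I+B_i)$ (since for a fixed diagonal, the largest singular value of $a_i\I+B_i$ is not increased by dropping off-diagonal entries), so the competing blocks cannot overtake the $B_1$-blocks. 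I expect the verification that dropping off-diagonal entries does not increase the competing $\sigma_1(a_i\I+B_i)$, together with checking the trace constraint survives the uniform diagonal rescaling, to be the main technical hurdle; the rest is the same homogeneity-plus-scaling contradiction as in \cref{lem:from_two}.
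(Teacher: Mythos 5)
Your proposal is correct and follows essentially the same route as the paper: the paper packages your zero-out-and-rescale move into Lemma~\ref{lem:y:square} (applied with $\xi_i=0$, so the alternative $\eta r=0$ is excluded by positivity of the objective and the optimum forces $\kett{y}=0$), while you simply inline that scaling argument directly. Your additional check that the block ordering of the singular values survives the move is sound, though the paper sidesteps it by taking $\sigma_1^2(a_1\I+B_1)+\sigma_1^2(a_2\I+B_1)$ itself as the objective over the trace and norm constraints alone.
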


\begin{proof}
  If $\mathrm{dim}(B_1)=1$, let
  \begin{equation}
    \kett{x} = (a_1,\ldots a_d,b_{11},\ldots,b_{dd})^\intercal.
  \end{equation}
  If $\mathrm{dim}(B_1)=2$, let
  \begin{equation}
    \kett{x} = (a_1,\ldots a_d,b_{11},\ldots,b_{dd},b_{12},b_{21})^\intercal.
  \end{equation}
  Let $\kett{y}$ be the list of variables consisting of all the off-diagonal entries of $B_i,i\geqslant 2$.  Next,
  define the function $f(\kett{x})$ as
  \begin{equation}\label{eq:14}
    \begin{aligned}
    f(\kett{x}) &= \tar\\
    &=\sigma_1^2(a_1\I + B_1)+\sigma_1^2(a_2\I + B_1).
    \end{aligned}
  \end{equation}
  It is obvious that $f$ is positive degree-two homogeneous.  In order to maximize $f$, it is equivalent to solve the
  following optimization problem:
  \begin{equation}
    \begin{array}{rl}
      \max & f(\kett{x})+\xi_iy_i^2\\
      \mathrm{s.t.} &
                      \begin{cases}
                        \langle \kett{c}_i,\kett{x} \rangle &=0,\quad i=1, 2, \\
                        \sum_i\tau_i x_i^2+\sum_i\omega_iy_i^2&=r,
                      \end{cases}
    \end{array}
  \end{equation}
  where
  \begin{equation}
    \begin{aligned}
      \kett{c}_1 &= (\underbrace{1,1,\ldots,1}_{d \text{~times}},\underbrace{0,0,\ldots,0}_{d\text{~times}},0,\ldots,0)^\intercal,\\
      \kett{c}_2& = (\underbrace{0,0,\ldots,0}_{d
        \text{~times}},\underbrace{1,1,\ldots,1}_{d\text{~times}},0,\ldots,0)^\intercal,
    \end{aligned}
  \end{equation}
  and
  \begin{equation}
    \xi_i = 0,\quad \tau_i  =1,\quad \omega_i = 1,\quad r=\frac{1}{d}.
  \end{equation}
  It follows that
  \begin{equation}\label{eq:12}
    \eta = \max\{\frac{\xi_i}{\omega_i}\}=0.
  \end{equation}
  According to Lemma~\ref{lem:y:square}, we have either
  \begin{equation}
    \max f(\kett{x}) = f(\kett{x})+\xi_iy_i^2 =\eta r =0,
  \end{equation}
  or $f(\kett{x})$ achieves the maximal value only when $\kett{y}=0$, i.e., $B_i,i\geqslant 2$ are all diagonal.
  However, $f(\kett{x})$ is always positive, that is \eqref{eq:12} cannot hold. Hence, our proof completes.
\end{proof}

\begin{lemma}\label{lem:c2}
  Suppose $X = A\otimes\I+\I \otimes B\in\cX_d\;(d\geqslant 4)$ with the conditions that $A, B\in \bbR^{d\times d}$,  $A=\mathrm{diag}(a_1,a_2,\ldots,a_d)$ and
  $B=B_1\oplus\mathrm{diag}(b_{33},b_{44},\ldots,b_{dd}),\;\mathrm{dim}(B_1)=2$.  If
  \begin{equation}
    \sigma_1(X) = \sigma_1(a_1\I + B_1), \quad  \quad  \sigma_2(X) = \sigma_1(a_2\I + B_1),
  \end{equation}
  then $\tar$ achieves the maximal value only when
  \begin{align}\label{eq:13}
    a_{i}&=a_3,\quad i=4,\ldots,d,\\\label{eq:15}
    b_{ii}&=b_{33},\quad i=4,\ldots,d.
  \end{align}
\end{lemma}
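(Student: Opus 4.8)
The plan is to exploit the fact that the objective function $\tar$ does not depend on the ``tail'' variables $a_3,\ldots,a_d$ and $b_{33},\ldots,b_{dd}$ at all, so that their optimal configuration is governed purely by the linear and quadratic constraints. Indeed, under the hypotheses $\sigma_1(X)=\sigma_1(a_1\I+B_1)$ and $\sigma_2(X)=\sigma_1(a_2\I+B_1)$ we have $\tar=\sigma_1^2(a_1\I+B_1)+\sigma_1^2(a_2\I+B_1)$, a quantity determined entirely by $a_1,a_2$ and the four entries of the $2\times 2$ block $B_1$. The tail variables enter only through the two trace constraints $\sum_i a_i=0$ and $\sum_i b_{ii}=0$ and through the single norm constraint $\norm{A}_F^2+\norm{B}_F^2=\frac1d$. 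This is exactly the structure handled by Lemma~\ref{lem:y:same}.

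First I would record that $f:=\tar$ is degree-$2$ homogeneous (scaling $A,B$ by $t$ scales each singular value by $|t|$) and strictly positive at some feasible point, since $\norm{A}_F^2+\norm{B}_F^2=\frac1d>0$ forces $X\neq 0$. To obtain $a_3=\cdots=a_d$ I would apply Lemma~\ref{lem:y:same} with the free variables taken to be $\kett{y}=(a_3,\ldots,a_d)$ and $\kett{x}=(a_1,a_2,b_{11},b_{22},b_{12},b_{21},b_{33},\ldots,b_{dd})$. Then $\sum_i a_i=0$ has the required form $\langle \kett{c},\kett{x}\rangle+\sum_i y_i=0$ with $\langle \kett{c},\kett{x}\rangle=a_1+a_2$; the remaining trace constraint $\sum_i b_{ii}=0$ restricts only $\kett{x}$ and so plays the role of an ordinary $\langle \kett{d}_i,\kett{x}\rangle=0$; and the norm constraint carries unit coefficients on the $y_i^2=a_i^2$. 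Since $f$ does not involve $\kett{y}$, Lemma~\ref{lem:y:same} applies and forces all components of $\kett{y}$ to coincide at any maximizer, which is \eqref{eq:13}.

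By symmetry I would then repeat the argument with the free variables $\kett{y}=(b_{33},\ldots,b_{dd})$ and the roles of the two trace constraints interchanged, so that $\sum_i b_{ii}=0$ supplies the additive coupling and $\sum_i a_i=0$ becomes the auxiliary constraint on $\kett{x}$. This yields $b_{33}=\cdots=b_{dd}$, i.e.\ \eqref{eq:15}. Because each conclusion is a necessary condition satisfied at every maximizer, both \eqref{eq:13} and \eqref{eq:15} hold simultaneously at any optimal solution.

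The one point demanding care, rather than any genuine difficulty, is the bookkeeping needed to cast our problem into the template of Lemma~\ref{lem:y:same}: one must check that the objective truly excludes the chosen tail variables, that exactly one linear constraint couples them additively while the other trace constraint touches only $\kett{x}$, and that the quadratic form carries unit coefficients on the $y_i^2$. Once these hypotheses are verified, both equalities are immediate consequences of the lemma, with no analytic estimate required.
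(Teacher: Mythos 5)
Your proposal is correct and follows essentially the same route as the paper: both reduce the statement to Lemma~\ref{lem:y:same} by placing the tail variables $(a_3,\ldots,a_d)$ (respectively $(b_{33},\ldots,b_{dd})$) in the role of $\kett{y}$, observing that the objective $\tar=\sigma_1^2(a_1\I+B_1)+\sigma_1^2(a_2\I+B_1)$ is a positive degree-two homogeneous function independent of these variables, and then repeating the argument with the roles of $a_i$ and $b_{ii}$ exchanged. The bookkeeping you flag (one trace constraint couples $\kett{y}$ additively, the other touches only $\kett{x}$, unit coefficients in the quadratic constraint) is exactly the verification the paper carries out.
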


\begin{proof}
  We first prove \eqref{eq:13} is necessary for maximizing $\tar$.  Let
  \begin{equation}\label{eq:16}
    \begin{aligned}
      \kett{x} &= (a_1,a_2,b_{11},b_{22},\ldots,b_{dd},b_{12},b_{21})^\intercal\in\bbR^{d+4},\\
      \kett{y} & =(a_3,a_4,\ldots,a_d)^\intercal\in\bbR^{d-2},\\
      \kett{c} & = (1,1,0,0\ldots,0)^\intercal\in\bbR^{d+4},\\
      \kett{d} & = (0,0,1,1,\ldots,1,0,0)^\intercal\in\bbR^{d+4},\\
    \end{aligned}
  \end{equation}
  and
  \begin{equation}
    \tau_i   =1,\; i=1,\ldots,d+4,\qquad      r = \frac{1}{d}.
  \end{equation}
  As in \eqref{eq:14}, we define
  \begin{equation}
    \begin{aligned}
          f(\kett{x}) &= \tar
           =\sigma_1^2(a_1\I + B_1)+\sigma_1^2(a_2\I + B_1).
    \end{aligned}
  \end{equation}
  Similarly, we know that $f(\kett{x})$ is a positive degree-two homogeneous function.

  Then, to maximize $f(\kett{x})$ is equivalent to solve the following optimization problem:
  \begin{equation}
    \begin{array}{rl}
      \max & f(\kett{x})\\
      \mathrm{s.t.} &
                      \begin{cases}
                        \langle \kett{c} ,\kett{x}\rangle + \sum_{i=1}^{d-2}y_i &=0,\\
                        \langle \kett{d}, \kett{x} \rangle &=0,\\
                        \sum_{i=1}^{d+4}\tau_i x_i^2+\sum_{i=1}^{d-2}y_i^2&=r.
                      \end{cases}
    \end{array}
  \end{equation}
  According to Lemma~\ref{lem:y:same}, $f(\kett{x})$ achieves the maximal value only when \eqref{eq:13} holds.

  Similarly, we can prove $f(\kett{x})$ achieves the maximal value only when Eq.\eqref{eq:15} holds by exchanging the
  symbols of $a_i$ and $b_{ii}$ in \eqref{eq:16} for any $i=1,2,\ldots,d$. Therefore, our proof completes.
\end{proof}

In the following part, according to the results in Lemma~\ref{lem:c1} and Lemma~\ref{lem:c2}, we can assume that
\begin{equation}\label{eq:19}
  A= \mathrm{diag}(a_1,a_2,\underbrace{a_3,a_3,\ldots,a_3}_{d-2\text{~times}}),
\end{equation}
and
\begin{equation}\label{eq:20}
  B =B_1\oplus \mathrm{diag}(\underbrace{b_{33},b_{33},\ldots,b_{33}}_{d-2\text{~times}}),
\end{equation}
where
\begin{equation}
  B_1=
  \begin{bmatrix}
    b_{11}&b_{12}\\
    b_{21}&b_{22}
  \end{bmatrix}.
\end{equation}

\begin{lemma}\label{lem:c3}
  Suppose $X = A\otimes\I+\I \otimes B$ with $A, B\in \bbR^{d\times d}\;(d\geqslant 4)$,
  $\trace{A}=\trace{B}=0,\break \norm{A}_F^2+\norm{B}_F^2 = \frac{1}{d}$. Let $A$ and $B$ be defined as in ~\eqref{eq:19}
  and ~\eqref{eq:20}, respectively.  If
  \begin{align}
    \sigma_1(X)& = \sigma_1(a_1\I_2+ B_1), \quad \\
    \sigma_2(X)&  = \sigma_2(a_2\I_2+B_1),
  \end{align}
  then
  \begin{equation}
    \tar \leqslant  \frac{3d-4}{d^2}.
  \end{equation}
\end{lemma}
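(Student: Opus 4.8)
The plan is to convert the statement into a low-dimensional inequality in the free scalar parameters and then bound the two largest singular values in closed form. I would first use the remaining symmetry: conjugating $B$ by an orthogonal $O=O_2\oplus\I_{d-2}$ acts only on $B_1$ and leaves $A$, the block structure \eqref{eq:20}, and all singular values of $X$ unchanged, so I may diagonalize the symmetric part of $B_1$ and assume
\[
B_1=\begin{bmatrix} s_1 & k\\ -k & s_2\end{bmatrix}.
\]
Then $\tar=\sigma_1^2(a_1\I_2+B_1)+\sigma_1^2(a_2\I_2+B_1)$ depends only on $a_1,a_2,s_1,s_2,k$ (and since $\sigma_2(a_2\I_2+B_1)\le\sigma_1(a_2\I_2+B_1)$, bounding this sum also covers the hypothesis as written). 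Because $a_3,b_{33}$ appear only in the constraints, I would eliminate them through $\trace A=\trace B=0$, which gives $(d-2)a_3^2=(a_1+a_2)^2/(d-2)$ and the analogue for $b_{33}$, collapsing the two linear constraints together with the norm constraint into the single quadratic constraint
\[
Q:=a_1^2+a_2^2+\frac{(a_1+a_2)^2}{d-2}+s_1^2+s_2^2+\frac{(s_1+s_2)^2}{d-2}+2k^2=\frac1d.
\]

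Next I would make each top singular value explicit. For a $2\times2$ matrix $M$ one has $\sigma_1^2(M)=\tfrac12\big(\norm{M}_F^2+\sqrt{\norm{M}_F^4-4\det^2M}\big)$, and for $M=a_i\I_2+B_1$ the discriminant factors cleanly as $\norm{M}_F^4-4\det^2M=(s_1-s_2)^2\big[(2a_i+s_1+s_2)^2+4k^2\big]$. Writing $c_i=2a_i+s_1+s_2$, this yields
\[
\sigma_1^2(a_i\I_2+B_1)=\tfrac12\big[(a_i+s_1)^2+(a_i+s_2)^2+2k^2\big]+\tfrac12\,|s_1-s_2|\sqrt{c_i^2+4k^2}.
\]
The quadratic part is harmless; the genuine obstacle is the square-root term $|s_1-s_2|\sqrt{c_i^2+4k^2}$, which is precisely what forces $B_1$ to be non-normal and where the earlier arguments of Theorem~\ref{lem:normal} do not apply.

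My key device for that term is a weighted AM--GM, $|s_1-s_2|\sqrt{c_i^2+4k^2}\le\tfrac12\big[t(s_1-s_2)^2+(c_i^2+4k^2)/t\big]$ with a free $t>0$, which replaces $\tar$ by a purely quadratic upper bound $G_t$. Since $\tar$ and $Q$ are both degree-$2$ homogeneous and $Q=1/d$, it then suffices to establish the homogeneous quadratic inequality $\tfrac{3d-4}{d}\,Q-G_t\ge0$. I would expand this in the coordinates $\Sigma=a_1+a_2$, $\sigma=s_1+s_2$, $\nu=s_1-s_2$, $k$; after using $a_1^2+a_2^2\ge\tfrac12\Sigma^2$, the difference is a quadratic form whose only dangerous feature is the cross term $-m\,\sigma\Sigma$ with $m=1+1/t$.

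The final step is to tune the weight. I would choose $m=\frac{3d-4}{2(d-2)}$ (equivalently $t=\frac{2(d-2)}{d}$), the value that simultaneously equalizes the $\Sigma^2$ and $\sigma^2$ coefficients into a perfect square and makes the $\nu^2$-coefficient vanish; the difference then collapses to $\tfrac{m}{2}(\Sigma-\sigma)^2+2\big(\tfrac{3d-4}{d}-m\big)k^2$, and both terms are nonnegative because $d\ge4$ forces $\tfrac{3d-4}{d}\ge m$. This gives $\tar\le\tfrac{3d-4}{d}Q=\tfrac{3d-4}{d^2}$. I expect the delicate point to be exactly this calibration: the square-root bound must be tight enough in two independent directions at once, and the argument closes only because the two tightness requirements coincide at the single value $m=\frac{3d-4}{2(d-2)}$.
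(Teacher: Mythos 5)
Your proof is correct, and it takes a genuinely different route from the one in the paper. The paper parametrizes the top eigenvectors of the two blocks by an angle $\theta$, maximizes over $\theta$ by calculus, passes to the variables $x,y,z,w,p,q$, invokes the KKT conditions to force $w=x$ at any maximizer, and then splits into the case $pq=0$ (where $B$ becomes normal and Theorem~\ref{lem:normal} is quoted) and $pq\neq 0$ (handled by a further quadratic bound and Lemma~\ref{lem:y:square}). You instead write $\sigma_1^2(a_i\I_2+B_1)$ in closed form --- your factorization of the discriminant as $(s_1-s_2)^2\bigl[(2a_i+s_1+s_2)^2+4k^2\bigr]$ is correct --- linearize the square root by a weighted AM--GM with a free parameter $t$, and reduce everything to the positive semidefiniteness of an explicit quadratic form in $\Sigma=a_1+a_2$, $\delta=a_1-a_2$, $\sigma=s_1+s_2$, $\nu=s_1-s_2$, $k$. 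I checked the calibration: with $m=1+1/t=\frac{3d-4}{2(d-2)}$ the $(\Sigma,\sigma)$-block becomes exactly $\frac{m}{2}(\Sigma-\sigma)^2$, the $\nu^2$-coefficient is $\frac{3d-4}{2d}-\frac{1+t}{2}=0$, and the $k^2$-coefficient is $2\bigl(\tfrac{3d-4}{d}-m\bigr)=\frac{(3d-4)(d-4)}{d(d-2)}\geqslant 0$ for $d\geqslant 4$, so the argument closes. One bookkeeping remark: the difference also carries a term $\tfrac{1}{2}\bigl(\tfrac{3d-4}{d}-m\bigr)\delta^2$ which your final display omits; its coefficient is nonnegative under exactly the same condition $d\geqslant 4$ as the $k^2$ term, so discarding it is legitimate, but you should state that sign check explicitly rather than folding it into the remark $a_1^2+a_2^2\geqslant\tfrac12\Sigma^2$ (which, applied to the constraint side alone, would not by itself dispose of the $\delta^2$ appearing in your upper bound $G_t$). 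Your route has two advantages over the paper's: it is self-contained (it needs neither Theorem~\ref{lem:normal} nor the optimization Lemmas~\ref{lem:y:square} and~\ref{lem:y:same}, and avoids the Lagrange-multiplier reasoning, including the paper's somewhat informal dismissal of $\mu=0$), and it proves the bound for all $A,B$ of the form \eqref{eq:19}--\eqref{eq:20} without ever using the hypothesis on which blocks supply the two largest singular values. What the paper's route buys is uniformity: the same optimization lemmas drive all of Appendix~C.
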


\begin{proof}
  Denote by $\kett{\phi}\in \bbR^2$ and $\kett{\psi}\in\bbR^2$ the unit eigenvectors corresponding to the largest
  eigenvalues of $(a_1\I_2 + B_1)\trans{(a_1\I_2 + B_1)}$ and $(a_2\I_2 + B_1)\trans{(a_1\I_2 + B_1)}$, respectively.
  With a locally orthogonal similarity, we can assume that
  \begin{equation}
    \kett{\phi} = \kett{e}_1, \quad \kett{\psi } = \cos\theta\kett{e}_1 + \sin\theta \kett{e}_2,\quad \theta\in[0,2\pi].
  \end{equation}
  Hence,
  \begin{multline}
    h(t)=\tar = (a_1+b_{11})^2+   (1+\cos^2\theta )b_{12}^2   
   + \sin^2\theta b_{21}^2+\cos^2\theta(a_2+b_{11})^2\\ +\sin^2\theta(a_2+b_{22})^2
    + \sin(2\theta)( a_2b_{12}+a_2b_{21} + b_{11}b_{21}+ b_{12}b_{22}) .
  \end{multline}
  Then $\tar$ can be regarded as a function of $\theta$.

  Let
  \begin{eqnarray}
    t & =&\tan\theta,  \\
    k & =&
           (a_1+b_{11})^2
           +b_{21}^2+b_{12}^2+
           (a_2+b_{22})^2,    \\
    m & = &b_{12}^2-b_{21}^2+(a_2+b_{11})^2-(a_2+b_{22})^2,  \label{eq:m} \\
    n & =&2b_{12}(a_2+b_{22})+2b_{21}(a_2+b_{11}). \label{n}
  \end{eqnarray}
  A simple calculation yields
  \begin{equation}
    \tar  =  h(t)= k+\frac{m+nt}{1+t^2}.
    \label{eq:absb31-6}
  \end{equation}
  If $n=0$, we have
  \begin{subequations}
    \label{eq:29}
    \begin{align}
      \tar & = h\leqslant k+m
            = |a_1+b_{11}|^2+|a_2+b_{11}|^2+2b_{12}^2\\
           & \leqslant \frac{3d-4}{d}\left( \sum_{i=1}^d(a_i^2+b_{ii}^2)\right) + 2b_{12}^2\label{eq:28}\\
           & = \frac{3d-4}{d}( \frac{1}{d} - b_{12}^2-b_{21}^2) + 2b_{12}^2\\
           & = \frac{3d-4}{d^2} - \frac{d-4}{d}b_{12}^2-\frac{3d-4}{d}b_{21}^2\\
           & \leqslant \frac{3d-4}{d^2},
    \end{align}
  \end{subequations}
  where the inequality~\eqref{eq:28} comes from Lemma~\ref{lem:normal2}.

  Hence, we can assume that $n\neq0$.  Next, maximize $h(t)$ with respect to $t$. Solve
  \begin{eqnarray}
    \frac{\mathrm{d}}{\mathrm{d} t}h(t)=\frac{n-2mt-t^2n}{(1+t^2)^2}=0,
    \label{eq:pp}
  \end{eqnarray}
  we have
  \begin{equation}
    t = \frac{-m\pm\sqrt{m^2+n^2}}{n}.
  \end{equation}

  We can further assume that $n> 0$, otherwise changing the signs of $b_{12}$ and $b_{21}$ by an orthogonal similarity
  on $B$.  Then the value of $h(t)$ at $t=\frac{-m+\sqrt{m^2+n^2}}{n}$ is greater than that at
  $t=\frac{-m-\sqrt{m^2+n^2}}{n}$. Therefore, the maximal value of $h(t)$ will be achieved at one of the following
  points:
  \begin{enumerate}[label = (\roman*)]
    \item $t=+\infty$;
    \item $t = \frac{-m+\sqrt{m^2+n^2}}{n}$.
  \end{enumerate}

  Consider case~(i): $t=+\infty$, we have
  \begin{subequations}
    \begin{align}
      h& = k= (a_1+b_{11})^2 + (a_2+b_{22}) + b_{12}^2+ b_{21}^2  \\
       & \leqslant 2(a_1^2+a_2^2+b_{11}^2+b_{22}^2+b_{12}^2+b_{22}^2)  \\
       &\leqslant 2(\norm{A}_F^2+\norm{B}_F^2) = \frac{2}{d} \leqslant \frac{3d-4}{d^2}.
    \end{align}
  \end{subequations}

  Consider the case (ii): $t=\frac{-m+\sqrt{m^2+n^2}}{ n}$.  For simplicity, we still use $h$ to indicate $\tar$.

  Note that
  \begin{equation} \frac{m+nt}{1+t^2}= \frac{m+\sqrt{m^2+n^2}}{ 2}, \end{equation} and so
  \begin{equation} \tar = h(t)= k+\frac{m+\sqrt{m^2+n^2}}{ 2}. \end{equation} The conditions $\trace{A}=\trace{B}=0$ lead to
  \begin{equation}
    a_3 = -\frac{a_1+a_2}{d-2},\quad \quad b_{33}=-\frac{b_{11}+b_{22}}{d-2}.
  \end{equation}
  In order to get rid off the linear constraints, apply the change of the variables:
  \begin{alignat}{3}
    a_1  & = x + y,   \hspace{1cm}        &a_2  &= x - y,  \\
    b_{11} & = w+z, & b_{22} &= w - z,\\\label{eq:37} b_{12} & = p + q, & b_{21} &= p - q.
  \end{alignat}
  Hence, we have
  \begin{align}
    a_3& = -\frac{a_1+a_2}{d-2}=-\frac{2x}{d-2},\\
    b_{33}&=-\frac{b_{11}+b_{22}}{d-2}
       =-\frac{2w}{d-2},
  \end{align}
  and
  \begin{align}
    \sum_{i=1}^da_i^2& = a_1^2+a_2^2+(d-2)a_3^2 
                     = (x+y)^2+(x-y)^2+\frac{4x^2}{d-2}
                     =\frac{2d}{d-2}x^2+2y^2,
  \end{align}
  \begin{subequations}
    \begin{align}
      \sum_{i,j=1}^db_{ij}^2& = b_{11}^2+b_{22}^2+(d-2)b_{33}^2 + b_{12}^2+b_{21}^2,  \\
                            & = (w+z)^2+(w-z)^2 + \frac{4w^2}{d-2}
       +(p+q)^2+(p-q)^2,  \\
                            & = \frac{2d}{d-2}w^2 + 2(z^2+p^2+q^2).
    \end{align}
  \end{subequations}
  Then $\norm{A}_F^2+\norm{B}_F^2=\frac{1}{d}$ is equivalent to
  \begin{equation}
    \frac{d}{d-2}(x^2+w^2) + y^2+z^2+p^2+q^2 =
    \frac{1}{2d}.
  \end{equation}
  Then to maximize $\tar$ is equivalent to solve the following optimization problem:
  \begin{equation}\label{eq:32}
    \begin{aligned}
      \max & \quad h(x,y,z,w,p,q)\\
      \mathrm{s.t.} &\quad \frac{d}{d-2}(x^2+w^2) + y^2+z^2+p^2+q^2 = \frac{1}{2d},
    \end{aligned}
  \end{equation}
  where
  \begin{align}
        h(x,y,z,w,p,q)
    &= \tar \\
 &   = 2\sqrt{\varDelta}+2(p^2+p q+q^2
     +z(w+x+y)+(w+x)^2+y^2+z^2),
  \end{align}
  and
  \begin{equation}
    \varDelta=  \left(p^2+z^2\right) \left(q^2+(w+x-y)^2\right).
  \end{equation}

  We can prove that $h$ achieves the maximal value only when $w=x$.  Consider the KKT condition of \eqref{eq:32}, we
  have
  \begin{align}\label{eq:33}
    \frac{\partial h}{\partial x} - \frac{2\mu d}{d-2}x &=0,\\\label{eq:35}
    \frac{\partial h}{\partial w} - \frac{2\mu d}{d-2}w &=0,
  \end{align}
  where $\mu$ is the Lagrange multiplier.  It is unlikely that $\mu=0$ in the optimization problem \eqref{eq:32}.  In
  fact, the KKT condition implies that
  \begin{equation}\label{eq:4}
    \nabla h(x,y,z,w,p,q) - 2\mu (\frac{d}{d-2}x,y,z,\frac{d}{d-2}w,p,q)^\intercal  =0.
  \end{equation}
  In particular, we have
  \begin{align}
    \frac{\partial h}{\partial x} &= 2 ( \frac{(w+x-y)(p^2+z^2)}{\sqrt{\Delta}}  + 2x + z+ 2w),\\
    \frac{\partial h}{\partial y } &= 2( -\frac{ (w+x-y)(p^2+z^2)}{\sqrt{\Delta}} + 2y + z   ),\\
    \frac{\partial h}{\partial z} & = 2( \frac{ z(q^2+( w+x-y)^2)}{\sqrt\Delta} + x +y + 2z + w),\\
    \frac{\partial h}{\partial p}& = 2 ( \frac{p(q^2+(w+x-y)^2)}{\sqrt{\Delta}}+ 2p + q ),\\
    \frac{\partial h}{\partial q} &= 2(\frac{ q(p^2+ z^2)}{\sqrt{\Delta}}+ p+ 2q  ),\\
    \frac{\partial h}{\partial  w} & = \frac{\partial h}{\partial x},\label{eq:43}
  \end{align}
  and
  \begin{equation}
    \nabla h = \left(  \frac{\partial h}{\partial x},\;  \frac{\partial h}{\partial y},\;
      \frac{\partial h}{\partial z},\;  \frac{\partial h}{\partial w}, \;
      \frac{\partial h}{\partial p},\;  \frac{\partial h}{\partial q}\right)^\intercal.
  \end{equation}
  Let
  \begin{equation}
    \kett{\varphi} =(x,y,z,w,p,q)^\intercal .
  \end{equation}
  Multiply $\kett{\psi}^\intercal$ to the l.h.s of \eqref{eq:4}, we have
  \begin{equation}\label{eq:38}
    \braa{\varphi}\nabla h= 2\mu(\frac{d}{d-2}(x^2+w^2) + y^2+z^2+p^2+q^2 )=\frac{\mu}{d}.
  \end{equation}
  The simple calculation gives that
  \begin{equation}\label{eq:39}
    \braa{\varphi}\nabla h = 2h.
  \end{equation}
  Note that $h$ represents the square sum of the largest two singular values of $X$, it is always positive. Hence, by
  ~\eqref{eq:38} and \eqref{eq:39},
  \begin{equation}\label{eq:47}
    \mu  = 2d h>0.
  \end{equation}
  Therefore, by \eqref{eq:33}, \eqref{eq:35}, \eqref{eq:43}, and \eqref{eq:47}, we have
  \begin{equation}
    x= w.
  \end{equation}
  Therefore, the optimization problem~\eqref{eq:32} is equivalent to
  \begin{equation}\label{eq:36}
    \begin{aligned}
      \max & \quad h(x,y,z,p,q)\\
      \mathrm{s.t.} & \quad \frac{2d}{d-2}x^2+y^2+z^2+p^2+q^2 = \frac{1}{2d}.
    \end{aligned}
  \end{equation}
  where
  \begin{equation}
    h = 2 \sqrt{(p^2+z^2) (q^2+(y-2 x)^2)}+2\left(p^2+p q+q^2+4 x^2+2 x z+y^2+y z+z^2\right).
  \end{equation}
  Further, if we replace $x$ by $u/\beta$ where $\beta = \sqrt{\frac{2d}{d-2}}$, then the optimization
  problem~\eqref{eq:36} is equivalent to
  \begin{equation}
    \max_{\norm{\kett v}^2=\frac{1}{2d}}  h(\kett{v}),
  \end{equation}
  where
  \begin{equation}
    h(\kett{v})= 2 \sqrt{(p^2+z^2) (q^2+(y-\frac{2 u}{\beta })^2)}
    +2(p^2+p q+q^2+\frac{4 u^2}{\beta ^2}+\frac{2 u z}{\beta }+y^2+y z+z^2).
  \end{equation}
  and
  \begin{equation}
    \kett{v} = (u,y,z,p,q).
  \end{equation}
  In the following steps, we consider this optimization problem with different cases:
  \begin{enumerate}[label = (\arabic*)]
    \item $pq=0$,
    \item  $p,q\neq 0$.
  \end{enumerate}
  Consider the case (1): $pq=0$.\\
  If $p=0$. By \eqref{eq:37}, we have
  \begin{equation}
    p = \frac{b_{12}+b_{21}}{2}=0.
  \end{equation}
  That is $b_{12}=-b_{21}$. Note that $B$ is of the shape as \eqref{eq:20}, and it is real. Hence $B$ is
  anti-symmetric, and thus normal.

  If $q=0$. By \eqref{eq:37}, we have
  \begin{equation}
    p = \frac{b_{12}-b_{21}}{2}=0.
  \end{equation}
  That is $b_{12}=b_{21}$. $B$ is then symmetric and thus normal.

  In either case, $A$ and $B$ are both normal and $f$ still represents $\tar$ despite the change of variables.  By
  Theorem~\ref{lem:normal}, it holds
  \begin{equation} \tar\leqslant \frac{3d-4}{d^2}. \end{equation}

  Next, consider the case (2): $p,q\neq 0 $. \\
  Here we apply an inequality
  \begin{equation}
    p^2+ q^2 + pq\leqslant \frac{3}{2}(p^2+q^2)
  \end{equation}
  to $h(\kett{v})$. That is
  \begin{equation}
    h(\kett{v}) \leqslant g(\kett{v}) = 2 (\sqrt\varDelta+\frac{3}{2}(p^2+q^2)+\frac{4 u^2}{\beta ^2}+\frac{2 u z}{\beta }+y^2+y z+z^2),
  \end{equation}
  where
  \begin{equation}\label{eq:48}
    \varDelta=  {(p^2+z^2) (q^2+(y-\frac{2 u}{\beta })^2)}.
  \end{equation}
  Consider the KKT condition of the optimization problem:
  \begin{equation}\label{eq:51}
    \max_{ \norm{\kett{v}}^2=\frac{1}{2d}}g(\kett{v}).
  \end{equation}
  We have
  \begin{eqnarray}
    2 \left( \frac{q^2+(y-\frac{2 u}{\beta })^2}{\sqrt\varDelta} + 3 - \mu \right)p &=0,  \\
    2\left( \frac{p^2+z^2}{\sqrt\varDelta} + 3- \mu\right)q& = 0.
  \end{eqnarray}
  Since $p,q\neq 0$, and $\varDelta>0$, the following condition is necessary for $g(\kett{v})$ reaching the maximal
  value:
  \begin{equation}\label{eq:50}
    p^2 + z^2 = q^2+(y-\frac{2 u}{\beta })^2.
  \end{equation}
  Assume \eqref{eq:50} holds. Therefore, we have
  \begin{equation}
    \sqrt{\varDelta} = \frac{1}{2}(  p^2 + z^2 + q^2+(y-\frac{2 u}{\beta })^2).
  \end{equation}
  Forward,
  \begin{equation}
    g(\kett{v}) = 4p^2 + 4q^2 +  z^2+ (y - \frac{2u}{\beta})^2 + 2( \frac{4 u^2}{\beta ^2}+\frac{2 u z}{\beta }+y^2+y z+z^2).
  \end{equation}
  Again, we can apply Lemma~\ref{lem:y:square} to simplify the optimization problem~\eqref{eq:51}.  Let
  \begin{equation}
    \kett{x }   = (u,y,z)^\intercal,\quad
    \kett{y}  = (p,q)^\intercal,\quad
    \kett{c}     = (0,0,0)^\intercal,
  \end{equation}
  and
  \begin{equation}
    f(\kett{x}) =  x_3^2+ (x_2 - \frac{2x_1}{\beta})^2 + 2( \frac{4 x_1^2}{\beta ^2}+\frac{2 x_1 x_3}{\beta }+x_2^2+x_2 x_3+x_3^2).
  \end{equation}
  Moreover, let
  \begin{equation}
    \tau_i   = \omega_i = 1,\quad
    \xi_i  = 4,\quad
    r  = \frac{1}{2d}.
  \end{equation}
  Hence, $\eta = \max\{\frac{\xi_i}{\omega_i}\}=4$.

  To maximize $g(\kett{v})$ is equivalent the solve the following optimization problem:
  \begin{equation}
    \begin{array}{rl}
      \max & \quad f(\kett{x}) +  \sum_{i=1}^2\xi_iy_i^2\\
      \text{ s.t. } & \quad
                      \begin{cases}
                        \langle \kett{c},\kett{x} \rangle & = 0,\\
                        \sum_{i=1}^3\tau_i x_i^2 +\sum_{i=1}^2\omega_i y_i^2 &= r.
                      \end{cases}

    \end{array}
  \end{equation}
  Therefore, by Lemma~\ref{lem:y:square}, the maximal value of $g$ is achieved only when $\kett{y}=0$ (i.e., $p=q=0$) or
  \begin{equation}
    \max g=\frac{4}{2d}\leqslant \frac{3d-4}{d^2}.
  \end{equation}
  For the former case, i.e., $p=q=0$, we have the result that $g= h=\tar$, where
  \begin{equation}
    \max g=  \max h \leqslant \frac{3d-4}{d^2}
  \end{equation}
  is proved in case (1).

  The latter case leads directly to
  \begin{equation}
    \max \tar = \max h \leqslant \max g \leqslant \frac{3d-4}{d^2} .
  \end{equation}
  This completes our proof.
  \end{proof}

If $\mathrm{dim}(B_1)=1$ then Lemmas \ref{lem:c1} implies that $B$ is diagonal and so it is normal. Consequently, by
Theorem \ref{lem:normal}, we have $\tar \leqslant \frac{3d-4}{d^2}$. This together with Lemmas \ref{lem:c1},
\ref{lem:c2} and \ref{lem:c3} gives the following result.

\begin{lemma}\label{lem:c4}
  Suppose $X = A\otimes\I+\I \otimes B\;(d\geqslant 4)$ with $A, B\in \bbR^{d\times d}$,
  $\trace{A}=\trace{B}=0,\break\; \norm{A}_F^2+\norm{B}_F^2 = \frac{1}{d}$, $A=\mathrm{diag}(a_1,a_2,\ldots,a_d)$ and
  $B\in \cP$.  If
  \begin{equation} \sigma_1(X) = \sigma_1(a_1\I + B_1), \quad \quad\sigma_2(X) = \sigma_1(a_2\I + B_1), \end{equation} then
  \begin{equation} \tar \leqslant \frac{3d-4}{d^2}.  \end{equation}
\end{lemma}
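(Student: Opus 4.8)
The plan is to reduce a general $B\in\cP$ to the rigidly structured matrices handled by Lemmas~\ref{lem:c1}--\ref{lem:c3}, and then close the bound with those lemmas together with Theorem~\ref{lem:normal}. Since $\tar$ is a positive, degree-two homogeneous function and the feasible configurations (with the fixed block sizes of $B\in\cP$) form a bounded set, a maximizer exists, so it suffices to bound $\tar$ at an arbitrary one. I would first dispose of the case $\mathrm{dim}(B_1)=1$. Lemma~\ref{lem:c1} shows that at such a maximizer every remaining block $B_i$, $i\geqslant2$, is diagonal, and since $B_1$ is itself $1\times1$ the whole matrix $B$ is diagonal, hence normal; $A$ is diagonal and therefore normal as well. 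Corollary~\ref{lem:normal2} then yields $\tar\leqslant\frac{3d-4}{d}(\norm{A}_F^2+\norm{B}_F^2)=\frac{3d-4}{d^2}$.

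The remaining case is $\mathrm{dim}(B_1)=2$, where I would chain the three reduction lemmas in sequence. Applying Lemma~\ref{lem:c1} first forces $B_i$ diagonal for all $i\geqslant2$, so $B=B_1\oplus\mathrm{diag}(b_{33},\ldots,b_{dd})$. Applying Lemma~\ref{lem:c2} next forces $a_i=a_3$ and $b_{ii}=b_{33}$ for every $i\geqslant3$; that is, $A$ and $B$ take exactly the normalized forms~\eqref{eq:19} and~\eqref{eq:20}. At this point the configuration falls under the hypotheses of Lemma~\ref{lem:c3}, whose conclusion is precisely $\tar\leqslant\frac{3d-4}{d^2}$. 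Combining the two cases proves the lemma.

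The main obstacle is not fresh computation but the logical glue: Lemmas~\ref{lem:c1} and~\ref{lem:c2} are phrased as \emph{necessary} conditions at a maximizer, so I must argue that bounding $\tar$ on the restricted configuration space (diagonal tails, equalized trailing entries) really bounds the supremum. This is exactly what the attainment of the maximum together with the ``only when'' form of those lemmas guarantees: any maximizer already satisfies the structural constraints, and Lemma~\ref{lem:c3} controls the value there. The genuinely delicate analysis---the reduction of $\tar$ to a rational function of $\tan\theta$, the KKT argument forcing $x=w$, and the split between $pq=0$ and $p,q\neq0$ (where the symmetric and antisymmetric sub-cases again reduce to the normal matrices covered by Theorem~\ref{lem:normal})---is already encapsulated inside Lemma~\ref{lem:c3}, which I may assume.
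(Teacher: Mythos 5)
Your proposal matches the paper's own argument: the paper likewise splits on $\mathrm{dim}(B_1)$, notes that Lemma~\ref{lem:c1} forces $B$ diagonal (hence normal) when $\mathrm{dim}(B_1)=1$ so that Theorem~\ref{lem:normal} applies, and otherwise chains Lemmas~\ref{lem:c1}, \ref{lem:c2} and \ref{lem:c3}. Your added remark about existence of a maximizer and the ``only when'' logic is a reasonable explicit justification of a step the paper leaves implicit, but it is not a different route.
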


Now it is straightforward to have Lemma~\ref{lem:22B} by combining Lemmas~\ref{lem:c0} and \ref{lem:c4}.


\begin{IEEEbiography}[{\includegraphics[width=1in,height=1.25in,clip,keepaspectratio]{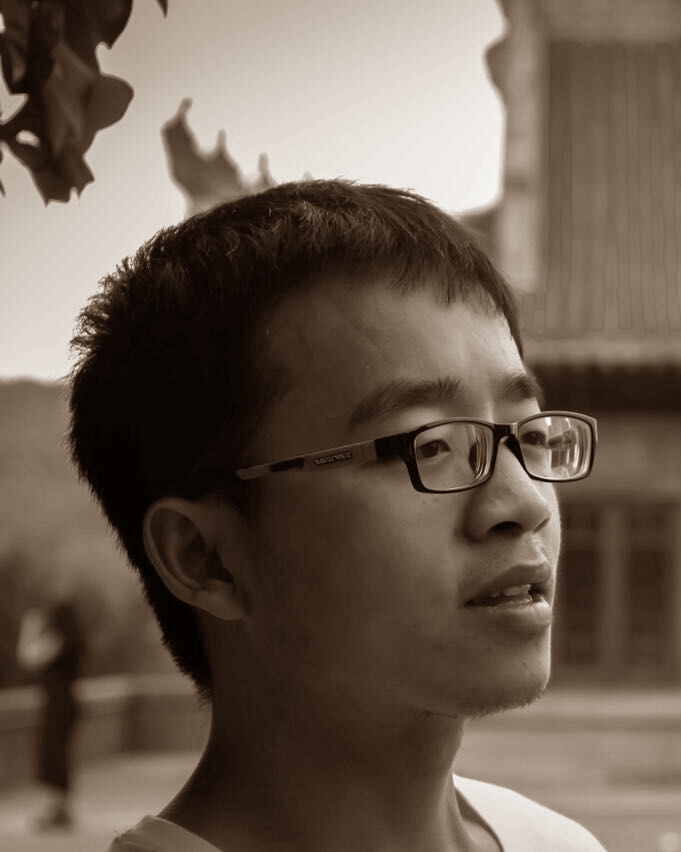}}]{Lilong Qian}
  is currently a Ph.D. candidate in the Department of Mathematics, University of Singapore.  He received the B.S.  degree in
  mathematics from Wuhan
  University, Wuhan, China,  in July of 2015.
  His research interests include the quantum information theory, numerical linear algebra,
  and optimization theory.
\end{IEEEbiography}
\begin{IEEEbiography}[{\includegraphics[width=1in,height=1.25in,clip,keepaspectratio]{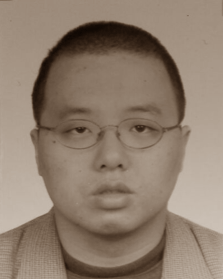}}]{Lin Chen}
  received the Ph.D. degree from the Department of Physics, Zhejiang University,
Hangzhou, China in 2008. He is currently an associate professor in the department of mathematics,
Beihang University, Beijing. His research interests include quantum information,
entanglement theory, mutually unbiased basis, tensor rank, and matrix theory.
\end{IEEEbiography}
\begin{IEEEbiography}[{\includegraphics[width=1in,height=1.25in,clip,keepaspectratio]{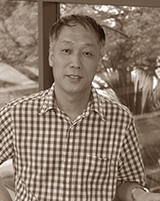}}]{Delin Chu}
received the PhD degree from the Department of Applied Mathematics, Tsinghua University, Beijing, China, in 1991. He is currently with the Department of Mathematics at National University of Singapore. He is currently an associate editor of Automatica. His research interests include data mining, numerical linear algebra, scientific computing, numerical analysis, and matrix theory and computations. 
\end{IEEEbiography}

\begin{IEEEbiography}[{\includegraphics[width=1in,height=1.25in,clip,keepaspectratio]{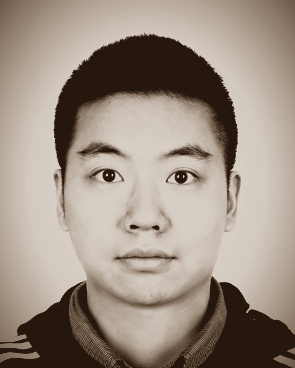}}]{Yi Shen}
is currently a Ph.D. candidate in the department of mathematics, Beihang University, Beijing, China. His research interests include the quantum information, entanglement theory, and matrix theory.
\end{IEEEbiography}

\end{document}